\documentclass{article}
\usepackage{graphicx} % Required for inserting images
\usepackage{xcolor}
\usepackage{amsmath}
\usepackage{amssymb}
\usepackage{amsthm}
\usepackage{cleveref}
\usepackage{a4wide}

% Theorems, Lemmas, Definitions, ...
\newtheorem{theorem}[]{Theorem}
\newtheorem{lemma}[theorem]{Lemma}

% Math Commands
\newcommand{\R}{\mathbb{R}}

\newcommand{\C}{\mathcal{C}}

\newcommand{\OPT}{\mathrm{OPT}}

% Commands for naming all the LPs
\newcommand{\CLP}{\mathrm{CLP}}
\newcommand{\SEP}{\mathrm{SEP}}
\newcommand{\LINSEP}{\mathrm{LINSEP}}
\newcommand{\EX}{\mathrm{X}}
\newcommand{\DW}{\mathrm{CLP}^{\mathrm{DW}}}
\newcommand{\BLP}{\mathrm{B}}

% Boxes for the LPs
\usepackage{tcolorbox}
\newtcolorbox{LPBox}[1][]{
colframe=black!15,
colback=white,
coltitle=black,
before upper=\setlength{\parindent}{0em}\everypar{{\setbox0\lastbox}\everypar{}},
title    = {#1}
}

% Commands for notes: use \noteLR, \noteSM, \noteEB
\setlength {\marginparwidth }{2cm} 
\usepackage[colorinlistoftodos]{todonotes}
% Lars Rohwedder
 
% Sarah Morell
\newcommand{\noteSM}[1]{\todo[color = orange, inline]{\footnotesize SM: #1}{}}
% Etienne Bamas

\renewcommand{\epsilon}{\varepsilon}

\title{The Submodular Santa Claus Problem}
\author{\'Etienne Bamas\footnote{Post-Doctoral Fellow, ETH AI Center, Switzerland. etienne.bamas@inf.ethz.ch} \and Sarah Morell\footnote{TU Berlin, Germany. morell@math.tu-berlin.de. Funded by the Deutsche Forschungsgemeinschaft (DFG, German Research Foundation) under Germany´s Excellence Strategy – The Berlin Mathematics Research Center MATH+ (EXC-2046/1, project ID: 390685689).} \and Lars Rohwedder\footnote{Maastricht University, Netherlands. l.rohwedder@maastrichtuniversity.nl. Supported by Dutch Research Council
(NWO) project “The Twilight Zone of Efficiency: Optimality of Quasi-Polynomial Time Algorithms” [grant number
OCEN.W.21.268].}}
\date{}

\begin{document}

\maketitle

\begin{abstract}
    We consider the problem of allocating indivisible resources to players
    so as to maximize the minimum total value any player receives.
    This problem is sometimes dubbed the Santa Claus problem
    and its different variants have been subject to extensive research
    towards approximation algorithms over the past two decades.

    In the case where each player
    has a potentially different additive valuation function,
    Chakrabarty, Chuzhoy, and Khanna [FOCS'09] gave an $O(n^{\epsilon})$-approximation algorithm
    with polynomial running time for any constant $\epsilon > 0$ and a polylogarithmic approximation algorithm in quasi-polynomial time.
    We show that the same can be achieved for monotone submodular valuation
    functions, improving over the previously best algorithm due
    to Goemans, Harvey, Iwata, and Mirrokni [SODA'09], which has
    an approximation ratio of more than $\sqrt{n}$.

    Our result builds up on a sophisticated LP relaxation,
    which has a recursive block structure that allows us to solve it
    despite having exponentially many variables and constraints.
\end{abstract}

\section{Introduction}
The egalitarian welfare is
the value of the least happy player. Other
natural welfare functions include utilitarian welfare, the sum of values, and Nash social welfare, the product of values.
Egalitarian welfare can
be seen as the trade-off that emphasizes most extremely on fairness.
In this paper we study the problem of allocating indivisible resources
with the goal of maximizing egalitarian welfare, which
is sometimes called the Santa Claus problem or max-min fair allocation.

\paragraph{Problem setting.}
Given resources $R$ and players $P$, we
wish to find an allocation $\sigma: R\rightarrow P$ such
that
\begin{equation*}
    \min_{p\in P} \ f_p(\{r \in R : \sigma(r) = p\})
\end{equation*}
is maximized. Here $f_p: 2^R \rightarrow \R_{\geq 0}$ with $p \in P$ is the function that specifies the
value that player $p$ receives from a set of resources.
When considering polynomial-time approximation algorithms,
one typically assumes the function can be accessed by value queries,
i.e., an oracle that returns $f_p(A)$ for some given $p$ and $A$ in polynomial time.
Strong assumptions on the functions are necessary in order to
hope for any meaningful algorithmic guarantees.
At the same time, the functions
should still remain expressive enough to capture realistic scenarios.

A natural assumption is that each $f_p$ is non-negative and additive (a linear function), which means that there are values $v_{p,r} \in\mathbb R_{\ge 0}$
for each $p\in P, r\in R$ and $f_p(A) = \sum_{r\in A} v_{p, r}$ for
each $A\subseteq R$.
Already for this class of functions, the study of approximation
algorithms for the Santa Claus problem has proven to be extremely
challenging. 
The best algorithm due to Chakrabarty, Chuzhoy, and Khanna~\cite{DBLP:conf/focs/ChakrabartyCK09}
achieves an $n^{\epsilon}$-approximation
in polynomial time, for every fixed constant $\epsilon > 0$,
or a $(\log^{O(1)}n)$-approximation in quasi-polynomial time,
more precisely in time $n^{O(\log n/\log \log n)}$.
The best lower bound on the approximation ratio
of a polynomial-time algorithm is $2$ (assuming \texttt{P}$\neq$\texttt{NP}).
Closing the gap remains a big open question in approximation algorithms
and is connected to the similar task of minimizing
makespan on unrelated parallel machines, see~\cite{bamas2024santa}.

An important generalization of additive functions is the class of monotone submodular functions. A function $f$ is submodular if it satisfies
the \emph{diminishing marginal returns property}, which means that
\[f(A \cup \{r\}) - f(A) \ge f(B \cup \{r\}) - f(B) \text{ for all } A\subseteq B \text{ and } r\notin B.\] 
Monotonicity states that $f(A) \le f(B)$ for $A\subseteq B$.
As is standard in approximation algorithms, we also assume that
any monotone submodular function is normalized such that
$f(\emptyset) = 0$.
Staying the metaphor of Santa Claus, an example of the diminishing marginal returns
property is that the value of an apple is higher for a child, when the child does not have a donut than when it does.

For utilitarian welfare or Nash social welfare, the class of monotone submodular functions still admits good algorithms~\cite{li2022constant, vondrak2008optimal}, which raises the
question whether the restriction to additive functions is necessary
in egalitarian welfare.

Even more general than monotone submodular functions are
subadditive functions, which only need to
satisfy $f(A \cup B) \le f(A) + f(B)$
for all $A, B$. 
Next to additive functions, submodular and subadditive functions
are arguably the most fundamental classes of valuation functions.
Both submodular and subadditive functions
are also briefly mentioned by Chakrabarty, Chuzhoy, and Khanna~\cite{DBLP:conf/focs/ChakrabartyCK09}
who emphasize that at the time the best lower bound for both in
the Santa Claus problem was also only $2$. Since then however, 
Barman, Bhaskar, Krishna, and Sundaram~\cite{barman_et_al:LIPIcs.ESA.2020.11}
have proven that for
XOS functions, a class of functions that lies between monotone submodular and subadditive, any $O(n^{1-\epsilon})$-approximation algorithm
requires exponentially many value queries, which therefore also forms a lower bound on subadditive functions.
This lower bound is in the value query model. In literature there are
more powerful query models, for example demand queries, see e.g.~\cite{barman_et_al:LIPIcs.ESA.2020.11, feige2009maximizing},
which we do not detail here.

Monotone submodular functions are not subject to the mentioned
lower bound and a sublinear approximation rate is
indeed possible: Goemans, Harvey, Iwata, and Mirrokni~\cite{goemans2009approximating} gave a reduction to the additive case,
which loses a factor of $O(\sqrt{n}\log n)$ and thus leads
to an $O(n^{1/2 + \epsilon})$-approximation algorithm when combined
with~\cite{DBLP:conf/focs/ChakrabartyCK09}.

\paragraph{Contribution and outline.} 
In this paper we achieve a direct generalization from the
additive to the monotone submodular case, 
without the reduction of Goemans, Harvey, Iwata, and Mirrokni~\cite{goemans2009approximating}, and match the state-of-the-art from the additive case.

\begin{theorem}
    \label{thm:main}
    For the Submodular Santa Claus problem there is a polylogarithmic approximation algorithm with running time $n^{O(\log n / \log\log n)}$ and
    an $n^{\epsilon}$-approximation algorithm with running time $n^{O(1/\epsilon)}$ for any constant $\epsilon > 0$.
\end{theorem}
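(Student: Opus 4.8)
The plan is to transfer the blueprint of Chakrabarty, Chuzhoy, and Khanna for additive valuations to the submodular setting, replacing each ingredient by one that survives the diminishing-returns property. As usual we first guess $\OPT = T$ by binary search and rescale so that $T = 1$; the goal is then an allocation in which every player $p$ receives a set $A_p$ with $f_p(A_p) \ge 1/\alpha$, where $\alpha = (\log n)^{O(1)}$ in the quasi-polynomial regime and $\alpha = n^{\epsilon}$ in the polynomial one. \textbf{Step 1 (structured instances).} For a player $p$ call a resource $r$ \emph{big} if $f_p(\{r\}) \ge 1/\beta$ and \emph{small} otherwise, where $\beta$ is the key parameter, set to $(\log n)^{\Theta(1)}$ or to $n^{\Theta(\epsilon)}$. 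In any allocation of value $1$ a player is satisfied either by a single big resource or by a bundle of small resources, and using monotonicity and submodularity -- in particular to control how the marginals of small resources interact inside a bundle -- I would reduce, at a multiplicative loss of $O(\beta)$, to a \emph{structured instance}: some players must be matched to dedicated big resources through a bipartite/hypergraph matching, the rest must be served by bundles of small resources, and the small resources come with a laminar hierarchy of \emph{scales} (resources whose marginal value lies in $[\beta^{-i-1},\beta^{-i})$) of depth $d = O(\log n / \log\beta)$, i.e.\ $O(\log n/\log\log n)$, respectively $O(1/\epsilon)$.

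\textbf{Step 2 (a recursive configuration LP).} The natural relaxation $\CLP$ has a variable $x_{p,C}$ for every player $p$ and every set $C$ with $f_p(C) \ge 1$, covering constraints $\sum_C x_{p,C} \ge 1$ for players and packing constraints $\sum_p \sum_{C \ni r} x_{p,C} \le 1$ for resources. For submodular $f_p$ this program is both too weak to round well and hard to even separate, so I would instead use a Dantzig--Wolfe-type reformulation $\DW$ that mirrors the scale hierarchy: a configuration at level $k$ is written as a convex combination of configurations of its child clusters at level $k-1$, bottoming out at individual resources. The block structure is what makes it tractable: approximately solving the level-$k$ program reduces to a separation oracle that itself approximately solves level-$(k-1)$ programs, so the whole LP can be solved within a $(1+o(1))$ factor -- e.g.\ by the ellipsoid method with this recursive oracle, or by multiplicative weights -- in time $n^{O(d)}$, that is $n^{O(\log n/\log\log n)}$ or $n^{O(1/\epsilon)}$. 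One checks that an integral allocation of value $1$ in the structured instance induces a feasible point, so $\DW$ is a valid relaxation.

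\textbf{Step 3 (rounding).} I would round the fractional hierarchical solution from the leaves upwards. At the leaves the bundles of small resources are handled directly; to combine the (already rounded) solutions of the child clusters of a node I would run a Haxell-type alternating-tree / iterative local-search argument, as in the additive case, which consolidates partial assignments while losing only a constant factor, provided the fractional solution is spread out enough -- and this is precisely where submodularity re-enters, to certify that value which the LP distributes fractionally over several child clusters can be concentrated in one of them with only $O(1)$ loss. Together with the $O(\beta)$ loss of Step 1 this yields the claimed approximation ratio, while the running time is dominated by solving the LP in Step 2.

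\textbf{Main obstacle.} The heart of the argument is Steps 2 and 3 jointly: designing a relaxation that is at once a valid upper bound for the submodular max--min objective, solvable despite its exponential number of variables and constraints thanks to the recursive block structure, and roundable with only an $O(1)$ additional loss. Controlling the integrality gap of this recursive LP -- showing that a fractional hierarchical assignment can be turned into an integral allocation level by level -- is where the diminishing-returns property must be used most delicately, and is the step I expect to be the crux.
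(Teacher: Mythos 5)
Your outline reproduces the broad shape of the argument (a three-step plan, a recursive block-structured LP solved via Dantzig--Wolfe with a recursive separation oracle), but two of its load-bearing steps are not the ones that make the proof work, and one of them would fail. First, the recursion depth in the actual argument does not come from a laminar hierarchy of value scales of small resources. The levels come from an \emph{augmentation problem}: one maintains a partial assignment, and the levels encode chains of reassignments in which a complex player who gives up its single private resource must be compensated by a flow of small resources in the next level; the depth bound $h$ is obtained by a counting argument (a reassignment tree of depth $h-1$ with out-degree at least $\gamma/(2\alpha)$ would force more than $\beta n^2$ flow paths, contradicting the congestion bound), and the $n^{\epsilon}$ vs.\ polylog trade-off is the trade-off between $h$ and the approximation ratio $\gamma$ in this reduction, not between scale classes. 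A scale-based hierarchy is also intrinsically awkward for submodular $f_p$, since the ``scale'' of a resource is a marginal value that depends on the rest of the bundle. Moreover, the LP must be formulated and shown strong for the augmentation problem, not for the original instance; your validity check (``an integral allocation of value $1$ induces a feasible point'') addresses the wrong relaxation, and it is exactly this indirection through augmentation that lets one escape the large integrality gap of the natural configuration LP.

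Second, the rounding you propose --- a Haxell-type alternating-tree/local-search consolidation losing only $O(1)$ --- is the machinery of the \emph{restricted assignment} line of work and does not apply here; those techniques are tied to the configuration LP, whose integrality gap is large outside restricted assignment, and no constant-loss consolidation of this kind is known (or claimed) in this setting. The actual rounding is recursive randomized rounding of the multi-level LP, where each sink samples a configuration and Chernoff bounds control the congestion, losing a $\mathrm{polylog}(n)$ factor in congestion (not $O(1)$), which is then absorbed by the reduction. Finally, the main technical novelty on the submodular side is missing from your plan: inside the Dantzig--Wolfe pricing/separation problem one must find a flow configuration of large submodular value subject to cost and recursive feasibility constraints, which is done by optimizing the multilinear extension with a non-standard continuous-greedy variant over a pair of nested polytopes and then rounding the continuous point; saying that the level-$k$ oracle ``approximately solves level-$(k-1)$ programs'' skips precisely this step, which is where submodularity is actually handled.
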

Similar to Chakrabarty, Chuzhoy, and Khanna~\cite{DBLP:conf/focs/ChakrabartyCK09}, our techniques can be divided into three steps:
\begin{enumerate}
    \item Reducing to a carefully designed layered flow problem, the \emph{augmentation problem}.
    \item Formulating and solving a strong linear programming relaxation of the augmentation problem.
    \item Obtaining integral solution of the augmentation problem via randomized rounding.
\end{enumerate}
The most challenging part to generalize is the second step. In order
to include submodular valuation functions in the linear programming
relaxation, we use the standard concept of configuration variables,
i.e., a variable for each set of resources that has a sufficiently large value.
The natural way of writing a configuration LP, however, is not 
sufficient even for the additive case, see e.g.~\cite{bansal2006santa}.
The formulation used by Chakrabarty, Chuzhoy, and Khanna for the additive case is highly non-trivial. It is closely related to using a constant or logarithmic number of rounds of the Sherali-Adams hierarchy on a naive formulation, see~\cite{bateni2009maxmin}. Notably, their linear program is strong
for the augmentation problem,
to which they give a non-trivial reduction (step~1),
but not for the original problem.

Combining their approach with configuration variables as above
leads to a linear program that has both an exponential number of variables
and constraints, an issue that does not occur in the additive case. Typically, one needs to have either a polynomial number
of variables or constraints in order to even 
hope to solve a linear program efficiently. Otherwise,
already the encoding of a solution might require exponential space.
Exceptions are very rare, see e.g.~\cite{DBLP:journals/corr/abs-2211-08381}.
The distinct feature of our linear programming relaxation is
that it has a recursive block structure. Specifically,
the matrix consists of an exponentially large number of blocks along the diagonal which are linked by a polynomial number of constraints.
The blocks on the diagonal exhibit the same structure recursively up
to a recursion depth of $h$,
see Figure~\ref{fig:block}.
It can be shown that if in addition 
the feasible region of each block (ignoring the linking constraints)
forms a polyhedral cone, then
indeed such matrices always have solutions
of support $n^{O(h)}$ (if any exist).
This is then polynomial for constant recursion depth $h$.
We solve our specific formulation using ideas from the
Dantzig-Wolfe decomposition~\cite{dantzig1960decomposition},
where the pricing problem requires
a combination of recursively solving a linear program with
lower recursion depth and
the search for a configuration of high submodular function value.
For the latter we use the multilinear extension and continuous Greedy
in a non-standard variant.
This way we arrive at a sufficiently good approximation of the
linear program.

\begin{figure}
    \centering
    \begin{tikzpicture}[scale=0.8]
        \draw[fill=lightgray] (0, 10.5) rectangle (4.5, 10);
        \draw[fill=lightgray] (0, 10) rectangle (1.5, 9.5);
        \draw[fill=lightgray] (0, 9.5) rectangle (0.5, 9);
        \draw[fill=lightgray] (0.5, 9) rectangle (1, 8.5);
        \node at (1.25, 8.35) {$\ddots$};
        \draw[fill=lightgray] (1.5, 8) rectangle (3, 7.5);
        \draw[fill=lightgray] (1.5, 7.5) rectangle (2, 7);
        \draw[fill=lightgray] (2, 7) rectangle (2.5, 6.5);
        \node at (2.75, 6.35) {$\ddots$};
        \node at (3.75, 5.85) {$\ddots$};
        \draw[thick] (-0.2, 10.5) to[bend right=10] (-0.2, 5.5);
        \draw[thick] (4.7, 10.5) to[bend left=10] (4.7, 5.5);

        \draw[|-|] (-1, 10.5) to node[pos=0.5, left] {poly} (-1, 10);
        \draw[|-|] (5.5, 10.5) to node[pos=0.5, right] {exp} (5.5, 5.5);
        \draw[|-|] (0, 5) to node[pos=0.5, below] {exp} (4.5, 5);
    \end{tikzpicture}
    \caption{Block structure of non-zero entries in constraint matrix of linear programming relaxation}
    \label{fig:block}
\end{figure}
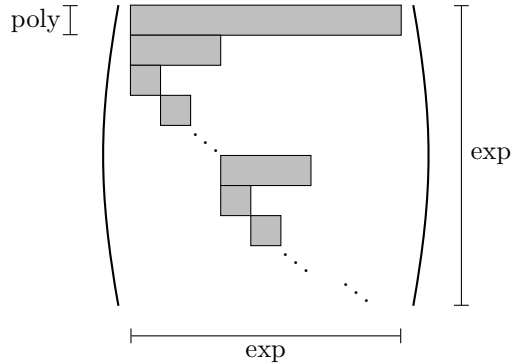

As an additional contribution, we significantly simplify the reduction step of Chakrabarty, Chuzhoy, and Khanna which used very intricate techniques and non-trivial graph theory results. In contrast, our reduction only uses standard flow arguments.
\paragraph{Further related work.} If all functions are identical and additive, then
there exists a PTAS for the Santa Claus problem~\cite{woeginger1997polynomial, epstein1999approximation}.
If they are identical and monotone submodular,
then a Greedy type of algorithm still achieves a constant approximation~\cite{krause2009simultaneous}.

A substantially harder variant, which has received a lot of attention
is the so-called restricted assignment case.
Here, the valuation functions are identical ($f_p = f_{p'}$
for $p, p'\in P$), but each player $p$ can only receive
resources from a specific set $R(p) \subseteq R$.
This can equivalently be phrased as $f_p(A) = f(A \cap R(p))$
for some uniform function $f$, or in the additive case, that
$v_{p, r}\in \{0,v_r\}$ for some uniform value $v_r$ for each resource $r\in R$.
Most of this work focuses on the mentioned additive case,
leading to a constant factor approximation
for this case~\cite{bansal2006santa,feige2008allocations,annamalai2017combinatorial,davies2020tale,PolacekS2015,DBLP:journals/talg/AsadpourFS12}. For the submodular case, an $O(\log\log n)$-approximation algorithm is known~\cite{bamas2021submodular}.
These works heavily rely on the configuration LP, a linear
programming relaxation, which is known to have a high integrality
gap outside of the restricted assignment variant~\cite{bansal2006santa}.
Therefore these techniques have only limited impact towards the
goals of this paper.

Outside the restricted assignment problem, some progress
towards a constant approximation is due to Bamas and Rohwedder~\cite{bamas2023better}
who gave a $(\log^{O(1)}\log n)$-approximation algorithm in quasi-polynomial time
for the so-called max-min degree arborescence problem,
a special case of the additive variant, where the configuration
LP already has a high integrality gap.

The dual problem of minimizing the maximum instead of maximizing the minimum
function value is
usually motivated by machine scheduling, specifically makespan minimization.
Here, the additive case is well known to admit a constant approximation~\cite{DBLP:journals/mp/LenstraST90} and
the reduction by Goemans, Harvey, Iwata, and Mirrokni~\cite{goemans2009approximating} works in the same way, yielding a polynomial-time $O(\sqrt{n}\log n)$-approximation algorithm
for makespan minimization with monotone submodular load functions.
Interestingly,
this is known to be the best possible up to logarithmic factors
in the value query model, see~\cite{svitkina2011submodular},
and therefore behaves differently to the problem studied
in this paper.

\section{Algorithmic framework}
In this chapter, we introduce the augmentation problem as well as the linear programming relaxation for it.
Those are the pillars of our algorithm that connect the three steps outlined
in the introduction.
In \Cref{sec:Reduction} we then show how to reduce to the augmentation problem, in \Cref{sec:linearprogram} we explain how to solve its linear programming relaxation, and in \Cref{sec:rounding} we present the rounding algorithm for the
relaxation.

\subsection{The augmentation problem}
\label{subsec:intro-reduction}
As the name suggests, the augmentation problem is related to 
augmenting some partial solution of the Submodular Santa Claus problem to
a better solution. This can be seen as a much more involved variant of
finding augmenting paths to solve bipartite matching.
Similar to there, we will later invoke it several times in order to arrive
at the final solution for the Santa Claus problem.
We formulate the augmentation problem in purely graph theoretical terms.

An instance of the augmentation problem contains several levels.
We will start by introducing the structure within one level.

\paragraph*{One level of the augmentation problem.}
Let $G = (V, E)$ be a directed graph and let $S, T\subseteq V$ denote disjoint sets of sources and
sinks. Each source in $S$ has exactly one outgoing edge and no incoming edges.
Each sink in $T$ has only incoming edges.
Furthermore, for all $v\in T$ let $f_v : 2^{\delta(v)} \rightarrow \mathbb R_{\ge 0}$ be monotone, submodular functions. Here, $\delta(v)$ are the edges incident to $v$.

The solution for this level is a binary flow $g : E \rightarrow \{0, 1\}$ from
$S$ to $T$, i.e.,
flow conservation is satisfied on $V\setminus (S\cup T)$.
We write $E(g) = \{e\in E : g(e) > 0\}$ and $V(g) = \bigcup_{(u, v)\in E(g)} \{u, v\}$. Furthermore, in slight abuse of notation we write $V' \cap g = V' \cap V(g)$ for some $V'\subseteq V$ and $E' \cap g = E'\cap E(g)$ for some $E'\subseteq E$.
We say that sink $v\in T$ is $\alpha$-covered
by $g$ if $f_v(g \cap \delta(v)) \ge 1/\alpha$. We give an example in Figure \ref{fig:augmentation_one_level}.

\begin{figure}[ht]
    \centering
    \includegraphics{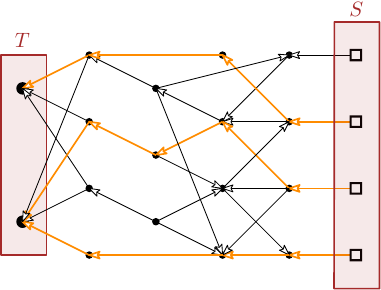}
    \caption{An instance of the one-level augmentation problem. The top sink on the left has valuation function equal to the total flow received, and the sink at the bottom has valuation function equal to the total flow received divided by $2$. The set of bold orange edges forms a feasible solution which covers all the sinks in $T$.}
    \label{fig:augmentation_one_level}
\end{figure}

\paragraph*{Augmentation problem.}
For $h \in \mathbb N$, an $h$-level instance of the augmentation problem consists
of levels $(G_i, S_i, T_i)$ with $G_i = (V_i, E_i)$ for $i=1,2,\dotsc,h$
with the structure as above and monotone submodular
functions $f_v : 2^{\delta(v)} \rightarrow \mathbb R_{\ge 0}$ for
$v\in T_1\cup T_2\cup \dotsc\cup T_h$. In addition, there are \emph{linking edges} $L_i \subseteq  T_{i+1}\times S_i$ for $i = 1,2,\dotsc, h-1$.
Each set $L_i$ forms a matching, i.e., the edges are disjoint.
For $U\subseteq S_i$ we write $L_i(U) = \{v\in T_{i+1} : (v, u)\in L_i \text{ for some } u\in U\}$.

A solution consists of a flow $g_i$ for each level $i$, as
described above.
The $h$ levels depend on each other in
that for the source $g \cap S_i$ is used
by the flow in level $i$, $g_{i+1}$  must cover $L_i(g \cap S_i)$.
In other words, if $s\in g\cap S_i$ and there is no edge $(u, s)\in L_i$ for any $u\in T_{i+1}$, i.e., $L_i(\{s\}) = \emptyset$, then there is no further requirement and
if indeed there exists such an edge then
we may informally think of the flow leaving source $s$ to arrive through the linking edge and indirectly from
$u$. Note, however, that $u$ may require an incoming
flow higher than the amount of flow leaving $s$.
The sinks of the first level $T_1$ and the sources of the last level $S_h$ have no dependencies with other levels.

To conclude the description of the problem, a solution
is feasible for some $T^*\subseteq T_1$ if
\begin{itemize}
    \item $g_1$ $\alpha$-covers each $v\in T^*$ and 
    \item for $i=1,2,\dotsc,h-1$, solution $g_{i+1}$ $\alpha$-covers each $v \in L_i (g_i\cap S_i)$.
\end{itemize}
We refer the reader to Figure \ref{fig:augmentation_multi_level} for an example.
In the remainder we denote by $n$ the total number of vertices in all $h$ graphs. This will be polynomial in the size
of the original instance. 

\begin{figure}[ht]
    \centering
    \includegraphics{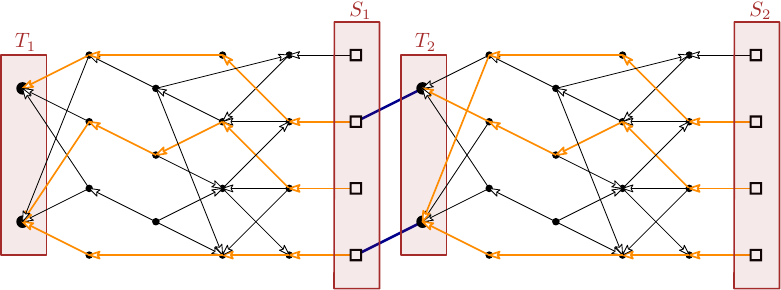}
    \caption{An instance of $2$-level augmentation problem. The two levels are copies of the one level of Figure \ref{fig:augmentation_one_level}, with the same valuation functions for the sinks in each level. The second and fourth source in $S_1$ have a linking edge with the first and second sinks in $T_2$. The set of bold orange edges forms a feasible solution which covers all of $T_1$.}
    \label{fig:augmentation_multi_level}
\end{figure}

\paragraph*{Congestion.}
A technically very useful notion is the following relaxation
of the problem:
we allow each $g_i(e)$ to be an integer number in $\{0,1,\dotsc,\beta\}$ instead of $\{0, 1\}$.
The rest of the definition remains the same.
We say that such a solution has congestion $\beta$.

We will reduce the Submodular Santa Claus problem
to the following gap problem:
for some $T^*\subseteq T_1$ either find a feasible solution $g_1, g_2, \dotsc,g_h$
with coverage $1/\alpha$ and congestion at most $\beta$
or determine that there is no such solution
with coverage $1$ and congestion $1$.
We call an algorithm that solves this problem
an \emph{$(\alpha,\beta)$-approximation algorithm}.
The lower the values of $\alpha$ and $\beta$ are, the
better the approximation rate for the Submodular Santa Claus problem.
The reduction to the augmentation problem follows a very similar
strategy to Chakrabarty, Chuzhoy, and Khanna~\cite{DBLP:conf/focs/ChakrabartyCK09}.
Formally, we prove in \cref{sec:Reduction} the following theorem.
\begin{theorem}\label{th:aug-main}
    Let $\mathcal A$ be an $(\alpha, \beta)$-approximation algorithm for the augmentation problem
    and let $h, \gamma\in\mathbb N$ with $\gamma \ge 1000 \alpha^3\beta^3h^4\log^2(n)$
    and $h\ge 1 + \log(\beta n^2) / \log(\gamma/(2\alpha))$.
    Then there is a $\gamma$-approximation algorithm for the Submodular Santa Claus problem that uses polynomially many calls to $\mathcal A$ on $h$-level instances and has polynomial time overhead.
\end{theorem}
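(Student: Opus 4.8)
The plan is to prove \cref{th:aug-main} by the classical iterative‑augmentation strategy, but with flow arguments only in place of the graph‑theoretic machinery of Chakrabarty, Chuzhoy, and Khanna. First I would reduce, by a standard guess‑and‑rescale of $\OPT$ over geometrically spaced thresholds, to the following task: given the promise that some assignment gives every player value at least $1$, output an assignment giving every player value at least $1/\gamma$ under its truncated valuation $\bar f_p:=\min\{f_p,1\}$ (still monotone submodular and normalized). I would then maintain, as an invariant across $O(\log n)$ phases, an integral \emph{partial solution}: a set $P^\star\subseteq P$ of satisfied players together with pairwise disjoint bundles $(B_p)_{p\in P^\star}$ with $\bar f_p(B_p)\ge 1/\gamma$, together with the stronger guarantee that freshly satisfied players hold value close to $1/\alpha$. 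Before the first phase, I would pull out the \emph{fat} resources -- those worth at least $1/(2\alpha)$ to some player -- and satisfy as many players as possible through a single bipartite matching that assigns each such player one fat resource; this loses nothing and guarantees that from then on \emph{every} resource is worth less than $1/(2\alpha)$ to every player, a property used crucially in the de‑congestion step via submodularity (removing one resource from a bundle lowers its value by less than $1/(2\alpha)$).

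In a phase, let $U$ be the current unsatisfied set. I would build an $h$‑level augmentation instance that encodes ``re‑route resources to satisfy all of $U$, allowing each player in $P^\star$ to give up resources provided it is itself re‑satisfied, and so on, through at most $h$ rounds of reassignment''. Concretely $T_1=U$ with sink functions $\bar f_p$; the sources $S_1$ are copies of the resources; the intra‑level gadget of $G_1$ (source $\to$ per‑resource vertex $\to$ player‑sinks) routes unit flow so that the set of used in‑edges at a sink $v_p$ is exactly the bundle delivered to $p$ and its $f_{v_p}$‑value equals $\bar f_p$ of that bundle; a source copying a resource currently owned by $q\in P^\star$ carries a linking edge in $L_1$ to a sink copy of $q$ in $T_2$, and levels $2,\dots,h$ repeat this for the players that may be displaced. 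The structural lemma to prove is that this instance admits a solution of coverage $1$ and congestion $1$ with $T^\star=U$; one obtains it by unfolding the optimal assignment on $U$, using the current bundles as the ``previous owners'' and recursively re‑satisfying displaced players by their optimal bundles. The key point -- and the reason for the hypothesis $h\ge 1+\log(\beta n^2)/\log(\gamma/(2\alpha))$ -- is that the cascade descends a level only when a displaced player's \emph{remaining} bundle can no longer meet the coverage threshold, and submodularity together with the fat‑resource preprocessing bounds how fast bundles can erode, so that the relevant quantity (bounded by $\beta n^2$) shrinks by a factor $\gamma/(2\alpha)>1$ per level and the unfolding closes off within $h$ levels, the residual ``leaks'' at level $h$ -- where sources carry no dependency -- being negligible. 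Hence a single call of the $(\alpha,\beta)$‑approximation algorithm $\mathcal A$ on this instance cannot return ``infeasible'' and instead yields a solution of coverage $1/\alpha$ and congestion at most $\beta$ that $\alpha$‑covers every player of $U$ and every displaced player at every level.

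It remains to convert this congested output back into a genuine partial solution and to account for the loss; this is where I expect the main obstacle, and where the factor $\gamma=\Theta(\alpha^3\beta^3 h^4\log^2 n)$ is spent. Decomposing the returned flow $g_i$ into unit paths gives, for each covered sink $v$, a bundle $B_v$ with $\bar f_v(B_v)\ge 1/\alpha$, but bundles overlap with multiplicity at most $\beta$ and the ownership chains across levels must be made consistent. I would de‑congest by assigning each contested resource to a single claimant, chosen within the conflict structure whose relevant degree is polynomially bounded in $n,\beta,h$, and then argue -- using that every single resource is worth less than $1/(2\alpha)$ to any player, so discarding a controlled fraction of $B_v$ costs a controlled fraction of its value -- that each $v$ retains a sub‑bundle $B_v'\subseteq B_v$ with $\bar f_v(B_v')\ge 1/(\alpha\cdot\mathrm{poly}(\beta,h))$. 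Displaced players are re‑satisfied by these sub‑bundles at the reduced value, so the threshold degrades by a bounded per‑level, per‑phase factor; the players whose sub‑bundle drops below $1/\gamma$ are returned to the unsatisfied set, and one checks that at most a constant fraction of $U$ is returned, so $|U|$ at least halves each phase and $O(\log n)$ phases suffice. Balancing the per‑phase and per‑level losses against the depth forced by the structural lemma yields precisely the stated coupled conditions $\gamma\ge 1000\,\alpha^3\beta^3 h^4\log^2 n$ and $h\ge 1+\log(\beta n^2)/\log(\gamma/(2\alpha))$; after the last phase $P^\star=P$, giving a $\gamma$‑approximation that makes polynomially many calls to $\mathcal A$ on $h$‑level instances with polynomial overhead, as claimed.
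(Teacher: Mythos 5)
Your high-level plan (iterate an $h$-level augmentation instance built from the current partial assignment, get feasibility from the optimum, call $\mathcal A$, de-congest, recurse) matches the paper's strategy, but two of your steps have genuine gaps. First, the preprocessing that ``pulls out fat resources'' via a single bipartite matching and claims this ``loses nothing'' is unsound: whether a player should be satisfied by one big resource or by many small ones cannot be fixed upfront, because the optimum may need to \emph{re-route} big resources along chains (player $p$ gives up its big resource to $p'$, takes another from $p''$, etc.). This reassignment of big resources is exactly the core difficulty; the paper keeps both options open by splitting each player into a basic and a complex copy sharing an artificial resource, and encodes the chains \emph{inside} each level. Relatedly, your linking structure is ill-formed: in the augmentation problem each source has out-degree one and each $L_i$ is a matching, yet you attach a linking edge from \emph{every} resource currently owned by a satisfied player $q$ to ``a sink copy of $q$'', which either violates the matching property or forces multiple sink copies of $q$ with the wrong semantics. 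The paper avoids this by first making complex players release all small resources (the assignment $\overline\sigma$), so that across levels only a single private resource per player can be given up. You also invoke the hypothesis on $h$ to argue feasibility of your instance; in the paper feasibility with coverage and congestion $1$ holds for \emph{every} $h$ (one simply replicates the OPT-mimicking flow in each level, \Cref{lem:existence_of_flow}), and the bound on $h$ is needed instead to restructure an \emph{approximate} solution so that the last level uses no dependent sources (\Cref{lem:existence_of_structured_flow}).

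Second, your quantitative accounting does not work as stated. If satisfied players' bundle values degrade by a multiplicative $\mathrm{poly}(\beta,h)$ (or even constant) factor in each of $\Theta(\log n)$ phases, the surviving value is $n^{-\Theta(1)}$, which is incompatible with $\gamma=\mathrm{polylog}(n)$; the paper's invariant (\Cref{lem:aug}) degrades only harmonically, like $1/(8\alpha\beta h^2 k)$ after $k$ iterations, because disturbed players are re-covered from the flow and retained players lose only a $1/((k+1)\beta h)$ fraction per iteration. Moreover, your claims that after de-congestion each covered sink retains a sub-bundle of value $1/(\alpha\cdot\mathrm{poly}(\beta,h))$, and that ``at most a constant fraction of $U$ is returned, so $|U|$ halves each phase'', are asserted rather than proved, and the latter is quantitatively the wrong shape: the achievable progress is a $\Theta(1/(h\alpha\beta))$ fraction of the uncovered basic players per call, not a constant fraction. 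The actual content of the paper's proof lies precisely here: scaling the congested flow down by $\Theta(k\beta h)$, partitioning each player's incident edges into dyadic marginal-value classes $E_i(p)$, and using flow integrality with dummy vertices to round while preserving a $1/\Theta(k\beta h)$ fraction of submodular value, together with the $\max\{2N-N,\,|\delta(t)|/(2h\alpha\beta)-N\}$ accounting when complex players reclaim their private resources. Without these (or comparable) arguments, the conversion of the congested output into a valid partial assignment, and hence the derivation of the stated bounds on $\gamma$ and $h$, is not established.
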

The main technical novelty
is in proving that the augmentation problem can indeed be approximated well.
\begin{theorem}\label{thm:main-tech}
    There is an $(\alpha, \beta)$-approximation algorithm for
    the $h$-level routing problem with running time $n^{O(h)}$ with
    $\alpha = O(1)$ and $\beta = \mathrm{polylog}(n)$.
\end{theorem}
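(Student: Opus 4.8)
The plan is to realize steps~2 and~3 of the outlined framework for the augmentation problem: write a configuration-type LP relaxation of the $h$-level augmentation problem whose integral feasibility corresponds to a coverage-$1$, congestion-$1$ solution, show this LP can be solved approximately in time $n^{O(h)}$ despite having exponentially many variables \emph{and} constraints, and round an (approximate) feasible solution into one with coverage $1/\alpha$ and congestion $\beta$ for $\alpha=O(1)$, $\beta=\mathrm{polylog}(n)$. First I would set up the LP. For every sink $v$ in every level $i$ introduce a configuration variable for each $C\subseteq\delta(v)$ with $f_v(C)\ge 1$, together with flow variables for the edges of each $G_i$; to enforce the inter-level dependencies, attach to every source $s\in S_i$ carrying a linking edge a recursive copy of the depth-$(h{-}i)$ LP of the sub-instance rooted at $L_i(\{s\})$, and add the Sherali--Adams-style strengthening of Chakrabarty, Chuzhoy, and Khanna~\cite{DBLP:conf/focs/ChakrabartyCK09} to the flow part so the relaxation is tight for the augmentation (not the original) problem. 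This yields exactly the matrix of Figure~\ref{fig:block}: one top block (the level-$1$ flow and its sink configurations), from which hang exponentially many blocks, one per configuration/flow choice, each itself a depth-$(h{-}1)$ LP, the blocks tied together only by the polynomially many flow-conservation and linking constraints of level~$1$. Two properties must be verified: (i) a coverage-$1$, congestion-$1$ integral solution induces a feasible point, so that infeasibility of the LP certifies that no such integral solution exists; and (ii) the feasible region of each block, after dropping the linking constraints, is a polyhedral cone.

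Next I would solve the LP. Property~(ii) is the key to tractability: a constraint matrix whose nonzeros decompose into diagonal blocks, each conical, linked by a polynomial number of constraints, always admits a feasible solution whose support is a polynomial factor times the largest per-block support; unrolling the recursion to depth $h$ bounds the total support by $n^{O(h)}$. To actually produce such a solution I would run a Dantzig--Wolfe decomposition: the restricted master keeps only polynomially many already-generated columns (block solutions), and each pricing step asks, for the current dual prices, for a block solution of negative reduced cost. Pricing decomposes into (a) recursively solving the depth-$(h{-}1)$ LP of the sub-instance hanging off a source, and (b) for a sink $v$, finding a configuration $C$ with $f_v(C)$ large relative to its linear dual-price cost --- a submodular maximization under a knapsack-type constraint, solved within a constant factor via the multilinear extension and a non-standard variant of continuous Greedy. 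That constant-factor slack is exactly why the LP we can actually solve is feasible only for coverage $1/\alpha_0$ with $\alpha_0=O(1)$ rather than for coverage $1$. Iterating pricing polynomially many times --- the sparse-support guarantee bounds how many useful columns exist --- gives, in total time $n^{O(h)}$, either a certificate that $\CLP$ is infeasible (in which case we report that no coverage-$1$, congestion-$1$ solution exists) or an explicitly represented approximate feasible solution $\aCLP$.

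Finally I would round $\aCLP$ top-down. At level~$1$, for each $v\in T^*$ sample one configuration with probability proportional to its LP value and route the fractional flow supporting it; this fixes which sources of $S_1$ are used, hence which sinks $L_1(g_1\cap S_1)$ of level~$2$ must be covered, and we recurse into the corresponding sub-instances with their conditioned block solutions. Every sampled configuration has $f_v$-value at least $1/\alpha_0$, so by monotonicity each sink that must be covered is $\alpha$-covered with $\alpha=O(\alpha_0)=O(1)$, and submodularity ensures this is not undone where several configurations meet at a vertex. The congestion on an edge equals the number of sampled configurations and flows through it; since the LP values form a fractional solution, this is a sum of sufficiently independent indicators with $O(1)$ expectation, so a Chernoff bound plus a union over all $n$ edges and $h$ levels --- or, in the regime where expectations are tiny, a Lovász-Local-Lemma argument --- bounds it by $\beta=\mathrm{polylog}(n)$ with high probability. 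Re-running the randomized step $\mathrm{poly}(n)$ times makes the outcome reliable, yielding the claimed $(\alpha,\beta)$-approximation in time $n^{O(h)}$.

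The main obstacle I expect is the second step: designing the relaxation so that its blocks are genuinely polyhedral cones while its integral feasibility stays tight, and then making Dantzig--Wolfe pricing implementable --- in particular marrying the recursive LP call with the continuous-Greedy submodular search so that a single pricing iteration runs in polynomial time and the constant-factor submodular loss is paid once, inside the LP, rather than compounding over the $h$ levels. By comparison, the first step is a configuration-LP randomized rounding in the style of~\cite{DBLP:conf/focs/ChakrabartyCK09} and the flow arguments behind the reduction are standard.
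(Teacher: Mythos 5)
Your overall architecture coincides with the paper's: a recursive block-structured configuration LP (the paper's $\CLP_{\ge i}(T^*,\alpha,\beta)$, whose blocks $Q_{v,g}$ are polyhedral cones), a Dantzig--Wolfe reformulation solved through a pricing/separation routine that combines a recursive call on the depth-$(h-1)$ LP with a continuous-Greedy search over the multilinear extension, and a top-down randomized rounding of the fractional solution. However, two places in your plan do not carry the weight you put on them. First, your argument that the LP can be solved in time $n^{O(h)}$ rests on ``iterating pricing polynomially many times --- the sparse-support guarantee bounds how many useful columns exist.'' The existence of a sparse feasible solution does not bound the number of column-generation iterations; a restricted-master/pricing loop can in general take far more than polynomially many steps. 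The paper instead runs the Ellipsoid method on the dual of the Dantzig--Wolfe reformulation (which has polynomial dimension), using the pricing routine as an \emph{approximate} separation oracle between two nested polyhedra: it either certifies that the dual point satisfies all constraints of the coverage-$1$, congestion-$1$ program (membership in the larger polyhedron, which yields the infeasibility certificate, i.e.\ the hyperplane separating $b$ from $\BLP_{\ge i}(T^*,1,1)$), or it returns a violated constraint of the $(\alpha,\beta)$ program; polynomiality of the number of calls comes from the Ellipsoid method, and the primal solution is recovered from the polynomially many dual constraints encountered (\Cref{lem:sepDW}). Relatedly, the constant $\alpha$ is not only the $1-1/e$ loss of continuous Greedy: the fractional configuration it returns must itself be rounded to an integral flow (sampling one path per edge of $\delta(u^*)$, grouping paths by dual cost, Markov plus Chernoff), which is where the paper pays down to coverage $1/40$ and already introduces congestion $O(\log n)$ \emph{inside} the configurations, not only in the final rounding.

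Second, in the final rounding your claim that the per-edge congestion is ``a sum of sufficiently independent indicators with $O(1)$ expectation'' misses the delicate point, which is the compounding over the $h$ levels. After sampling one configuration per sink at level $i$, one must show that the aggregated residual budgets still certify membership of a uniform budget vector in $\BLP_{\ge i+1}$ for the newly required sinks; the random summands there are bounded by $\beta$, not by $1$, so to keep the per-level degradation down to a factor $(1+1/\log n)$ --- which is needed, since even a constant-factor loss per level over $h=\Theta(\log n/\log\log n)$ levels would exceed $\mathrm{polylog}(n)$ --- the paper starts from a budget $\gamma\ge 6\beta\log^3 n$ and applies Chernoff at that scale (\Cref{lem:rounding}). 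With these two repairs your plan becomes essentially the paper's proof; no Lov\'asz-Local-Lemma argument is needed.
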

These two theorems imply the main theorem.
\begin{proof}[Proof of \Cref{thm:main}]
    Let $\alpha, \beta$ as in \Cref{thm:main-tech}.
    By setting $h = \lceil 1 + \log(\beta n^2) /\log\log(n) \rceil = O(\log n/\log\log n)$ and $\gamma = \lceil 1000 \alpha^3 \beta^3 h^4\log^2(n) \rceil = \mathrm{polylog}(n)$ we obtain a polylogarithmic approximation in time $n^{O(\log n / \log\log n)}$.

    On the other hand, for $\gamma = n^\epsilon$ and $h = \lceil 1 + \log(\beta n^2) / \log(\gamma/(2\alpha)) \rceil = O(1/\epsilon)$ we have for sufficiently large $n$
    that $\gamma \ge 1000 \alpha^3 \beta^3 h^4 \log^2(n)$. Hence,
    there is a $n^\epsilon$-approximation algorithm with running time $n^{O(1/\epsilon)}$.
\end{proof}

\subsection{Definition of the linear programming relaxation}
\label{subsec:intro-linearprogram}
Consider a solution to the augmentation problem without congestion. This solution exhibits the following hierarchical
structure: after an arbitrary path decomposition 
of the flows $g_1,\dotsc,g_h$, we can associate with each
sink $u\in T_i$ the set of sources $A\subseteq S_i$, for which
in the decomposition some path ends in $u$ and starts in $A$.
Moreover, we associate $L_i(A) \subseteq T_{i+1}$ with $u$
and by the fact that each source has only one outgoing
edge, effectively enforcing a vertex capacity of $1$ on
it, each sink in $T_{i+1}$ 
is only associated to one sink in $T_i$.
Recursively, this results in a forest-like structure of sinks.
Based on this structure, we design our linear programming formulation recursively.

For each suffix of the levels $i,i+1,\dotsc,h$. 
We define $E_{\ge i} = E_i \cup \cdots \cup E_h$.
For each level $i$, set of sinks 
$T^*\subseteq T_i$, and $\alpha, \beta \ge 1$,
we define the linear program $\CLP_{\ge i}(T^*, \alpha, \beta)$.
Here, $\alpha$ is a parameter that stands for the coverage requirement, i.e.,
that every covered sink receives value at least $1/\alpha$, and
$\beta$ stands for the allowed congestion.

In order to model the submodular function requirement, 
we make use of configuration sets $\C(v, \alpha, \beta)$
for each sink $v\in T^*$. These configurations
are the integral flows $g$ in $G_i$ starting in the sources $S_i$ and ending in $v$,
such that the congestion of $g$ is at most $\beta$ and the coverage of $v$,
that is, $f_v(g\cap \delta(v))$, is at least $1/\alpha$.
Each sink in $T^*$ needs to pick one configuration subject to
constraints that will be explained below. The sum of all configurations
stands for the flow $g_i$ in level $i$.

The linear program $\CLP_{\ge i}(T^*, \alpha, \beta)$ contains the variable sets
%$b\in [0,\beta]^{E_{\ge i}}$,
$b_{v,g} \in \mathbb R_{\ge 0}^{E_{\ge i}}$ and $x_{v,g}\in \mathbb R_{\ge 0}$ for all
$v\in T^*, g\in \C(v,\alpha,\beta)$, as well as parameter (fixed constant)
$b\in [0,\beta]^{E_{\ge i}}$.
Here, $b(e)$ describes a ''budget'' of how much flow is allowed to pass through edge $e$,
i.e., an upper bound on $g_i(e)$ where $E_i\ni e$. Including $b$ is necessary for the recursive definition.
The budget is decomposed further into $b_{v,g}(e)$, which describes how much of $b(e)$
is used by the sink $v\in T^*$ together with configuration $g$, using the intuition of a fixed path decomposition as before, which allows
us to trace each unit of flow back to one of the sinks in $T^*$.
We write $(b, b_{v,g}, x_{v,g})\in \CLP_{\ge i}(T^*, \alpha, \beta)$ if these variables and parameters are feasible.
Let $\BLP_{\ge i}(T^*, \alpha, \beta)$ be the set of feasible values $b\in [0,\beta]^{E_{\ge i}}$ for $\CLP_{\ge i}(T^*, \alpha, \beta)$, i.e.,
$b \in \BLP_{\ge i}(T^*, \alpha, \beta)$ if and only if there exist $b_{v,g}, x_{v,g}$
such that $(b, b_{v,g}, x_{v,g})\in \CLP_{\ge i}(T^*, \alpha, \beta)$.
$\BLP_{\ge i}(T^*, \alpha, \beta)$ forms a polytope, which can be seen from
turning $b$ into variables in $\CLP$ and then projecting to $b$.
We are now ready to state the linear program completely.

\begin{LPBox}[Multi-level configuration LP, $\CLP_{\ge i}(T^*, \alpha, \beta)$]
\begin{align}
        \label{eqn:CLP_paths_01}
        \sum_{g \in \C(v, \alpha, \beta)} x_{v, g} &\ge 1 && \forall v\in T^* \\ 
        \label{eqn:CLP_paths_02}
        \sum_{v\in T^*} \sum_{g \in \C(v, \alpha, \beta)} b_{v,g}(e) &\le b(e) && \forall e \in E_{\ge i}\\
        \label{eqn:CLP_paths_03}
        g(e) \cdot x_{v, g} &= b_{v,g}(e) && \forall v\in T^*, g\in \C(v, \alpha, \beta), \\
        \notag
        & && e \in E_i,\\
        \label{eqn:CLP_paths_04}
        (b_{v,g}(e))_{e\in E_{\ge i+1}} &\in x_{v, g} \cdot \BLP_{\ge i+1}(L_i(g\cap S_i), \alpha, \beta) && \forall v\in T^*, g\in \C(v, \alpha, \beta) \\
        \label{eqn:CLP_paths_05}
        x_{v, g} &\ge 0 && \forall v\in T^*, g\in \C(v, \alpha, \beta) \\ 
        b_{v, g}(e) &\ge 0 && \forall v\in T^*, g\in \C(v, \alpha, \beta), \\
        \notag
        & && e\in E_{\ge i}
\end{align}
\end{LPBox}
For the last level $i = h$ we omit Constraint~\eqref{eqn:CLP_paths_04}.

Constraint~\eqref{eqn:CLP_paths_01} ensures that each sink in $T^*$ selects one configuration. Constraint~\eqref{eqn:CLP_paths_02} enforces the relationship between $b(e)$ and $b_{v,g}(e)$.
Constraint~\eqref{eqn:CLP_paths_03} guarantees that $b_{v,g}(e)$ correctly represents
the amount of flow on edge $e$ caused by $v$ and $g$ in level $i$.

The last Constraint~\eqref{eqn:CLP_paths_04} requires some more elaboration. First, we verify that it is indeed a polyhedral constraint:
for some $v, g$, the values $x_{v,g}$ and $b_{v,g}(e)$, $e\in E_{\ge i+1}$, that satisfy~\eqref{eqn:CLP_paths_04}
are exactly those generated by the polyhedral cone
with extreme rays $x_{v,g} = 1$ and 
$b_{v,g}(e)$, $e\in E_{\ge i+1}$, being a vertex of
$\BLP_{\ge i+1}(L_i(g\cap S_i), \alpha, \beta)$.

The intuition of the constraint is that if $v$ is covered
via the flow $g$, then $L_i(g\cap S_i)$ need to be covered
in the next level. However, it is not sufficient that
$\CLP_{\ge i+1}(L_i(g\cap S_i), \alpha, \beta)$ is feasible,
since several sinks in $T_i$ (not just $v$) share the budget on edges in $E_{\ge i+1}$. Hence, we use $b_{v,g}(e)$
to store the budget used by $v$ (together with configuration $g$). Constraint~\ref{eqn:CLP_paths_02} then ensures that
the flow used by all sinks together does not exceed
the budget.
In the uncongested case and with the intuition of the forest-like structure,
this simply says that the trees rooted in different sinks of $T^*$
are edge-disjoint.
Formally, the fact that we can separately consider solutions
induced by the different sinks $L_i(g\cap S_i)$ in the next level
is justified by the following lemma.
\begin{lemma}\label{lem:separate}
    Let $T^*, T^{**}$ be disjoint sets of sinks
    and let $b\in [0, \beta]^{E_{\ge i}}$.
    Then
    $b \in \BLP_{\ge i}(T^* \cup T^{**}, \alpha, \beta)$
    if and only if there exist $b' + b'' = b$ with
    $b'\in \BLP_{\ge i}(T^*, \alpha, \beta)$ and $b''\in \BLP_{\ge i}(T^{**}, \alpha, \beta)$.
\end{lemma}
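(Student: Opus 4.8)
**The plan is to prove Lemma~\ref{lem:separate} by induction on the recursion depth, i.e., on $h - i$, establishing both directions via a decomposition/recombination of the configuration-LP solutions.**

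For the ``if'' direction, suppose $b = b' + b''$ with $b'\in \BLP_{\ge i}(T^*,\alpha,\beta)$ certified by variables $(b', b'_{v,g}, x'_{v,g})_{v\in T^*}$ and $b''\in \BLP_{\ge i}(T^{**},\alpha,\beta)$ certified by $(b'', b''_{v,g}, x''_{v,g})_{v\in T^{**}}$. Since $T^*$ and $T^{**}$ are disjoint, the union of both variable families is indexed over $T^*\cup T^{**}$ without collision, so I would simply take the union as a candidate certificate for $b\in \BLP_{\ge i}(T^*\cup T^{**},\alpha,\beta)$. Constraints~\eqref{eqn:CLP_paths_01}, \eqref{eqn:CLP_paths_03}, \eqref{eqn:CLP_paths_05} and the nonnegativity constraints are inherited verbatim because each only involves a single sink $v$. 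Constraint~\eqref{eqn:CLP_paths_04} is likewise per-$(v,g)$, hence inherited. The only constraint that mixes sinks is~\eqref{eqn:CLP_paths_02}: here $\sum_{v\in T^*\cup T^{**}}\sum_g b_{v,g}(e) = \sum_{v\in T^*}\sum_g b'_{v,g}(e) + \sum_{v\in T^{**}}\sum_g b''_{v,g}(e) \le b'(e) + b''(e) = b(e)$, so it holds. This direction requires no induction at all.

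For the ``only if'' direction, take a certificate $(b, b_{v,g}, x_{v,g})_{v\in T^*\cup T^{**}}$ for $b\in\BLP_{\ge i}(T^*\cup T^{**},\alpha,\beta)$. Define $b'(e) = \sum_{v\in T^*}\sum_g b_{v,g}(e)$ and $b''(e) = \sum_{v\in T^{**}}\sum_g b_{v,g}(e)$ for $e\in E_{\ge i}$. By Constraint~\eqref{eqn:CLP_paths_02}, $b'(e)+b''(e)\le b(e)\le\beta$, so $b',b''\in[0,\beta]^{E_{\ge i}}$; I would then need to handle the slack $b(e) - b'(e) - b''(e)\ge 0$ by, say, adding it to $b'(e)$ (monotonicity of $\BLP$ in the budget $b$ along its coordinates should let us only increase a feasible budget — this needs a quick check that raising $b(e)$ preserves membership in $\BLP$, which is immediate since the only constraint on $b$ is the $\le$ in~\eqref{eqn:CLP_paths_02} and the box constraint). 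The sub-certificate for $T^*$ is $(b', b_{v,g}, x_{v,g})_{v\in T^*}$, restricting the variable family to sinks in $T^*$; symmetrically for $T^{**}$. Constraints~\eqref{eqn:CLP_paths_01}, \eqref{eqn:CLP_paths_03}, \eqref{eqn:CLP_paths_05}, nonnegativity, and~\eqref{eqn:CLP_paths_04} are per-sink (or per-$(v,g)$) and hence survive restriction; Constraint~\eqref{eqn:CLP_paths_02} holds for $b'$ by the very definition of $b'$ (with equality before the slack adjustment, $\le$ after).

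**The main obstacle** is Constraint~\eqref{eqn:CLP_paths_04}, which refers recursively to $\BLP_{\ge i+1}$: although it is syntactically a per-$(v,g)$ constraint and therefore trivially inherited under restriction, I should double-check that no hidden coupling is introduced by the fact that $\BLP_{\ge i+1}(L_i(g\cap S_i),\alpha,\beta)$ is itself only well-defined through a (recursive) existential over lower-level variables — but since~\eqref{eqn:CLP_paths_04} asks for $(b_{v,g}(e))_{e\in E_{\ge i+1}}$ to lie in a fixed scaled polytope that depends only on $v$, $g$, and the fixed data, and carries its own witness, splitting the sink set does not disturb it. Thus the recursion in the statement is, somewhat surprisingly, not needed for this lemma; the result follows from the block/separable structure of the LP in the sink index, and the only genuinely non-vacuous point is the slack redistribution and the monotonicity of $\BLP$ in the budget vector, which I would state as a one-line observation. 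If a subtlety does arise there (e.g. if increasing one coordinate of $b$ could somehow violate feasibility through~\eqref{eqn:CLP_paths_04}), the safe fallback is to distribute the slack proportionally, or to observe directly that a certificate for a smaller $b$ is also a certificate for a larger $b$ since $b$ appears only on the right-hand side of $\le$ constraints.
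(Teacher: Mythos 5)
Your proposal is correct and follows essentially the same route as the paper's proof: restrict the variable family to each sink set to obtain $b'$ and $b''$ in one direction, and take the (disjoint) union/sum of the two certificates in the other, with only the budget constraint~\eqref{eqn:CLP_paths_02} coupling the sinks. Your explicit handling of the slack $b(e)-b'(e)-b''(e)$ via upward-closedness of $\BLP$ in $b$ plays exactly the role of the paper's ``assume constraint~\eqref{eqn:CLP_paths_02} is tight'' step, and is if anything the more careful way to phrase it.
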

This means that for an integral uncongested solution Constraint~\eqref{eqn:CLP_paths_04} is equivalent to
\begin{equation*}
    (b(e))_{e\in E_{\ge i+1}} \in B_{\ge i+1}(L_i(g_i\cap S_i), 1, 1) \ ,
\end{equation*}
where $g_i = \sum_{v\in T^*} \sum_{g\in\C(v,\alpha,\beta)} x_{v,g} g$.
The implication uses the fact that $L_i(g'\cap S_i)$ and $L_i(g''\cap S_i)$ must be disjoint for different $g', g''$ used by the solution: 
$g' \cap S_i$ and $g'' \cap S_i$ are disjoint since
each vertex in $S_i$ has out-degree $1$, enforcing a vertex capacity of $1$ on $S_i$ and $L_i(\cdot)$ is injective.
The proof of the lemma is straight-forward and deferred to Appendix~\ref{app:separate}.

\section{Reduction to the augmentation problem}
\label{sec:Reduction}
We now present the reduction of the Submodular Santa Claus problem to the augmentation problem that we have introduced in Section~\ref{subsec:intro-reduction}.

In order to devise a $\gamma$-approximation algorithm,
by a standard binary search framework it suffices for a given $\eta$ to
either find a solution of value at least $\eta / \gamma$ or to determine
that $\OPT < \eta$. Furthermore, by scaling all functions $f_p$ we may assume that $\eta = 1$.

\subsection{From general instances to canonical instances.}

We will first reduce to the following \emph{canonical instances}.

\paragraph*{Canonical instance.} As mentioned above we need to either determine that $\OPT < 1$ or find a solution of value at least $1/\gamma$.
We distinguish between \emph{basic players} $B$ and \emph{complex players} $C$. 

For a basic player $p\in B$, we have that $f_p(S) \in \{0, 1\}$ for
all $S\subseteq R$. Notice that $f_p(S) = 1$ if and only if $S$ contains a resource $r$ with $f_p(\{r\}) = 1$. We may therefore assume without loss of generality that
each basic player gets exactly one resource of value $1$ in a solution. 

Each complex player $p$ has a private resource $r(p)$ with $f_{p}(\{r(p)\}) = 1$ and
$f_{q}(\{r(p)\}) = 0$ for all complex players $q\neq p$.
Similar to before, we may assume that if player $p$ gets $r(p)$ in a solution then
$p$ does not get any other resources. For all resources $r\neq r(p)$, we have
$f_p(\{r\}) < 1/\gamma$.

\paragraph*{Reduction to canonical instances.} 
In a general instance, we split each player $p\in P$ into a basic player $p'$ and a complex player $p''$ and we introduce an additional resource $r(p'')$ which has value $f_{p'}(\{r(p'')\}) = f_{p''}(\{r(p'')\}) = 1$ for $p'$ and $p''$ and value $0$ for all other players. For player $p$, we define $R_p^{(b)}$ the set of resources $r$ such that $f_{p}(\{r\}) \ge 1/\gamma$ (i.e. the big resources for $p$). Then, we can define the submodular valuation function for player $p'$ as follows.
\begin{equation*}
    f_{p'}(S):=\begin{cases}
        1 \mbox{ if $S\cap (R_p^{(b)}\cup \{r(p'')\})\neq \emptyset$,}\\
        0 \mbox{ otherwise.}
    \end{cases}
\end{equation*}
For player $p''$, we define the submodular valuation function as follows
\begin{equation*}
    f_{p''}(S):=
    \begin{cases}
        1 \mbox{ if $r(p'')\in S$,}\\
        f_p(S\setminus R_p^{(b)}) \mbox{ otherwise.}
    \end{cases}
\end{equation*}

Note that the resource $r(p'')$ can cover either $p'$ or $p''$. This corresponds intuitively
to the fact that $p$ needs either small resources that sum up to a large value or a single resource of sufficiently large value. Note that in the above construction $p'$ is a basic player which values only the additional resource $r(p'')$ or the big resources of $p$, while $p''$ is a complex player which values only the additional resource or the small resources of $p$. 

It is easy to see that a solution of value at least $1$ in the original instance can be transformed to a solution of value at least $1$ in the canonical instance: if a player $p$ receives at least one big resource $r$ in the orginal instance such that $f_p(r)\ge 1/\gamma$, we give that same resource to the corresponding basic player $p'$ in the canonical instance, and the complex player $p''$ is given the additional resource $r(p'')$. On the other hand, if a player $p$ does not receive any big resource, we give the resource $r(p'')$ to the basic player $p'$ in the canonical instance, and the complex player $p''$ receives the same resources as $p$ receives in the original instance.

Conversely, it is easy to see that a solution of value at least $1/\gamma$ in the canonical instance
can be transformed to a solution of value at least $1/\gamma$ in the original instance. Indeed, if a pair of players $p',p''$ (corresponding to one player $p$ in the original instance) both receive value at least $1/\gamma$, it must be that either (a)
$p'$ does not receive the resource $r(p'')$ hence $p'$ must receive one resource of value at least $1/\gamma$ for $p$ in the original instance, or (b) the player $p''$ does not receive the resource $r(p'')$ hence must receive a bundle of resources $S$ of total value at least $1/\gamma$ for player $p$ in the original instance. In both cases, the original player $p$ is covered.

Hence, it suffices to devise an algorithm for the canonical instance.

\subsection{From canonical instances to augmentation problem}
Consider a partial assignment of resources $\sigma:R \rightarrow P\cup \{\bot\}$, where symbol $\bot$ is used to describe that a resource is not assigned to any player.
From a canonical instance, partial assignment $\sigma$, and
a parameter $h\in\mathbb N$,
we will construct an instance $I(\sigma, h)$ of the augmentation problem that
closely relates to potential reassignments of resources.
Parameter $h$, the number of levels in the instance, will influence the approximation ratio and the running time.
First, we define
\begin{equation*}
   \overline \sigma(r)= 
   \begin{cases} 
   \sigma(r) \mbox{ if } \sigma(r)\neq \bot \mbox{ and } f_{\sigma(r)}(\{r\})=1\ ,\\
    \bot \mbox{ otherwise.}
    \end{cases}
\end{equation*}
Intuitively, the assignment $\overline \sigma$ is equal to the assignment $\sigma$ except that all complex players release the small resources assigned to them. Notice that in this new assignment $\overline \sigma$, a complex player $p$ can only receives its unique big resource $r(p)$, if any.
The intuition here is that this greatly simplifies the structure of potential reassignments, since every player can now only give up a single resource (i.e., has an out-degree of at most $1$). Thus, reassignments can be thought of as directed trees.

\paragraph*{Construction of augmentation instance.}
All our $h$ levels will feature the same graph, which is defined as follows.
Let $G = (V, E)$ where $V$ consists of all resources $R$,
all basic players $B$, two copies of the complex players, which we denote by $C^S$ and $C^T$,
a vertex $t$ and a vertex $s(r)$ for each currently unassigned resource $r$ (i.e., $\overline\sigma(r) = \bot$).
For each resource $r$ and each $q\in B\cup C^T$ for which $f_q(\{r\}) > 0$
there is an edge from each $r$ to $q$ if $\overline{\sigma}(r)\neq q$. Further, for each $q\in B\cup C^S$
and each resource $r$ which is currently assigned to $q$ (in the assignment $\overline \sigma$)
there is an edge from $q$ to $r$.
Recall that by definition of $\overline \sigma$ this implies that $f_q(\{r\}) = 1$.
Finally, there is an edge from $s(r)$ to $r$ for each unassigned resource $r$ and an edge from each basic player that is currently not assigned any
resource to $t$ (again referring to the assignment $\overline \sigma$).

The sinks are $t$ and the copies $C^T$, and the sources are $s(r)$ for unassigned resources $r$ and the
copies $C^S$.
The incoming edges for some $p\in C^T$ all come from resources and thus, a set
$A\subseteq \delta(p)$ naturally corresponds to the set of resources incident
to it. We define $f_p(A)$ as the function value for these resources and complex player
$p$ in the canonical instance.

For vertex $t$ we define $f_t(A) = |A| / |\delta(t)|$ for each $A\subseteq \delta(t)$. Notice that this function is linear, hence submodular.

In the reassignment of resources corresponding to the optimal
solution, any complex player $p$ whose private resource $r(p)$ is taken away would receive
a lot of other resources.
The intuition for $C^T$ and $C^S$ is that we do not strictly enforce this:
from the copies in $C^S$ we potentially take away the
private resource and the copies in $C^T$ potentially receive a lot of other resources,
but consistency is not enforced.
This relaxation is to make it easier to find a good reassignment. However, we still have to strengthen this relaxation to avoid that a reassignment
simply takes away all private resources from $C^S$ without being able to cover them
with other resources.

Towards this, we build a multi-level instance of the augmentation problem by stacking several copies of the graph defined as above
on top of each other. Then we connect these graphs together by some linking edges. Formally, for each $1\le i< h$ and every complex player $p$, we add a linking edge from the vertex in $C^T$ corresponding to player $p$ in level $i+1$ to the vertex in $C^S$ corresponding to the player $p$ in level $i$.

By construction, this enforces that if the unique resources of some complex
players $A\subseteq C^S$ of level $i$ are removed, then in the next level there must be a solution
that gives a lot of resources to the elements in $C^T$ of level $i+1$ that correspond to $A$.

We denote by $I(\sigma, h)$ the multi-level augmentation instance obtained as above.
\paragraph*{Existence of an augmentation.}
In the following lemma we prove that the instance $I(\sigma,h)$
as above is feasible, assuming that the canonical instance is.
\begin{lemma}\label{lem:existence_of_flow}
Consider the instance $I(\sigma, h)$ of the augmentation problem, where $\sigma$ is an arbitrary assignment of resources. If there exists a solution of value $1$ for the canonical instance, then there exists a solution with coverage $\alpha = 1$ and congestion $\beta = 1$ for $I(\sigma,h)$ for any $h\ge 1$, which covers the sink
$t$ in level $1$.
\end{lemma}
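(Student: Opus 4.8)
The plan is to take an integral optimal assignment $\sigma^*$ of value at least $1$ for the canonical instance (which exists by hypothesis) and, together with $\overline\sigma$, construct the flows $g_1,\dots,g_h$ explicitly. Using the structure of canonical instances we may assume that $\sigma^*$ gives each basic player exactly one resource, of value $1$ to it, and gives each complex player $p$ either its private resource $r(p)$ or an inclusion-minimal set $S_p$ of resources with $f_p(S_p)\ge 1$. In the latter case every element of $S_p$ has positive marginal value for $p$, hence equals neither $r(p)$ nor the private resource $r(q)$ of any other complex player $q$ (those have value $0$ for $p$); in particular $\overline\sigma$ never assigns a resource of $S_p$ to a complex player. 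Since $I(\sigma,h)$ is built referring only to $\overline\sigma$, the argument is oblivious to how $\sigma$ was chosen.

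I would build each $g_i$ as a union of \emph{augmenting chains}, a chain being grown by alternately following $\sigma^*$ forward and $\overline\sigma$ backward. In level $1$, for every basic player $b$ unassigned in $\overline\sigma$ we start a chain at $b$: it wants the resource $\sigma^*(b)$, which is either unassigned in $\overline\sigma$ (the chain ends at the source $s(\sigma^*(b))$) or held by a player $u=\overline\sigma(\sigma^*(b))$; if $u$ is basic it releases its resource and now wants $\sigma^*(u)$, so we continue, and if $u$ is complex then $\overline\sigma$ is giving $u$ its private resource $r(u)=\sigma^*(b)$, the chain ends at the source $u\in C^S$, and we record $u$. Reversed, each chain is a directed path in $G_1$ from a source to $b$ and then to $t$; all its edges exist because the relevant $f$-values equal $1$ and, along the chain, each resource is wanted by a player different from its current $\overline\sigma$-owner, which is exactly the condition for the corresponding edge to be present. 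Taking $g_1$ to be the union of these chains over all unassigned basic players, it uses every edge into $t$, hence $1$-covers $t$ since $f_t(\delta(t))=1$. Inductively, the $C^S$-sources used by $g_i$ are exactly the complex players $A_i$ recorded while building $g_i$, and — as every $C^S$-vertex carries a linking edge for $i<h$ — the set $L_i(g_i\cap S_i)$ is precisely the set of $C^T$-copies of $A_i$ in level $i+1$. For each $p\in A_i$ we have $\sigma^*(p)\neq r(p)$, since $r(p)$ was handed to the preceding player of the chain, so $p$ receives a bundle $S_p$; we then define $g_{i+1}$ as the union over $p\in A_i$ and $r\in S_p$ of the chain started at $r$ (first following $\overline\sigma$ backward from $r$, and routing $r$ into the $C^T$-copy of $p$ via the edge from $r$ to that copy, which exists because $f_p(\{r\})>0$ and $\overline\sigma(r)\neq p$). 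Then $g_{i+1}$ $1$-covers each such copy, as the edges of $g_{i+1}$ entering it are exactly those from the resources of $S_p$ and $f_p(S_p)\ge 1$. For $h=1$ only $g_1$ is needed; for general $h$ we run this construction for $h$ levels (if some $A_i=\emptyset$, the remaining flows are empty), which is always possible since $\sigma^*$ always supplies the needed resources. The resulting $(g_1,\dots,g_h)$ is then feasible with $\alpha=\beta=1$ and covers $t$ in level $1$.

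The step I expect to be the main obstacle is verifying that the chains used within a single level are pairwise vertex-disjoint except, possibly, at a common sink vertex; only then is their union a binary flow (so congestion $\beta=1$), with flow conservation at every interior vertex since each chain is a single source-to-sink path. This should follow from $\sigma^*$ and $\overline\sigma$ being partial functions under which (as observed) each player releases at most one resource: along the chains, each resource has a unique successor (its $\overline\sigma$-owner) and a unique predecessor (its $\sigma^*$-recipient, necessarily the preceding, non-complex player), and each player has a unique successor (its $\sigma^*$-image) and a unique predecessor (its $\overline\sigma$-resource); hence among all chains every interior vertex has in-degree and out-degree at most one. Consequently no chain revisits a vertex (so it is finite and ends at a source), chains with distinct starting demands cannot merge, and — the bundles $S_p$ being pairwise disjoint — the chains for distinct pairs $(p,r)$ at level $i+1$ do have distinct starting demands; a similar observation shows each starting demand occurs only as the start of its own chain. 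The complex-player bookkeeping from the first paragraph is what keeps this clean: a bundle resource is never a private resource, so the only way a chain touches $C^S$ is the replacement step, and each time it does $\sigma^*$ certifies that the released complex player is covered by a bundle — exactly what the next level needs — which closes the induction. The remaining checks (existence of each edge used, flow conservation, and the value identities $f_t(\delta(t))=1$ and $f_p(S_p)\ge 1$) are routine.
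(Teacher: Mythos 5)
Your construction is correct, and it rests on the same underlying idea as the paper's proof -- reroute resources along the difference between $\overline\sigma$ and the optimal assignment $\sigma^*$, and use the fact that a complex player who loses $r(p)$ is covered in $\sigma^*$ by small resources, which is exactly what the linking edges demand -- but the execution is genuinely different. The paper simply puts the \emph{same} global flow in every level: one unit on every edge $(p,r)$ with $\overline\sigma(r)\neq\sigma_{\OPT}(r)$, on every $(r,p)$ with $\sigma_{\OPT}(r)=p$, on every $(s(r),r)$ and $(p,t)$, and then checks flow conservation vertex by vertex; no path decomposition, no induction over levels, and no pruning to the ``needed'' part of the flow is required, which makes the verification very short. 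You instead build, level by level, only the chains reachable backwards from the sinks that must be covered, recording the complex players $A_i$ whose private resource is taken and covering exactly their $C^T$-copies in the next level; this forces you to prove the chain-disjointness claim (which your in-/out-degree $\le 1$ argument does handle, using -- as the paper implicitly does too -- that $\overline\sigma$ gives each basic player at most one resource), so your route is more laborious but also more explicitly exhibits the tree-like structure of the augmentation. One small slip: your parenthetical reason that $r(p)\notin S_p$ (``those have value $0$ for $p$'') is wrong for $p$'s own private resource, which has value $1$ for $p$; the correct argument is that if the $\sigma^*$-bundle of $p$ contained $r(p)$, inclusion-minimality would give $S_p=\{r(p)\}$ and you would be in the first case of your dichotomy, so the conclusion you need still holds.
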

\begin{proof}
    As a solution, we define the same flow in each level. Let $\sigma_\OPT$ be the optimal assignment in the canonical instance, and $\overline \sigma$ the modified assignment derived from $\sigma$ as before. We assume that there is no resource $r$ such that $\sigma_{\OPT}=\bot$. This is without loss of generality, because we can always assign this resource to an arbitrary player and modify $\sigma_{\OPT}$ accordingly.
    
    For simplicity of notation, we denote by $(p, r)$ the edge from the vertex corresponding to player $p$ to the resource $r$. Notice that if $p$ is a complex player, which means that there are two vertices corresponding to $p$, this edge only exists from the copy of $p$ in $C^S$. Thus, the edge is uniquely defined. Similarly, we denote by $(r, p)$ the edge from a resource $r$ to a player $p$. Again, if $p$ is a complex player, the edge must go to $C^T$. We also have edges $(s(r),r)$ for unassigned resource $r$ and edges $(p, t)$ for uncovered players to $t$.

    The solution flow $g_i(e)$ for each level $i$ and some edge $e$ is defined by
\begin{equation*}
g_i(e) = 
    \begin{cases}
        1 \quad \mbox{ if $e=(p,r)$ and $\overline{\sigma}(r)\neq \sigma_\OPT(r)$},\\
        1 \quad \mbox{ if $e=(r,p)$ and $\sigma_\OPT(r)=p$},\\
        1 \quad \mbox{ if $e=(s(r),r)$},\\
        1 \quad \mbox{ if $e=(p,t)$},\\
        0 \quad \mbox{ otherwise.}
    \end{cases}
\end{equation*}
It is easy to verify the validity of this solution since our flow solution mimics the optimal assignment. One can verify that the sink $t$ in level $1$ is covered since all uncovered basic players send a flow of $1$ to $t$, and the congestion is clearly at most $1$ on any edge.

Second, any player vertex $p$ which is not a source nor a sink must be a basic player. Therefore, if the corresponding vertex is traversed by some flow, there is one unit of outgoing flow since $p$ is assigned at most one big resource in $\overline \sigma$, and exactly one unit of in-going flow since $p$ receives only one big resource in $\sigma_\OPT$. Hence we have the flow conservation at all player vertices. The case of resource vertices is very similar, the resource is assigned to exactly one player in $\sigma_\OPT$ and at most one in $\overline \sigma$, and it is easy to verify that flow conservation holds at the corresponding vertex. Either the resource is not traversed by any flow if both $\overline \sigma$ and $\sigma_\OPT$ agree on that resource, or it is traversed by exactly one flow unit.

Finally, in the flow solution $g_i$, the set of vertices in $C^S$ which send some flow corresponds to a set of complex players which give up their big resource in the reassignment. But since $\sigma_{\OPT}$ covers all players, it must be that those players are covered by some small resources, hence the corresponding sinks in $C^T$ will receive enough flow in the assignment $g_{i+1}$, which satisfies the constraint related to the linking edges between $C^S$ in one layer and $C^T$ in the next layer.
\end{proof}

\paragraph*{Approximate solution with additional structure.}
By \Cref{lem:existence_of_flow} we know that $I(\sigma, h)$
will have a feasible solution, assuming the canonical instance
is feasible. We will show next that by a negligible loss
we can simplify any solution to obey a structure that
will later help in actually augmenting the assignment $\sigma$.

\begin{lemma}\label{lem:existence_of_structured_flow}
Let $\alpha, \beta \in\mathbb N$ and $h \ge 1 + \log(\beta n^2) / \log(\gamma / (2\alpha))$ and consider the instance $I(\sigma, h)$ constructed from the assignment $\sigma$.

Any solution $g_1,g_2,\dotsc,g_h$ of coverage $\alpha$ and congestion $\beta$ that covers $t$ in level $1$
can be turned in polynomial time into a solution $g'_1,g'_2,\dotsc,g'_h$ with coverage $2h\alpha$ and congestion $\beta$
such that
\begin{enumerate}
    \item either (a) $g_1$ uses none of the source in $C^S$ (only the sources $s(r)$)
    or (b) $g_1$ does not use sources $s(r)$ (and therefore only $C^S$); and
    \item $g_h$ does not use any sources in $C^S$ (only potentially sources $s(r)$)\ .
\end{enumerate}
\end{lemma}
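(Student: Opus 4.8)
The plan is to think of the solution $g_1, \dots, g_h$ as $h$ layered flows and to exploit the fact that, by the choice of $h$, the ``multiplicative shrinkage'' of the amount of flow we are forced to carry along the chain of linking edges cannot survive for $h$ levels without becoming smaller than a single unit of flow. Concretely, I would first establish a monotonicity-type estimate: if $g_i$ uses some sources in $C^S$ at level $i$, then through the linking edges $L_i$ the sinks $L_i(g_i \cap S_i)$ in level $i+1$ must be $\alpha$-covered, and to $\alpha$-cover them $g_{i+1}$ must itself route a substantial amount of flow --- but since each complex player has a private resource of value $1$ and all other resources have value $<1/\gamma$, covering a sink to value $1/\alpha$ via small resources requires on the order of $\gamma/\alpha$ units of flow (counting congestion, at least $\gamma/(2\alpha)$ distinct resource-edges). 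So the number of ``active'' sources in $C^S$ grows geometrically by a factor of roughly $\gamma/(2\alpha)$ as we move up the levels, as long as they keep being used. Since the total number of sources is at most $n$ and each edge carries congestion at most $\beta$, after more than $\log(\beta n^2)/\log(\gamma/(2\alpha))$ levels this is impossible; this is exactly the hypothesis $h \ge 1 + \log(\beta n^2)/\log(\gamma/(2\alpha))$, which forces the existence of some level $j$ at which $g_j$ uses \emph{no} source in $C^S$.

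Given such a level $j$, I would then \textbf{truncate and restitch} the solution. For property~2, I take the smallest level $j$ with $g_j \cap C^S = \emptyset$; then levels $j, j+1, \dots, h$ form a self-contained sub-solution of one fewer ``type'' of dependency at the top, and I can simply discard levels $j+1, \dots, h$ and replace them by copies of level $j$ --- or, more simply, replace $g_h$ by a flow that routes only what is needed from the $s(r)$ sources, dropping every path that originates in $C^S$ at level $h$. Dropping paths in a binary/congested flow only removes coverage obligations (a sink in $C^S$ that is no longer used imposes no requirement on the next level), so feasibility is preserved and congestion does not increase. The coverage guarantee, however, degrades: removing the $C^S$-paths from $g_h$ can decrease $f_v(g_h \cap \delta(v))$ for sinks $v\in T_h$, but by submodularity and the fact that the $s(r)$-sourced part still carries a $1/(2)$-fraction of the relevant flow in the worst case (or, more carefully, an inductive bookkeeping over the levels we touch), each truncation loses at most a factor $2$ in coverage, and we touch at most $h$ levels, giving the claimed $2h\alpha$.

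For property~1 I would argue separately on level $1$: if $g_1$ uses \emph{both} some $s(r)$ source and some $C^S$ source, I split $g_1$ along a path decomposition into the part $g_1^{(a)}$ whose paths start at $s(r)$ sources and the part $g_1^{(b)}$ whose paths start in $C^S$. One of these two parts must contribute at least half of $f_t(g_1 \cap \delta(t))$ to the sink $t$ (here I use that $f_t$ is linear, so the contribution splits additively over the path decomposition) --- actually, since $t$'s incoming edges come only from uncovered basic players and those paths can start either way, I keep the part carrying the larger share of the flow into $t$. If we keep $g_1^{(a)}$ we are in case (a) and we also drop every obligation coming from $C^S$ at level~1, which by the argument above only helps feasibility at levels $\ge 2$; if we keep $g_1^{(b)}$ we are in case (b). Either way $t$ remains $(2\alpha)$-covered in level~1, and combined with the $2h\alpha$ loss from enforcing property~2 we stay within $2h\alpha$ overall (re-deriving the constant carefully, one keeps a single factor of $2$ per level touched).

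\textbf{Main obstacle.} The delicate point is the coverage accounting: each time we delete the $C^S$-sourced paths from some level's flow, a submodular sink at that level can lose value, and we must ensure the cumulative loss over all levels we modify is only a factor $2h$ rather than $2^h$. The clean way is to show that at each modified level the \emph{retained} flow already carries at least half of the ``demanded'' value --- either because $f_t$ is linear (level~1), or because at the top truncation level we simply do not need the dropped paths at all --- so that the losses are additive in the $\alpha$ rather than multiplicative. Making this precise, and in particular verifying that dropping paths never creates a \emph{new} coverage obligation (it only removes active sources from $C^S$, hence removes obligations on the next level via $L_i(\cdot)$), is the technical heart of the argument; the geometric-growth counting that produces the level $j$ is comparatively routine once the per-level loss is controlled.
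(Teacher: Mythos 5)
Your level-1 dichotomy (decompose $g_1$ into paths, split the flow into $t$ according to whether a path starts at an $s(r)$ source or in $C^S$, and keep the half carrying more value, using that $f_t$ is linear) is exactly the paper's first step and is fine. But your route to Property~2 has a genuine gap. First, the claim that the hypothesis on $h$ forces some level $j$ with $g_j\cap C^S=\emptyset$ is false: a covered sink in $C^T$ at level $i+1$ may be fed entirely by flow originating at the $s(r)$ sources of that level, so nothing forces the number of active $C^S$ sources to grow geometrically from level to level; moreover the given solution may use $C^S$ sources gratuitously, in particular at level $h$, where this creates no obligation at all (sources of the last level have no dependencies), so an adversarial solution can use $C^S$ at every level. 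Second, the fallback of ``dropping every path that originates in $C^S$ at level $h$'' conflates the two directions of the dependency: deleting paths at level $i$ removes obligations imposed on level $i+1$, but it can destroy the coverage that level $i$ itself owes to $L_{i-1}(g_{i-1}\cap S_{i-1})$, and there is no reason the $s(r)$-sourced part of the flow into a given level-$h$ sink carries any constant fraction of its submodular value --- it can be zero. This is precisely the loss-accounting issue you flag as the ``main obstacle,'' and it is where the actual content of the lemma lies; without a mechanism to control it, a per-level factor-$2$ argument either fails or compounds to $2^h$.

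The paper resolves this with a pruning scheme your proposal does not contain. After the level-1 dichotomy it works with marginal-value weights on a path decomposition and marks covered complex players by ``depth'': a $C^T$ vertex is marked at depth $\ell+1$ if it receives more than a $1/(2h)$ fraction of its weight from $s(r)$ sources (depth $1$) or from paths starting at depth-$\ell$ vertices of $C^S$, and all other paths into it are deleted together with their recursively defined subtrees. Every surviving covered sink thus retains a $1/(2h)$ fraction of its weight, which is where the single overall factor $2h\alpha$ comes from. Your geometric-growth counting does appear in the paper, but in a different role: it shows that a player left unmarked after $h-1$ rounds would receive at least half its weight from unmarked players, hence (since every resource has marginal value at most $1/\gamma$) would root a tree of depth $h-1$ with path out-degree at least $\gamma/(2\alpha)$, forcing more than $\beta n^2$ paths in level $h$ and contradicting congestion $\beta$. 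That is, the counting bounds the depth of the marking process on the pruned solution rather than locating a $C^S$-free level of the original one; after the pruning, all retained subtrees bottom out in $s(r)$ sources, which is what yields Property~2.
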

\begin{proof}
    Let $i$ be a level and consider a path decomposition of
    $g_i$ into paths that each send unit flows.
    We define weights for each path corresponding
    to the marginal values. Specifically, for a sink $v$
    and an arbitrary ordering of the paths $P_1,P_2,\dotsc$
    that end in $v$, set
    \begin{align*}
        w(i, P_j) &:=f_p(\delta(v) \cap P_j \mid \delta(v) \cap \{P_1,P_2,\ldots ,P_{j-1}\}) \\
        &= f_p(\delta(v) \cap \{P_1,P_2,\ldots ,P_{j}\})-f_p(\delta(v) \cap \{P_1,P_2,\ldots ,P_{j-1}\})\ .
    \end{align*}
    The total weight of paths ending in a covered sink is at least $1/\alpha$.
    
    For the construction we need the notion of a subtree:
    with each path $P$ (in some level $i$) we can associate the following subtree of paths: if $P$ starts in some source $s(r)$, then the subtree only contains $P$. If it starts in some vertex of $C^S$ corresponding to a complex player $p$, then we associate with $P$ all paths in level $i+1$ that end in the corresponding vertex of $C^T$ and their recursively defined subtrees.
    
    Now, if the paths to sink $t$ in level~$1$ that start in one of the sources $s(r)$ have weight at least $|\delta(t_1)|/(2\alpha)$, then we simply keep these paths and delete all the others as well as all flows in later levels. The obtained flow loses only a factor of $2$ in the coverage at $t$ in level $1$ due to submodularity and satisfies the conditions of (1a) and (2).
   
    Otherwise, it must be that paths of level $1$ that start in $C^S$ and end in $t$ have a total weight of at least $|\delta(t_1)|/(2\alpha)$. We drop all other paths including their subtrees, thus satisfying (1b).
    Next, we proceed to establish Property~(2).
    We assume without loss of generality that the solution is minimal in the sense that it contains exactly the subtree of $t$ in level $1$ and no other paths.
    
    We mark the covered complex player vertices according to their ``depth''. Every vertex in $C^T$ of some level $i\ge 1$, which receives more than a $1/(2h)$ fraction of its weight through paths from a source $s(r)$ in the same level, is marked as \emph{depth-1} vertex, and we delete all other paths that end in them (along with their subtree).
    The corresponding vertices of $C^S$ in level $i-1$ (linked to the marked one via linking edges) are considered to have the same depth of $1$.
    
    Then we proceed iteratively. Having marked all vertices up to depth $\ell$ for some $\ell\ge 1$, we say that every unmarked vertex in $C^T$ of some level $i\ge 1$, which receives more than a $1/(2h)$ fraction of its weight via paths from depth-$\ell$ vertices in $C^S$ of the same level are marked as depth-$(\ell+1)$ vertices (together the corresponding vertices in $C^S$ of level $i-1$), and we delete all other paths that end in them along with the corresponding subtree.

    We claim that if we choose $h$ as in the lemma
    then all covered complex players of level $2$ are marked by iteration $\ell=h-1$. Assume otherwise. Let $p\in C^T$ be the complex player of level $2$, which is not marked. Since this player is unmarked, it receives less than a $1/(2h)$ fraction of his flow from the sources $s(r)$ or level-$\ell'$ players, for any $1\le \ell' < h$. 
    Furthermore, there cannot be depth-$h$ players in level $2$, since the remaining levels are only $h-1$.
    Hence, it must be that at least a $1/2$ fraction of his weight comes from unmarked players.
    Applying the same argument recursively, the player $p$ in level $2$ must be the root of a tree of depth $h-1$ with minimum out-degree at least 
    $\gamma/(2\alpha)$,
    since by construction every resource has marginal value at most $1/\gamma$.

    It follows that the number of paths in level $h$ is at least
    \begin{equation*}
        (\gamma/(2\alpha))^{h-1} > \beta n^2 \ ,
    \end{equation*}
    which is a contradiction since then some edge would have congestion more than $\beta$.
    
    Notice that at the end of this process, none
    of the sources in $C^S$ are used, since the leafs of subtrees induced by players $C^T$ of level $2$
    start in $s(r)$ of some level. In this process, we lost an approximation factor of at most $2h$, which concludes the proof.
\end{proof}

\subsection{From approximate augmentation to canonical instance solution}
Our approach is to start with a solution that covers all complex players $p$
with their private resource $r(p)$ and we iteratively reduce the
number of basic players that are not covered
while maintaining a good coverage of the complex players. During this process, it will be helpful to maintain an assignment $\sigma_{k}:R\rightarrow P$ of resources to players, for every iteration $k = 1,2,\dotsc$ 
\begin{lemma}[Augmentation]\label{lem:aug}
Let $\alpha, \beta, k \in\mathbb N$ and
$h \ge 1 + \log(\beta n^2) / \log(\gamma / (2\alpha))$.
Assume that there exists a solution of value $1$ for the canonical instance with parameter $\gamma$. Given an assignment $\sigma_k$ for this canonical instance where each complex player 
    gets a value of at least $1/(8\alpha \beta h^2 k) - 4k/\gamma$, and a solution to the augmentation problem $I(\sigma_k, h)$ 
    with coverage $\alpha$ and congestion $\beta$,
    one can find in polynomial time an assignment $\sigma_{k+1}$ where each complex player 
    gets a value of at least $1/(8\alpha \beta h^2 (k+1)) - 4(k+1)/\gamma$
    and the number of basic players not covered reduces by a factor of $(1 - 1/(4h\alpha\beta))$.
\end{lemma}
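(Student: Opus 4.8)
\emph{Proof plan.} The plan is to structure the given augmentation solution and then read off a reassignment of resources that defines $\sigma_{k+1}$.

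The given solution $g_1,\dots,g_h$ for $I(\sigma_k,h)$ has coverage $\alpha$ and congestion $\beta$, and, being feasible for the target set $T^{*}=\{t\}$, it covers the sink $t$ of level $1$ (such a solution exists by \Cref{lem:existence_of_flow}, as the canonical instance admits a solution of value $1$). Since $h \ge 1 + \log(\beta n^{2})/\log(\gamma/(2\alpha))$, \Cref{lem:existence_of_structured_flow} turns it in polynomial time into $g'_1,\dots,g'_h$ of coverage $2h\alpha$ and congestion $\beta$ satisfying its two properties. I would then recall what such a solution means on the graph built from the reduced assignment $\overline{\sigma}_k$: there only value-$1$ resources are assigned, so every small resource is free, and every player has out-degree at most $1$, so a path decomposition of each $g'_i$ consists of \emph{reassignment chains} that begin at a source $s(r)$ (a free resource) or at a copy in $C^{S}$ of a complex player whose private resource is taken away, then pass through basic players that each trade one value-$1$ resource for another, and end either at $t$ (at a previously uncovered basic player) or at a copy in $C^{T}$ of a complex player that collects small resources. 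Applying all chains to $\overline{\sigma}_k$ and leaving $\sigma_k$ untouched elsewhere would in principle define $\sigma_{k+1}$ — except that the congestion allows a resource to be routed to up to $\beta$ players.

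To remove the congestion I would proceed level by level from $i=1$ to $h$. At level $i$ I restrict attention to the part of $g'_i$ that serves the sinks still required — the sink $t$ for $i=1$, and the sinks $L_{i-1}(P'_{i-1})$ for $i\ge 2$, where $P'_{i-1}$ collects the complex players whose $C^{S}$-copies in level $i-1$ are used by the already pruned flow of that level — and I replace this part by a binary sub-flow via flow integrality: on the subnetwork formed by the edges of $g'_i$, put capacity $1$ on every vertex other than the required sinks; then $g'_i/\beta$ is a feasible fractional flow, so there is an integral flow on the same edges, which decomposes into vertex-disjoint chains and still covers every required sink. For the submodular sinks I would pass through the marginal-weight path decomposition of the proof of \Cref{lem:existence_of_structured_flow} and use submodularity: if a set of incoming chains carries at least a $1/\beta$ fraction of the original total marginal weight, then the corresponding edge set has function value at least $1/(2h\alpha\beta)$, and since the chains into the sinks of a single level have congestion at most $\beta$ such a selection exists and can be taken vertex-disjoint. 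Pruning $g'_i$ only shrinks $g'_i\cap C^{S}$, hence only shrinks $L_i(g'_i\cap S_i)$, so the coverage requirements passed down the levels stay consistent; and the second property in \Cref{lem:existence_of_structured_flow} ensures $g'_h$ uses no $C^{S}$-source, so every complex player whose private resource is removed is compensated one level later.

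Finally I would do the bookkeeping. For basic players, after pruning the level-$1$ chains still use at least $|\delta(t)|/(2h\alpha\beta)$ of the edges into $t$, so that many previously uncovered basic players become covered; every basic player at an intermediate position of a chain makes a value-$1$ trade and stays covered, and no covered basic player loses its value-$1$ resource without receiving another; hence the number of uncovered basic players drops by a factor of at least $1-1/(2h\alpha\beta)\le 1-1/(4h\alpha\beta)$. For complex players, one that is never a used $C^{S}$-source keeps its private resource, hence value $1$; one whose private resource is removed gains small resources of value at least $1/(2h\alpha\beta)$; additionally a complex player may lose a bounded number of the small resources it held under $\sigma_k$ to chains passing through them, each worth less than $1/\gamma$, and the guaranteed main value may shrink slightly from iteration $k$ to iteration $k+1$ — chasing the constants, each complex player ends with value at least $1/(8\alpha\beta h^{2}(k+1)) - 4(k+1)/\gamma$. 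I expect the main difficulty to be this last step together with making the binary extraction simultaneous over a whole level: one must show the pruned flow still covers \emph{all} sinks in $L_{i-1}(P'_{i-1})$ at once — using the near-tree structure of the subtrees from \Cref{lem:existence_of_structured_flow} and the injectivity of $L_{i-1}$ — and then bound how many of a complex player's previously held small resources get diverted by unrelated chains; this is where the factors $\beta$ and $h^{2}$ in the statement are spent.
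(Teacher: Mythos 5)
Your overall architecture matches the paper's: apply \Cref{lem:existence_of_structured_flow}, extract an integral low-congestion reassignment from the structured flow, and then do bookkeeping for basic and complex players. However, there is a genuine gap at the heart of the bookkeeping, precisely at the spot you flag as ``where the factors $\beta$ and $h^2$ are spent.'' Consider a complex player $p$ that is \emph{not} assigned its private resource under $\sigma_k$ and is covered by a bundle of small resources of total value about $1/(8\alpha\beta h^2 k)$. In $\overline{\sigma}_k$ all of these resources are released, so each of them is a source $s(r)$ available to the flow, and $p$'s $C^T$ copy need not receive any compensation (its $C^S$ copy carries no flow). Your congestion-removal step scales the flow only by $1/\beta$, so the resulting integral chains may start at a constant fraction --- or even all --- of $p$'s former resources, and your claim that $p$ only loses ``a bounded number'' of resources ``each worth less than $1/\gamma$'' is not valid: the number of diverted resources is not $O(1)$, and the total loss can be a constant fraction of $p$'s value, destroying the claimed guarantee $1/(8\alpha\beta h^2(k+1)) - 4(k+1)/\gamma$. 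The paper's proof resolves exactly this by scaling the flow down by $2(k+1)\beta h$ (not by $\beta$) and rounding it per dyadic marginal-value class $E_i(p)$ (with $\lfloor g(E_i(p))/(2(k+1)\beta h)\rfloor \le g'(E_i(p)) \le \lceil\cdot\rceil$, enforced via dummy vertices and flow integrality), so that each such player loses at most a $1/(k+1)$ fraction of its value plus an additive $4/\gamma$; this is also the mechanism that produces the $1/k$ degradation in the statement, which your argument has no source for.

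A second, smaller omission: you assert that no covered basic player becomes uncovered, giving a reduction factor $1-1/(2h\alpha\beta)$. In the case where $g_1$ uses only the sources $s(r)$, the paper must hand the private resource $r(p)$ back to any complex player that lost at least two of its small resources, and $r(p)$ may currently sit with a basic player, who then becomes uncovered; the paper charges this against the two flow units such a player contributes to $t$, via $\max\{2N-N,\ |\delta(t)|/(2h\alpha\beta)-N\}\ge |\delta(t)|/(4h\alpha\beta)$, which is where the factor $4$ in the lemma comes from. Your proposal does not account for these newly uncovered basic players, nor does it distinguish the two structural cases (1a)/(1b) of \Cref{lem:existence_of_structured_flow}, which the paper treats by genuinely different arguments.
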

\begin{proof}
We first take 
away all resources $r \neq r(q)$ from each complex player $q$ to obtain the assignment $\overline{\sigma}_k$, as in the construction of the augmentation instance. With \Cref{lem:existence_of_structured_flow}, we transform the solution to the augmentation problem for the assignment $\overline \sigma_k$ into an augmentation solution $g_1,\dotsc,g_h$ covering $t$ in level $1$
that has
coverage $2h\alpha$, congestion $\beta$, and the structural assumptions
mentioned. In particular, the solution is of one of two types, for which
we derive the assertion separately.

\paragraph*{Flow $g_1$ does not use $C^S$.}
Assume that $g_1$ uses only sources $s(r)$. This means
$g_1$ forms a flow of at least $|\delta(t)|/(2h\alpha)$ from sources $s(r)$ to $t$ with
congestion $\beta$. Consider the fractional flow $g_1/\beta$. This flow has
congestion at most $1$ and flow value at least $|\delta(t)|/(2h\alpha\beta)$. By standard flow
arguments we can then also find an integral flow $g$ from sources $s(r)$ to $t$ with
congestion $1$ and flow value at least $|\delta(t)|/(2h\alpha\beta)$.
This flow can be interpreted as a reassignment of resources.
Let us denote by $\sigma_{k}'$ the assignment obtained from $\overline \sigma_k$ following the reassignment.
$\sigma_{k}'$ covers a $1/(2h\alpha\beta)$-fraction of previously uncovered
basic players and each basic player covered in $\overline \sigma_{k}$ remains covered.

However, it might be that some complex players covered in $\sigma_k$ by small resources are not covered anymore in $\overline \sigma_k$, hence not covered either in $\sigma_{k}'$. Consider such a complex player $p$. We have two cases. 

If at least two of the small resources that are assigned to $p$ in $\sigma_k$ are taken away by some other player in $\sigma_{k}'$, then we give $p$ back the private resource $r(p)$ and modify the assignment $\sigma_{k}'$ accordingly. If $r(p)$ is taken by some basic player in $\sigma_{k}'$, this results in
the basic player being uncovered.

Otherwise, we modify $\sigma_{k}'$ by giving to $p$ all the resources it was assigned in $\sigma_k$, except for the one resource $r$ (if any) that is already used in $\sigma_{k}'$. Notice that in that case, $p$ receives value at least
\begin{equation*}
    f_p(R' \setminus\{r\}) \ge f_p(R') - f_p(\{r\}) \ge \frac{1}{8\alpha k \beta h^2} - \frac{4k}{\gamma} - \frac{1}{\gamma} \ge \frac{1}{8\alpha (k + 1) \beta h^2} - \frac{4(k+1)}{\gamma} \ ,
\end{equation*}
where $R'$ are the resources that were assigned to $p$ in $\sigma_k$.
We let $\sigma_{k+1}$ be the assignment resulting from these
modifications.

To conclude the analysis of this case, let $N$ be the number of complex players which take back their private resource in the first case above. For each one of these players, we uncover one basic player, but each of these players also sends $2$ flow units to $t$ via different paths. Hence, the change in the number of covered basic players is at least
\begin{equation*}
    \max\{2N - N, \frac{|\delta(t)|}{2 h\alpha \beta} - N\} \ge \frac{|\delta(t)|}{4 h \alpha \beta}\ ,
\end{equation*}
where $|\delta(t)|$ is the number of previously uncovered basic players.

\paragraph*{Flow $g_1$ only uses $C^S$.} 
Define $g = g_1 + g_2 + \cdots + g_h$.
Let $X\subseteq C$ be the complex players for which the corresponding vertex in $C^S$
has any outgoing flow in $g$ and let $Y\subseteq C\setminus X$ be the complex players $p$ that are not assigned their private resource $r(p)$ in $\sigma_k$.
By \Cref{lem:existence_of_structured_flow}, the flow $g$ has the following properties:
\begin{enumerate}
    \item $g(e) \le \beta h$ for all $e\in E$,
    \item the incoming flow to $t$ is at least $|\delta(t)| / (2h\alpha)$,
    \item For each $p\in X$ the corresponding copy $p^T\in C^T$ satisfies $f_{p^T}(\delta(p^T) \cap g) \ge 1/(2h\alpha)$.
\end{enumerate}
The last property holds because $C^S$ in the last level is not used.
For $p\in X$ and the corresponding copy $p^T\in C^T$
write $E(p) = g\cap \delta(p^T)$ and
for $p\in Y$ we define $E(p)$ as the edges from $s(r)$ to $r$ for resources $r$ assigned to $p$ in $\sigma_{k}$ (those that were removed from $p$ in $\overline\sigma_k$).

For some $p\in X\cup Y$ let $\{e_1,e_2,\dotsc,e_{\ell}\} = E(p)$ be an arbitrary but fixed order of the set $E(p)$.
We define a partition
$E(p) = E_1(p) \cup E_2(p) \cup \dotsc$ by marginal value, with $E_i(p)$
consisting of all $e_j$ with
\begin{equation*}
2^{-(i-1)} > f_{p^T}(\{e_1,\dotsc,e_j\}) - f_{p^T}(\{e_1,\dotsc,e_{j-1}\}) \ge 2^{-i} \ .
\end{equation*}
Notice that $E_1(p), E_2(p),\dotsc, E_{\lfloor\log \gamma \rfloor}(p)$ are actually empty,
since $f_p(\{e\})\le \gamma$ for all $e\in E(p)$.
We can now find an integer flow $g'$
with
\begin{enumerate}
    \item $g'(e) \in \{0, 1\}$ for all $e\in E$
    \item the incoming flow to $t$ is at least $\frac{|\delta(t)|}{4h\alpha\beta}$
    \item For each $p\in X\cup Y$ and $i = 1,2,\dotsc$ we have
    \begin{equation*}
    \lfloor \frac{g(E_i(p))}{2(k+1)\beta h} \rfloor \le g'(E_i(p)) \le \lceil \frac{g(E_i(p))}{2(k+1)\beta h} \rceil \ .
    \end{equation*}
\end{enumerate}
This holds because the fractional flow $g_1 / 2 + g / (2(k+1)\beta h)$ satisfies
Properties~2 and~3 and has congestion at most $1$: here notice that $g_1$ has no
flow on any of the edge sets $E_i(p)$.
Arguing with dummy vertices for each set $E_i(p)$,
it follows easily by standard flow arguments that this fractional flow is a convex
combination of flows that satisfy~1 and~3, at least one out of
which then also satisfies~2.

We use this flow $g'$ to transform $\sigma_k$ into a new assignment $\sigma_{k+1}$ in the natural way: From the flow $g'$ we can transform $\overline \sigma_k$ (itself obtained from $\sigma_k$) into a new assignment $\sigma_k'$. Then, we modify $\sigma_k'$ by giving to each player $p\in Y$ all the resources that $p$ was assigned in $\sigma_k$ and that are not traversed by any flow in $g'$. This constitutes our final assignment $\sigma_{k+1}$. We will analyze that the new assignment satisfies the properties in \Cref{lem:aug}.

First, each basic player that loses its current resource gets a new
resource, because basic players are not sources. This means that
all basic players that had a resource in the current assignment, still
have one in the new assignment. Furthermore, for each unit of incoming flow to $t$,
there must be one basic player that previously did not have a resource and
now gets one. This means that the number of basic players that do not
have a resource decreases by a factor of $(1 - 1/(4h \alpha \beta))$.

Consider now a complex player $p\in X$, i.e., $p$
may lose its private resource $r(p)$. Then the value of the resources assigned
to $p$ through the reassignment is
\begin{align*}
     f_p(E(p) \cap g')
      &\ge \sum_{i = 1}^{\infty} 2^{-i} \cdot g'(E_i(p)) \\
        &\ge \sum_{i = 1}^{\infty} 2^{-i} \cdot \lfloor \frac{g(E_i(p))}{2(k+1) \beta h} \rfloor \\
        &\ge \frac{1}{4 (k+1) \beta h} \sum_{i = 1}^{\infty} 2^{-(i - 1)} \cdot g(E_i(p)) - \sum_{i = \lfloor \log_2(\gamma) \rfloor}^{\infty} 2^{-i} \\
        &\ge \frac{1}{4 (k+1) \beta h} f_p(E(p) \cap g) - \frac{4}{\gamma} \\
        &\ge \frac{1}{4 (k+1) \beta h} \cdot \frac{1}{2h\alpha} - \frac{4}{\gamma} \\
        &\ge \frac{1}{8\alpha \beta h^2 (k+1)} - \frac{4(k+1)}{\gamma} \ . \\
\end{align*}
Finally, consider a complex player $p\in Y$, i.e., $p$ is not assigned $r(p)$ in
the assignment $\sigma_k$. The reassignment may further take away resources,
but we will argue that $p$ retains a large value. More precisely, the
value of resources that $p$ still has after the reassignment is
\begin{align*}
     f_p(E(p) \setminus E(g'))
      &\ge f_p(E(p)) - \sum_{i = 1}^{\infty} 2^{-(i-1)} \cdot g'(E_i(p)) \\
      &\ge f_p(E(p)) - \sum_{i = 1}^{\infty} 2^{-(i-1)} \cdot \lceil \frac{g(E_i(p))}{2(k+1) \beta h} \rceil \\
      &\ge f_p(E(p)) - \frac{1}{(k+1) \beta h} \sum_{i = 1}^{\infty} 2^{-i} \cdot g(E_i(p)) - \sum_{i = \lceil\log_2(\gamma) \rceil}^{\infty} 2^{-i} \\
      &\ge f_p(E(p)) \left(1 - \frac{1}{(k+1) \beta h}\right) - \frac{4}{\gamma} \\
      &\ge \left(\frac{1}{8\alpha\beta h^2 k} - \frac{4k}{\gamma}\right) \left(1 - \frac{1}{(k+1)}\right) - \frac{4}{\gamma} \\
      &\ge \frac{1}{8\alpha\beta h^2 (k+1)} - \frac{4(k+1)}{\gamma} \ .
\end{align*}
We can conclude that the new assignment $\sigma_{k+1}$ satisfies our desiderata in both cases.
\end{proof}

\begin{proof}[Proof of \Cref{th:aug-main}]
We repeatedly apply \Cref{lem:aug} with $h$ as specified in the lemma.
After at most $k = 4h\alpha\beta\log(n)$ iterations all basic players are covered.
According to the lemma, in the resulting assignment each complex player receives a value of at least
\begin{equation*}
    \frac{1}{32\alpha^2\beta^2 h^3 \log(n)} - \frac{16 h\alpha\beta \log(n)}{\gamma} \ .
\end{equation*}
This is at least $1/\gamma$ provided that
\begin{equation*}
    \gamma \ge 1000 \alpha^3\beta^3h^4\log^2(n) \ . \qedhere
\end{equation*}
\end{proof}

\section{Solving the linear programming relaxation}
\label{sec:linearprogram}
This section is devoted to proving the following theorem.
\begin{theorem}\label{thm:linprog}
    Let $\alpha = 40$ and $\beta = 10\log(n)$.
    There is a Las Vegas algorithm that
    given some $b\in [0, \beta]^{E_{\ge i}}$ determines that $b\in \BLP_{\ge i}(T^*, \alpha, \beta)$ and finds corresponding variables for $\CLP_{\ge i}(T^*,\alpha,\beta)$ or finds a hyperplane
    that separates $b$ from $\BLP_{\ge i}(T^*, 1, 1)$.
    The algorithm makes polynomially many recursive calls on $\BLP_{\ge i+1}$ and has otherwise polynomial time overhead.
\end{theorem}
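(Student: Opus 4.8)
The plan is to treat the task as deciding feasibility of the linear system \eqref{eqn:CLP_paths_01}--\eqref{eqn:CLP_paths_05} in the variables $(x_{v,g},b_{v,g})$ with $b$ fixed, and to run the ellipsoid method on its dual together with column generation. Using \eqref{eqn:CLP_paths_03} to eliminate the level-$i$ part of $b_{v,g}$ and the fact (observed right after the LP box) that the $(x_{v,g},(b_{v,g}(e))_{e\in E_{\ge i+1}})$ satisfying \eqref{eqn:CLP_paths_04} form a polyhedral cone whose extreme rays have $x_{v,g}=1$ and $(b_{v,g}(e))_{e\in E_{\ge i+1}}$ a vertex $b'$ of $\BLP_{\ge i+1}(L_i(g\cap S_i),\alpha,\beta)$, the whole system becomes a covering--packing feasibility LP in nonnegative ``amount'' variables indexed by triples $(v,g,b')$. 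Its dual has a multiplier $y_v\ge 0$ for each constraint \eqref{eqn:CLP_paths_01} and $z_e\ge 0$ for each constraint \eqref{eqn:CLP_paths_02}, and asks whether there is $(y,z)\ge 0$ with
\[
\sum_{v\in T^*} y_v - \sum_{e\in E_{\ge i}} z_e\, b(e) > 0
\]
such that for every $v\in T^*$, every $g\in\C(v,\alpha,\beta)$ and every vertex $b'$ of $\BLP_{\ge i+1}(L_i(g\cap S_i),\alpha,\beta)$,
\[
y_v - \sum_{e\in E_i} z_e\, g(e) - \sum_{e\in E_{\ge i+1}} z_e\, b'(e) \le 0 .
\]
If the ellipsoid method terminates by exhausting volume having used only objective cuts and finitely many of these reduced-cost constraints as cutting planes, each coming from a genuine $g\in\C(v,\alpha,\beta)$ produced by the pricing oracle, then LP duality on that finite subsystem yields primal amounts supported on those columns, hence the desired $\CLP_{\ge i}(T^*,\alpha,\beta)$ variables. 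If instead at some point with positive objective the pricing oracle reports that \emph{no} violated column exists in the tightened regime (see below), the current $(y,z)$ induces the hyperplane $\{b''\colon \sum_e z_e b''(e)\ge \sum_v y_v\}$, which contains $\BLP_{\ge i}(T^*,1,1)$ and is violated by $b$.

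\textbf{The pricing oracle.}
Given $(y,z)$, for each $v$ we must decide whether some $(g,b')$ violates its reduced-cost constraint. For a fixed $g$ the cheapest $b'$ minimizes $\sum_{e\in E_{\ge i+1}} z_e b'(e)$ over $\BLP_{\ge i+1}(L_i(g\cap S_i),\alpha,\beta)$, and by \Cref{lem:separate} this equals $\sum_{u\in L_i(g\cap S_i)} c(u)$ with $c(u):=\min\{\langle z|_{E_{\ge i+1}},b'\rangle \colon b'\in\BLP_{\ge i+1}(\{u\},\alpha,\beta)\}$; since every source of $S_i$ has out-degree $1$ and $L_i$ is a matching, this is a fixed charge $c(s)$ paid once per used source $s$ carrying a linking edge. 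We compute each $c(u)$, and a witness $b'_u\in\BLP_{\ge i+1}(\{u\},\alpha,\beta)$ attaining it, by running the ellipsoid method for linear optimization over $\BLP_{\ge i+1}(\{u\},\cdot)$ using \Cref{thm:linprog} at level $i+1$ (induction on $h-i$, the base case $i=h$ having no recursive term) as the separation oracle: this standard argument returns $b'_u$ certified to lie in $\BLP_{\ge i+1}(\{u\},\alpha,\beta)$ with $\langle z,b'_u\rangle \le \min_{b'\in\BLP_{\ge i+1}(\{u\},1,1)}\langle z,b'\rangle$ — one one-sided bound for forming valid $(\alpha,\beta)$-columns, the other for certifying the absence of violated $(1,1)$-columns. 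It remains, for each $v$, to compare $\min\{\sum_{e\in E_i} z_e g(e)+\sum_{s\in g\cap S_i} c(s)\colon g\in\C(v,\alpha,\beta)\}$ with $y_v$.

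\textbf{The core step: submodular min-cost flow.}
This last minimization is a min-cost $S_i$-to-$v$ flow with congestion bound and the submodular side constraint $f_v(g\cap\delta(v))\ge 1/\alpha$, and I expect it to be the main obstacle. The plan is to relax $g$ to a fractional flow $h$ of congestion at most $1$, replace each fixed source charge $c(s)$ by the linear surrogate $c(s)\,h(e_s)$ (which only under-estimates it, and which an integral flow matches up to its congestion), and replace $f_v$ by its multilinear extension $F_v$. We run continuous Greedy to maximize $F_v(h|_{\delta(v)})$ over the polytope of fractional flows of congestion $\le 1$ and linear cost at most a guessed threshold $\theta\approx y_v$, each step being a single min-cost flow computation. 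If an integral configuration with coverage $1$, congestion $1$ and cost below $\theta$ existed, the fractional optimum of $F_v$ would be at least $1$, so continuous Greedy returns $h^*$ with $F_v(h^*|_{\delta(v)})\ge 1-1/e$; if it returns substantially less, no such configuration exists and the LP dual of the fractional program gives the separating hyperplane for $b$ versus $\BLP_{\ge i}(T^*,1,1)$. Otherwise we randomly round $h^*$ by path decomposition: the rounded flow $g$ has congestion $O(\log n)\le\beta$ with high probability, expected total cost at most that of $h^*$, and coverage $\Omega(1)$ in expectation (a $(1-1/e)$ loss from continuous Greedy, another from path rounding, plus standard concentration), so with constant probability it is a valid $g\in\C(v,\alpha,\beta)$ whose reduced cost is still positive; retrying makes the procedure Las Vegas. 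The concrete constants $\alpha=40$ and $\beta=10\log n$ are chosen to absorb these multiplicative losses together with the gap between certifying membership in $\BLP_{\ge i}(T^*,\alpha,\beta)$ and ruling out membership in $\BLP_{\ge i}(T^*,1,1)$.

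\textbf{Bookkeeping.}
The ellipsoid method runs for a number of iterations polynomial in the dimension $|T^*|+|E_{\ge i}|$, each iteration calling the pricing oracle once; the oracle in turn makes polynomially many recursive calls to the level-$(i+1)$ separation oracle (through the linear optimizations defining the $c(u)$) and polynomially many deterministic min-cost flow computations, so the recursive-call and running-time bounds of \Cref{thm:linprog} follow. Randomization occurs only in the rounding inside pricing, and since every configuration we ever add has been explicitly verified to lie in $\C(v,\alpha,\beta)$ with positive reduced cost, no incorrect output is ever produced, i.e.\ the algorithm is Las Vegas.
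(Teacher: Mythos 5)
Your overall skeleton is the same as the paper's: a Dantzig--Wolfe/column-generation reformulation whose columns are triples (sink, configuration, recursive budget vector), the Ellipsoid method run on the dual with an \emph{approximate} separation oracle that either prices out a violated $(\alpha,\beta)$-column or certifies all $(1,1)$-constraints (giving the separating hyperplane for $b$ versus $\BLP_{\ge i}(T^*,1,1)$), pricing decomposed per source via \Cref{lem:separate} with recursive calls to level $i+1$, continuous Greedy on the multilinear extension over a fractional-flow polytope, and randomized rounding by path decomposition. Your variant of precomputing per-source charges $c(s)$ and folding them into a min-cost-flow objective is a cosmetic repackaging of the paper's $\SEP/\LINSEP$ formulation.

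The genuine gap is in the final rounding step, which is exactly where the paper has to work hardest. You claim that since the rounded flow has expected cost at most that of $h^*$ and expected coverage $\Omega(1)$, ``with constant probability it is a valid $g\in\C(v,\alpha,\beta)$ whose reduced cost is still positive.'' This does not follow: a nonnegative random cost with mean at most $\theta\approx y_v$ can exceed $\theta$ with probability $1-o(1)$, so you get no constant lower bound on the event that the cost is \emph{strictly below} $y_v$ while the coverage is simultaneously $\ge 1/\alpha$. The paper closes this by (i) capping $f_{u^*}$ at $1$ so that expected coverage $1-1/e-\epsilon$ yields coverage $\ge 1/2$ with constant probability, (ii) using Markov to get total cost at most $10\pi_{u^*}$ with probability $0.9$, and then (iii) crucially, greedily partitioning the sampled paths into at most $k+1\le 11$ groups of cost below $\pi_{u^*}$ each and arguing via subadditivity that one group retains value $\ge 1/(4k)\ge 1/40$; this last step needs the assumption $f_{u^*}(\{e\})\le 1/40$ for single edges, which itself requires a separate guessing reduction and is where the concrete value $\alpha=40$ comes from --- none of which appears in your sketch. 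A second, related omission: the column you assemble has recursive part $\sum_{s\in g\cap S_i} b'_s$ with each $b'_s\in\BLP_{\ge i+1}(L_i(\{s\}),\alpha,\beta)$, and by \Cref{lem:separate} this sum certifies membership only if it stays in $[0,\beta]^{E_{\ge i+1}}$ coordinatewise; since each $b'_s$ alone may already use budget up to $\beta$, you need the analogue of the paper's packing constraint $\sum_s d_s(e)\le 1$ in the fractional relaxation together with a second Chernoff bound after rounding (alongside the one for congestion in $E_i$), otherwise the priced column need not be feasible at all.
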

It follows immediately that one can find a solution to $\CLP_{\ge 1}(T^*, \alpha, \beta)$ for a given $b\in [0,\beta]^E$ and $T^*\subseteq T_1$ in time $n^{O(h)}$
with $\alpha,\beta$ as in the theorem.

\subsection{Reduction to separation problem}
In this subsection we reformulate the linear program using Dantzig-Wolfe decomposition, which then can be solved using the Ellipsoid method. This will reduce the proof of
\Cref{thm:linprog} to a certain separation problem, which we then solve in the next
subsection.
Dantzig-Wolfe decomposition, see~\cite{dantzig1960decomposition},
is a reformulation method
of specific block structured linear programs.
The specific variant we are interested in is given in the following lemma.
\begin{lemma}\label{lem:dantzig-wolfe}
Suppose we are given a linear program with $k$ sets of variables
$x^{(1)} \in \mathbb R^{n_1}, x^{(2)} \in \mathbb R^{n_2}, \dotsc, x^{(k)} \in \mathbb R^{n_k}$.
There are local constraints given by $B_i x^{(i)} \le b^{(i)}$
for some $B_i\in \mathbb R^{m_i \times n_i}$ and $b^{(i)}\in \mathbb R^{m_i}$ as well as global constraints of the form $A_1 x^{(1)} + \cdots + A_k x^{(k)} \le b^{(0)}$ for $A_i\in\mathbb R^{m_0 \times n_i}$ and $b^{(0)}\in\mathbb R^{m_0}$.
We assume that each set
$Q_i = \{x^{(i)}\in\mathbb R^{n_i} : B_i x^{(i)} \le b^{(i)}\}$ is a polyhedral cone generated by a finite set of extreme rays $R_i$.     Then this linear program is feasible if and only if
    there is a solution to
    \begin{align*}
        A_1 \sum_{r\in R_1} \lambda^{(1)}_r r + \cdots + A_1 \sum_{r\in R_k} \lambda^{(k)}_r r &\le b^{(0)} \\
        \lambda^{(i)} &\in \mathbb R_{\ge 0}^{R_i} &\forall i = 1,2,\dotsc,k
    \end{align*}
\end{lemma}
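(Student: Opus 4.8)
The statement to prove is Lemma~\ref{lem:dantzig-wolfe}, the Dantzig--Wolfe decomposition for block-structured LPs where each local feasible region is a polyhedral cone.

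\textbf{Plan.} The proof is a standard equivalence argument with two directions. I would first observe that, since each $Q_i = \{x^{(i)} : B_i x^{(i)} \le b^{(i)}\}$ is a polyhedral cone generated by the finite set of extreme rays $R_i$, every point $x^{(i)} \in Q_i$ can be written as a nonnegative combination $x^{(i)} = \sum_{r \in R_i} \lambda^{(i)}_r r$ with $\lambda^{(i)} \in \mathbb{R}_{\ge 0}^{R_i}$ (Minkowski--Weyl / the finite-basis theorem for polyhedral cones). Conversely, any such nonnegative combination lies in $Q_i$ because $Q_i$ is closed under nonnegative scaling and addition.

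\textbf{Forward direction.} Suppose the original LP is feasible, witnessed by $(x^{(1)}, \dotsc, x^{(k)})$ with $B_i x^{(i)} \le b^{(i)}$ for all $i$ and $\sum_i A_i x^{(i)} \le b^{(0)}$. Since $x^{(i)} \in Q_i$, write $x^{(i)} = \sum_{r \in R_i} \lambda^{(i)}_r r$ with $\lambda^{(i)} \ge 0$. Substituting into the global constraint, $\sum_i A_i \sum_{r \in R_i} \lambda^{(i)}_r r = \sum_i A_i x^{(i)} \le b^{(0)}$, so the $\lambda$'s form a feasible solution of the reformulated system.

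\textbf{Reverse direction.} Suppose $(\lambda^{(1)}, \dotsc, \lambda^{(k)})$ solves the reformulated system. Set $x^{(i)} := \sum_{r \in R_i} \lambda^{(i)}_r r$. Since each $r \in R_i$ is an extreme ray of $Q_i$ and $\lambda^{(i)}_r \ge 0$, and $Q_i$ is a cone (hence closed under conic combinations), we get $x^{(i)} \in Q_i$, i.e.\ $B_i x^{(i)} \le b^{(i)}$. The global constraint $\sum_i A_i x^{(i)} \le b^{(0)}$ is then exactly the reformulated inequality. Hence the original LP is feasible. I do not anticipate a genuine obstacle here; the only point requiring care is making sure the cone hypothesis is used in the right place — it is what lets us drop any convexity/normalization constraint on the $\lambda$'s (in a general polyhedron one would need $\sum_r \lambda_r = 1$ for the vertex part), and it is exactly the structural property that Constraint~\eqref{eqn:CLP_paths_04} was engineered to guarantee for the blocks $\BLP_{\ge i+1}$. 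One should also remark that the reformulated system has the advantage of being amenable to the Ellipsoid method: although $R_i$ may be exponentially large, separation over it reduces to the pricing problem solved in the next subsection.
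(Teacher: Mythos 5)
Your proof is correct and matches the paper's own argument: both directions rely on expressing each $x^{(i)}\in Q_i$ as a conic combination of the extreme rays $R_i$ and, conversely, observing that any nonnegative combination of rays lies in the cone $Q_i$. The additional remarks on why the cone hypothesis removes the need for a convexity constraint are accurate but not needed for the equivalence itself.
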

\begin{proof}
    If there is a solution $x^{(1)},\dotsc,x^{(k)}$ to the original linear program, then we can write each $x^{(i)}\in Q_i$ as a conic combination of the extreme rays $R_i$.
    The corresponding weights $\lambda^{(i)}$ form a solution to the reformulation.

    Suppose on the other hand there exists a solution $\lambda^{(1)}, \dotsc, \lambda^{(k)}$ to the reformulation. Then
    $x^{(i)} := \sum_{r\in R_i} \lambda^{(i)} r \in Q_i$ is a solution to the original linear program.
\end{proof}
This reformulation will be useful to us because it can
greatly reduce the number of constraints at the cost of increasing
the number of variables.
To apply it, we first need to make some preparations.
Let $\EX_{\ge i}(T^*, \alpha, \beta) \subseteq \BLP_{\ge i}(T^*, \alpha, \beta)$
be the set of extreme points of the polytope, in particular,
\begin{equation*}
    \mathrm{conv}(\EX_{\ge i}(T^*, \alpha, \beta)) = \BLP_{\ge i}(T^*, \alpha, \beta) \ .
\end{equation*}
Define further
\begin{align*}
Q_{v, g} = \{(x_{v, g}, b_{v, g})\in \mathbb R_{\ge 0}\times \mathbb R_{\ge 0}^{E_{\ge i}} &: (b_{v,g}(e))_{e\in E_{\ge i+1}} \in x_{v,g}\cdot \BLP_{\ge i+1}(L_i(g\cap S_i), \alpha,\beta), \\
&\quad g(e)\cdot x_{v,g} = b_{v,g}(e) \ \forall e\in E_i \} \ .
\end{align*}
Then $Q_{v,g}$ is a polyhedral cone generated
by the extreme rays $(1, g, d)\in \mathbb R_{\ge 0}\times \mathbb R_{\ge 0}^{E_i} \times \mathbb R_{\ge 0}^{E_{\ge i+1}}$ for each $d\in \EX_{\ge i+1}(L_i(g\cap S_i), \alpha, \beta)$.

Given $b\in [0, \beta]^{E_{\ge i}}$ as fixed parameters, we
apply Dantzig-Wolfe decomposition (\Cref{lem:dantzig-wolfe}) to obtain the formulation $\DW$, which is equivalent to $\CLP$
in the sense that one LP is feasible for $b$ if and only if the other is.

\begin{LPBox}[Dantzig-Wolfe decomposition of multi-level configuration LP, $\DW_{\ge i}(T^*, \alpha, \beta)$]
\begin{align}
        \label{eqn:CLPDW_paths_01}
        \sum_{g \in \C(v, \alpha, \beta)} \sum_{d\in \EX_{\ge i+1}(L_i(g\cap S_i), \alpha, \beta)} y_{v, g, d} &\ge 1 && \forall v\in T^* \\ 
        \label{eqn:CLPDW_paths_02}
        \sum_{v\in T^*} \sum_{g \in \C(v, \alpha, \beta)} \sum_{d\in \EX_{\ge i+1}(L_i(g\cap S_i), \alpha, \beta)} g(e) \cdot y_{v, g, d} & \le b(e) && \forall e \in E_i,\\
        \label{eqn:CLPDW_paths_03}
        \sum_{v\in T^*} \sum_{g \in \C(v, \alpha, \beta)} \sum_{d\in \EX_{\ge i+1}(L_i(g\cap S_i), \alpha, \beta)}d(e) \cdot y_{v,g,d} & \le b(e) && \forall e \in E_{\ge i+1},\\
        \label{eqn:CLPDW_paths_04}
        y_{v, g, d} &\ge 0 && \forall v\in T^*, g\in \C(v, \alpha, \beta), \\
        & && d\in \EX_{\ge i+1}(L_i(g\cap S_i), \alpha, \beta) \notag 
\end{align}
\end{LPBox}

It is now sufficient to either find a solution of $\DW$ or
a hyperplane that separates $b$ from all $b'$ for
which $\DW$ is feasible, i.e., from $\BLP_{\ge i}(T^*, \alpha, \beta)$.
Since $\DW$ has only polynomially many
constraints, it is useful to consider the dual, which has polynomial dimension.
\begin{LPBox}[Dual of $\DW_{\ge i}(T^*, \alpha, \beta)$]
\begin{align}
    \min \sum_{e\in E_{\ge i}} b(e) \mu_e &- \sum_{v\in T^*} \pi_v \\
  \sum_{e\in E_i} g(e) \mu_e + \sum_{e\in E_{\ge i+1}} d(e) \mu_e &\ge \pi_v && \forall v\in T^*, g\in \C(v, \alpha, \beta), \label{eq:separation} \\
  & && d\in \EX_{\ge i+1}(L_i(g\cap S_i), \alpha, \beta) \notag \\
  \pi_v &\ge 0 && \forall v\in T^* \\
  \mu_e &\ge 0 && \forall e\in E_{\ge i}
\end{align}
\end{LPBox}

Our plan is to apply the Ellipsoid method to the dual using an approximate
separation problem, which we introduce below. 
\paragraph*{Separation problem.}
Given dual variables $\pi_v, \mu_e$
find $u^*\in T^*$, $g^*\in \C(u^*, \alpha, \beta)$,
and $d^*\in \BLP_{\ge i+1}(L_i(g\cap S_i), \alpha, \beta)$ such that 
\begin{equation*}
    \sum_{e\in E_i} g^*(e) \cdot \mu_e + \sum_{e\in E_{\ge i+1}} d^*(e) \cdot \mu_e < \pi_{u^*}
\end{equation*}
or determine that $\pi_v, \mu_e$
satisfy all constraints \eqref{eq:separation} in the dual of $\DW_{\ge i+1}(T^*, 1, 1)$.

\begin{lemma}\label{lem:sepDW}
    Given an algorithm for the separation problem and some $b\in [0, \beta]^{E_{\ge i}}$, we can find with polynomially many calls to the algorithm and polynomial running time overhead a solution to $\DW_{\ge i}(T^*, \alpha, \beta)$ or 
    find a feasible solution $\pi, \mu$ to the dual of $\DW_{\ge i}(T^*, 1, 1)$
    with negative objective (which proves
    that $\DW_{\ge i}(T^*, 1, 1)$ is infeasible).

    \begin{proof}
        We have $\C(v, 1, 1)\subseteq \C(v,\alpha,\beta)$, and we can swap the requirement of $d\in \EX_{\ge i+1}(L_i(g\cap S_i), \alpha, \beta)$ for $d^*\in \BLP_{\ge i+1} (L_i(g\cap S_i), \alpha, \beta)$, since these also form valid constraints for the dual. Hence, the feasible region $P$ of the dual of $\DW_{\ge i}(T^*, \alpha, \beta)$ is contained in the feasible region~$Q$ of the dual of~$\DW_{\ge i+1}(T^*, 1, 1)$. An algorithm for the separation problem can be seen as an \emph{approximate separation oracle} in the sense that, given dual variables as input, it either detects membership of the larger polyhedron $Q$ or outputs a hyperplane separating the input from the smaller polyhedron~$P$.
        
        Notice that the all-zero-vector is feasible for the dual. Furthermore, if $(\pi, \mu)$ is a feasible solution with objective $-1$, then $(c\cdot \pi, c\cdot \mu)$ is also feasible for the dual and has an objective value of $-c$ for any $c > 0$, i.e., the dual is unbounded. 
        It follows that $\CLP_{\ge i+1} (L_i(g\cap S_i), \alpha, \beta)$ is feasible if and only if its dual has no solution of value $-1$.
        Let 
        \[P'=\{(\pi, \mu): (\pi, \mu) \in P, \ \sum_{e\in E_{\ge i}} b(e) \mu_e - \sum_{v\in T^*} \pi_v = -1\},\] \[Q'=\{(\pi, \mu): (\pi, \mu)\in Q, \ \sum_{e\in E_{\ge i}} b(e) \mu_e - \sum_{v\in T^*} \pi_v = -1\}.\] 
        We simulate the Ellipsoid method on $P'$ with the given approximate separation oracle (over $P'$ and $Q'$) in order to find a feasible point in $Q'$, if there exists one.
        Recall that in each iteration the Ellipsoid method presents
        a point $(\pi, \mu)$ and it expects us to either determine that $x\in P'$ or a hyperplane separating $x$ from $P'$. In each iteration, we apply the approximate separation oracle which either determines $(\pi, \mu) \in Q'$ or finds a hyperplane separating $(\pi, \mu)$ from $P'$.
        In the latter case, we continue the Ellipsoid method adhering to its requirements. In the former case we can stop the algorithm, since we found the required solution. Note that we terminate earlier than one would normally when running Ellipsoid on $P'$,
        since we cannot actually decide whether $(\pi, \mu)\in P'$ or not.

        \begin{comment}
        By applying the updating step of the original ellipsoid method, we maintain an ellipsoid that is known to contain $P'$, just as the original one does. Notice that this is not necessarily the case for $Q'$, parts of it could be cut off by the separating hyperplanes at some point. 
        After a polynomial number of iterations (i.e. polynomially many calls to the algorithm for the original separation problem), 
        \noteSM{don't we need a lower bound on the volume of P' here?}
        the adapted ellipsoid method either finds a point $(\pi, \mu) \in Q'$ or stops because we attain infeasibility of the separating hyperplanes for $P'$.
        \noteSM{Stopping argument? Again via volume argument?}
        \noteSM{TBC}           
        \end{comment}
        
        As mentioned above, the existence of $(\pi, \mu) \in Q'$ implies infeasibility of $\DW_{\ge i}(T^*, 1, 1)$. If such a point can not be found, it is enough to consider the primal variables that are encountered as dual constraints when solving the dual. Since we only had polynomially many calls to the approximate separation oracle, one can solve the primal restricted to these variables.
    \end{proof}
\end{lemma}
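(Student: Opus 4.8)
The plan is to treat the separation algorithm as an \emph{approximate separation oracle} for the dual of $\DW_{\ge i}(T^*,\alpha,\beta)$ and to run the Ellipsoid method with it, in the standard way configuration LPs are solved. Let $P$ be the feasible region of the dual of $\DW_{\ge i}(T^*,\alpha,\beta)$ and $Q$ that of the dual of $\DW_{\ge i}(T^*,1,1)$. Since $\C(v,1,1)\subseteq\C(v,\alpha,\beta)$ and, for such $g$, $\EX_{\ge i+1}(L_i(g\cap S_i),1,1)\subseteq\BLP_{\ge i+1}(L_i(g\cap S_i),\alpha,\beta)$, and since replacing the index set $\EX_{\ge i+1}$ by $\BLP_{\ge i+1}$ in the constraints \eqref{eq:separation} leaves the feasible region unchanged (the left-hand side is linear in $d$, so the inequality at a convex combination is implied by those at extreme points), every defining inequality of $Q$ is a defining inequality of $P$; hence $P\subseteq Q$. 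Given a point $(\pi,\mu)$ for which we have first carried out the trivial separation step for $\pi,\mu\ge 0$, the separation algorithm either returns a triple $(u^*,g^*,d^*)$ exhibiting a violated constraint \eqref{eq:separation} of the dual of $\DW_{\ge i}(T^*,\alpha,\beta)$ --- equivalently a hyperplane separating $(\pi,\mu)$ from $P$ --- or certifies that all constraints \eqref{eq:separation} of the dual of $\DW_{\ge i}(T^*,1,1)$ hold, i.e.\ that $(\pi,\mu)\in Q$. This is exactly the interface needed to run Ellipsoid, except that membership can be tested only for the larger set $Q$.

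Next I would homogenize in the objective. The dual is invariant under positive scaling, so $\DW_{\ge i}(T^*,\alpha,\beta)$ is feasible iff $P'=\{(\pi,\mu)\in P:\sum_{e}b(e)\mu_e-\sum_{v}\pi_v=-1\}$ is empty, and $\DW_{\ge i}(T^*,1,1)$ is infeasible iff $Q'=\{(\pi,\mu)\in Q:\sum_{e}b(e)\mu_e-\sum_{v}\pi_v=-1\}$ is nonempty; note $P'\subseteq Q'$. I would run the Ellipsoid method on $P'$: maintain an ellipsoid that contains $P'$, and at each iteration hand its center to the separation algorithm (after checking $\pi,\mu\ge 0$). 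If the algorithm reports membership, stop; otherwise it returns a hyperplane valid for $P'$, with which we perform the usual Ellipsoid update. Over all iterations, record the polynomially many primal variables $y_{u^*,g^*,d^*}$ associated with the returned cuts.

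There are two ways this ends, each delivering one of the alternatives of the lemma. If the separation algorithm reports membership at some center, that center lies in $Q'$, hence is a feasible solution of the dual of $\DW_{\ge i}(T^*,1,1)$ with objective $-1<0$, which certifies that $\DW_{\ge i}(T^*,1,1)$ is infeasible. Otherwise, after the polynomially many iterations prescribed by the Ellipsoid volume bound, the method concludes that the polyhedron cut out by the finitely many collected inequalities together with $\pi,\mu\ge 0$ and the objective equation is empty. But this polyhedron is precisely the dual of the primal $\DW_{\ge i}(T^*,\alpha,\beta)$ \emph{restricted} to the recorded variables, intersected with the objective equation; its emptiness means the restricted dual has no point of objective $-1$, hence, by scaling, none of negative objective, so the restricted primal is feasible by LP duality. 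Solving this polynomial-size LP and padding the remaining $y$-variables with zeros yields a feasible solution of $\DW_{\ge i}(T^*,\alpha,\beta)$. Throughout, only polynomially many oracle calls and polynomial overhead are used.

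The step I expect to be the main obstacle is the quantitative bookkeeping of the Ellipsoid method run with an approximate oracle. One needs polynomial bounds on the bit complexity of all inequalities involved --- the integers $g(e)\in\{0,\dots,\beta\}$, the parameters $b(e)\in[0,\beta]$, and the numbers $d^*(e)$ produced by the (polynomial-time) separation algorithm --- so that both the iteration count is polynomial and a nonempty feasible region (whether $P'$ or the restricted dual polytope, each of which is pointed because it lies in the nonnegative orthant) is guaranteed to meet a ball of polynomial radius; this is what validates the ``volume below threshold $\Rightarrow$ empty'' conclusion in both applications. Moreover, since these polyhedra lie in the affine hyperplane $\{\text{objective}=-1\}$ they are not full-dimensional, so one applies the standard $2^{-\mathrm{poly}}$-perturbation of all inequalities to make the region full-dimensional without changing its emptiness status, exactly as in the classical proof that linear programs specified by separation oracles are solvable in polynomial time. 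Granting these routine but somewhat delicate estimates, the argument above is complete.
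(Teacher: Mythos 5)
Your proposal is correct and follows essentially the same route as the paper: treat the separation algorithm as an approximate oracle between the nested dual regions $P\subseteq Q$, restrict to the slice where the dual objective equals $-1$, simulate the Ellipsoid method on $P'$, and either output a point of $Q'$ (certifying infeasibility of $\DW_{\ge i}(T^*,1,1)$) or solve the primal restricted to the polynomially many variables encountered as cuts. Your added remarks on bit-complexity, the volume argument, and perturbing the non-full-dimensional slice only make explicit the routine Ellipsoid details the paper leaves implicit.
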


\begin{comment}
    \paragraph{Duals and approximate separation.}
    A typical approach to solve an implicitly given linear program with an exponential number of variables and polynomial number of constraints is to run the Ellipsoid method
   on the dual by repeatedly solve the separation problem. 
   It is enough to consider the primal variables that are encountered
   as dual constraints when solving the dual. Restricting to these variables one can solve the primal then. 
   In our case, we will only be able to solve the separation problem approximately. The above approach, however, can still be useful.
   This as well as the properties we need from the multilinear extension are summarized in the lemma below.
\end{comment}
    
Assuming we can solve the separation problem efficiently, this implies \Cref{thm:linprog}. Indeed, if the algorithm from \Cref{lem:sepDW} is not successful, then we find a feasible solution $\pi$, $\mu_e$ for the dual of $\DW_{\ge i}(T^*, 1, 1)$ such that $\sum_{e\in E_{\ge i}} b(e)\mu_e - \sum_{v\in T^*} \pi_v < 0$. Notice that for any $b'\in \BLP_{\ge i}(T^*, 1, 1)$ the dual has no negative solutions, thus $\sum_{e\in E_{\ge i}} b'(e)\mu_e - \sum_{v\in T^*} \pi_v \ge 0$. Hence, this provides us with a hyperplane that separates $b$ from $\BLP_{\ge i}(T^*, 1, 1)$ as required in \Cref{thm:linprog}.

\subsection{Separation via multilinear extension}
Our goal is now to devise an algorithm that solves
the separation problem stated above.
We will formulate a continuous relaxation of the separation problem,
solve and round it, but in order to do so we need to first introduce
the concept of \emph{multilinear extension}.
Let $f: \{0,1\}^n \rightarrow \R_{\ge 0}$ be a submodular function.
We want to extend $f: \{0,1\}^n \rightarrow \R$ to the domain $[0,1]^{n}$, since we will be working with continuous relaxations.
There are several natural extensions, one
of which is known as the \emph{multilinear extension}.
The multilinear extension $F: [0,1]^n \rightarrow \R$ is defined by
\begin{equation*}
    F(x) = \sum_{S \subseteq [n]} f(S) \prod_{i \in S} x_i \prod_{j \notin S} (1-x_j)  \ .
\end{equation*}
It is equivalent to set $F(x)=\mathbb{E}[f(X)]$ where $X$ is a random set with elements appearing independently with probabilities $x_i$. 
Notice that the definition of $F$ involves summing over all subsets of $[n]$, but a good approximation can be computed by sampling from the said distribution.
Although $F$ is not convex, there are strong results for
approximately maximizing it over polytopes, most notable
the continuous Greedy algorithm~\cite{vondrak2008optimal}.
We will need a non-standard variant that
is summarized in the lemma below.
\begin{lemma}\label{lem:cgreedy}
    Let $P \subseteq Q\subseteq [0, 1]^n$ be downward-closed polytopes\footnote{A polyope is downward-closed if for any $x$ in the polytope and any $y$ with $0 \le y_i \le x_i$ for each component $i$, $y$ is also in the polytope.}
    and suppose that we can approximately optimize over $Q$ in the following way: given some $c\in \mathbb R^n$ we can find some $x\in Q$
    such that $c^T x \ge c^T y$ for all $y\in P$ or determine that $P$ is empty. Let $F$ be the multilinear extension of a monotone submodular function $f$ and let $x^* = \mathrm{argmax}_{x\in P}F(x)$. Assume furthermore that $f(\{i\}) \le F(x^*)$ for any element $i\in [n]$. Then, there is a polynomial time Las Vegas algorithm that finds some $x\in Q$ such that
    $F(x) \ge (1 - 1/e - \epsilon) F(x^*)$.
\end{lemma}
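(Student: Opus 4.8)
The plan is to run the \emph{continuous Greedy} algorithm of Vondr\'ak~\cite{vondrak2008optimal} on the multilinear extension $F$, but with its usual exact linear-optimization step over $P$ replaced by a single call to the given approximate oracle over $Q$. Concretely, I would maintain a point $y_k \in [0,1]^n$ over $N$ discretized time steps of size $\delta = 1/N$, starting from $y_0 = 0$. In step $k$ I would first estimate the gradient $\nabla F(y_k)$ coordinatewise, using that $\partial_i F(y) = \mathbb{E}_{X \sim y}[f(X\cup\{i\}) - f(X\setminus\{i\})] \in [0, f(\{i\})]$ and averaging $\mathrm{poly}(n, 1/\epsilon)$ independent samples; then I would feed the (automatically non-negative) estimate as the cost vector to the oracle, obtaining a direction $v_k \in Q$ with cost value at least that of every point of $P$, in particular at least that of $x^*$; finally I would set $y_{k+1} = y_k + \delta v_k$ and output $y_N$. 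Since $y_N = \delta \sum_k v_k$ is an average of points of $Q$, it lies in $Q$; the running time is $N$ oracle calls plus polynomial overhead; and the oracle never reports ``$P$ empty'' because $x^* \in P$.

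For correctness I would use the one standard inequality for multilinear extensions of monotone submodular functions, $\nabla F(y)^T z \ge F(y\vee z) - F(y)$ for all $y, z \in [0,1]^n$; with monotonicity this gives $\nabla F(y_k)^T x^* \ge F(x^*\vee y_k) - F(y_k) \ge F(x^*) - F(y_k)$. This is the precise place where the weakened oracle suffices: the analysis only ever compares the chosen direction against the \emph{fixed} competitor $x^* \in P$, and never against $\arg\max_{v\in P}\nabla F(y_k)^T v$ (which we cannot compute and which may not lie in $Q$), so the oracle's promise --- ``return a point of $Q$ beating every point of $P$'' --- is exactly what is needed. A second-order Taylor expansion of $F$ along the step, using that $|\partial^2_{ij} F| \le \min(f(\{i\}), f(\{j\})) \le \max_i f(\{i\})$ by submodularity and monotonicity, gives $F(y_{k+1}) \ge F(y_k) + \delta\,\nabla F(y_k)^T v_k - O(\delta^2 n^2 \max_i f(\{i\}))$. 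Writing $g_k = F(x^*) - F(y_k)$ and unrolling the resulting recursion $g_{k+1} \le (1-\delta) g_k + O(\delta^2 n^2 \max_i f(\{i\}))$ from $g_0 = F(x^*)$ (here $F(0) = f(\emptyset) = 0$) yields $g_N \le e^{-1} F(x^*) + O(\tfrac{n^2}{N}\max_i f(\{i\}))$, i.e.\ the usual $1-1/e$ bound plus a discretization error; adding the linear-optimization error caused by the inexact gradient contributes one further error term of the same type.

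The hypothesis $f(\{i\}) \le F(x^*)$ is used precisely to \emph{calibrate} these errors: since $F(x^*)$ is not known to the algorithm, the step count $N$ and the sample count must be fixed in terms of $n$ and $\epsilon$ only, and the hypothesis is exactly what turns the \emph{absolute} errors from Taylor's theorem and from Hoeffding into \emph{relative} errors against $F(x^*)$. Concretely, $\max_i f(\{i\}) \le F(x^*)$ makes $N = \Theta(n^2/\epsilon)$ suffice to bound the discretization error by $\tfrac{\epsilon}{2} F(x^*)$; and since every sampled marginal $f(X\cup\{i\})-f(X\setminus\{i\})$ lies in $[0, f(\{i\})] \subseteq [0, F(x^*)]$, a Hoeffding bound with $\mathrm{poly}(n,1/\epsilon)$ samples per coordinate gives additive gradient error $\le \tfrac{\epsilon}{8n} F(x^*)$ per coordinate, hence at most $\tfrac{\epsilon}{2} F(x^*)$ of extra loss when the estimate is passed to the oracle. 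Together these give $F(y_N) \ge (1 - 1/e - \epsilon) F(x^*)$. All randomness is in the $Nn$ Hoeffding estimates; a union bound makes them simultaneously accurate with probability $1 - 2^{-\Omega(n)}$, conditioned on which the output is correct, so by re-running (or absorbing the tiny failure probability into the calling procedure) we obtain the stated Las Vegas algorithm.

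The main obstacle is exactly this error bookkeeping --- getting a clean bound while simultaneously (i) being unable to evaluate $F$ or $\nabla F$ exactly, (ii) being unable to optimize exactly, not even over $Q$, but only to ``beat $P$ with a point of $Q$'', and (iii) having no handle on the scale $F(x^*)$. The two ideas that make all of this go through at once are that the continuous-Greedy potential argument only needs the step direction to beat the single point $x^*$, and that $f(\{i\}) \le F(x^*)$ downgrades every unavoidable error to a fixed fraction of $F(x^*)$.
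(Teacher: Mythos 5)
Your proposal is correct and matches the paper's approach in all essential respects: you run continuous Greedy over the weak oracle, use the fact that the potential argument only needs the step direction to beat the single fixed competitor $x^* \in P$ (rather than $\arg\max_{v\in P}\nabla F(y)^Tv$, which may not be computable and may not lie in $Q$), sample marginals to estimate the gradient, and invoke the hypothesis $f(\{i\}) \le F(x^*)$ precisely to turn absolute sampling and discretization errors into relative errors against the unknown scale $F(x^*)$. The paper's Appendix~A proof (\Cref{lem:continuous_greedy}, \Cref{lem:continuous_greedy_estimates}, \Cref{lem:continuous_greedy_main}) differs only in minor bookkeeping: it uses the coupling $R(t)\cup D(t) \subseteq R(t+\delta)$ together with monotonicity to bound the per-step gain instead of your second-order Taylor expansion, and it fixes $\delta = 1/(10n^2)$ rather than parametrizing by $\epsilon$, ending up with the bound $(1-1/e-1/(2n))F(x^*)$; both routes yield the claim for any fixed $\epsilon>0$. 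One caveat applies equally to both your write-up and the paper's: the argument as given yields a Monte Carlo guarantee (correct with high probability), and neither quite spells out the verification step that would formally upgrade it to Las Vegas --- this is not a gap you introduced.
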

The proof of the lemma is almost identical to the classical analysis
of the continuous Greedy algorithm~\cite{vondrak2008optimal}, see Appendix~\ref{app:continuous_greedy}.

We will now reformulate the separation problem as maximizing a submodular
function over a polytope.
Assume that $\pi_v, \mu_e$ do not satisfy all constraints \eqref{eq:separation} for the dual of
$\DW_{\ge i+1}(T^*, 1, 1)$, i.e., there exists some $u^*\in T^*$, $g^* \in \C(u^*, 1, 1)$, and $d^*\in \BLP_{\ge i+1}(L_i(g\cap S_i),1,1)$ with
\begin{equation*}
    \sum_{e\in E_i} g^*(e) \mu_e + \sum_{e\in E_{\ge i+1}} d^*(e) \mu_e < \pi_{u^*} \ .
\end{equation*}
We can assume that we know $u^*$ through guessing. Our goal is to approximately
find $g^*$ and $d^*$.
Using the definition of a configuration, the values $g^*$ and $d^*$ are feasible for the following system and have a value of at least $1$.
\begin{align}
    \max \ & f_{u^*}(g \cap \delta(u^*)) \\
    \sum_{e\in E_{i}} g(e) \mu_e + \sum_{e\in E_{\ge i+1}} d(e) \mu_e &\le \pi_{u^*} \label{eq:separation-constraint-1} \\
    g(\delta^-(v)) &= g(\delta^+(v)) &&\forall v\in V_i\setminus (S_i \cup T_i) \\
    d &\in \BLP_{\ge i+1}(L_i(g \cap S_i), 1, 1) \label{eq:separation-rec-1} \\
    g(e) &\in \{0, 1\} &&\forall e\in E_i
\end{align}
We need to find a solution of value at least $1/\alpha$ and we can make use of
congestion $\beta$.
For technical reasons we will assume without loss of generality that $f_{u^*}$
is upper bounded by $1$, which is possible by replacing $f_{u^*}$ with
$f_{u^*}'(S) = \min\{1, f_{u^*}(S)\}$, since this preserves monotonicity, submodularity and the existence of a solution of value $1$.
We further assume that each individual element has a small value, more precisely,
$f_{u^*}(\{e\}) \le 1/20$ for each $e\in \delta(u^*)$. The careful reader
will have noticed that the reduction in \Cref{sec:Reduction} anyway guarantees
this property. However, it is also easy to establish this assumption without
adding more technical restrictions to the definition of the augmentation problem\footnote{
If $g^*(e) = 1$ for some $e\in\delta(u^*)$ with $f_{u^*}(\{e\}) > 1/40$
we guess $e$ as well as the source $s\in S_i$, from which the flow on $e$ comes (in an arbitrary path decomposition of $g^*$. We compute a shortest path from $s$ to $e$
with weights $\mu_e$ and we compute some $d\in B_{\ge i+1}(L_i(\{s\}), \alpha, \beta)$ minimizing $\sum_{e\in E_{\ge i+1}} d(e) \mu_e$ using Ellipsoid method.
This yields a solution of the separation problem for $\alpha \ge 1/40$.
In the other case we can remove all $e\in \delta(u^*)$ with $f_{u^*}(\{e\}) > 1/40$, since they are not used.
}.

By \Cref{lem:separate} we can replace Constraint~\eqref{eq:separation-rec-1}
with the following equivalent constraints and additional variables.
The main idea is that we introduce a binary variable $y_s\in \{0, 1\}$ for each $s\in S_i$, which describes whether $s\in g\cap S_i$.
\begin{align*}
    d_s &\in y_s \cdot \BLP_{\ge i+1}(L_i(\{s\}), 1, 1) &&\forall s\in S_i \\
    y_s &= g(\delta^+(s)) &&\forall s\in S_i \\
    y_s &\in \{0, 1\} &&\forall s\in S_i \\
    d(e) &= \sum_{s\in S_i} d_s(e) &&\forall e\in E_{\ge i+1} \\
    d(e) &\in [0, 1] &&\forall e\in E_{\ge i+1} \\
    d_s(e) &\in [0, 1] &&\forall s\in S_i, e\in E_{\ge i+1}
\end{align*}
In order to find an approximate solution
we consider the continuous relaxation using the multi-linear extension as described below.
\begin{LPBox}[Relaxed separation problem $\SEP(u^*, \alpha, \beta)$ with multilinear extension]
\begin{align}
    \label{eq:sep-obj}
    \max \ & F_{u^*}(\ (g(e))_{e\in \delta(u^*)} \ ) \\
    \sum_{e\in E_i} g(e) \mu_e + \sum_{e\in E_{\ge i+1}} \sum_{s\in S_i} d_s(e) \mu_e &\le \pi_{u^*} \label{eq:separation-constraint} \\
    g(\delta^-(v)) &= g(\delta^+(v)) &&\forall v\in V_i\setminus (S_i \cup T_i) \\
    \label{eq:separation-rec} d_s &\in y_s \cdot \BLP_{\ge i+1}(L_i(\{s\}), \alpha, \beta) &&\forall s\in S_i \\
    y_s &= g(\delta^+(s)) &&\forall s\in S_i \\
    \sum_{s\in S_i} d_s(e) &\le 1 &&\forall e\in E_{\ge i+1} \\
    g(e) &\in [0, 1] &&\forall e\in E_i \\
    y_s &\in [0, 1] &&\forall s\in S_i \\
    d_s(e) &\in [0, 1] &&\forall s\in S_i, e\in E_{\ge i+1}
\end{align}
\end{LPBox}
Here $F_{u^*}$ is the multilinear extension of $f_{u^*}$. 
If $F_{u^*}$ is replaced by a linear objective, then we can
solve $\SEP$ approximately in the following sense.
\begin{lemma}
    Consider the system $\LINSEP(u^*, \alpha, \beta, c)$, where
    \eqref{eq:sep-obj} in $\SEP(u^*, \alpha, \beta)$ is replaced by a linear objective $\max \sum_{e\in\delta(u^*)} c_e g(e)$ for $c\in \mathbb R^{\delta(u^*)}$.

    Let $\alpha, \beta\in\mathbb N$.
    We can find a solution for $\LINSEP(u^*, \alpha, \beta, c)$ with value
    at least the optimum of $\LINSEP(u^*, 1, 1, c)$ using polynomial running time overhead and a polynomial number of queries that for a given $s\in S_i$ and $d_s\in [0, \beta]^{E_{\ge i+1}}$ determine either $d_s\in \BLP_{\ge i+1}(L_i(\{s\}), \alpha, \beta)$ or return a hyperplane separating $d_s$ from
    $\BLP_{\ge i+1}(L_i(\{s\}), 1, 1)$.
\end{lemma}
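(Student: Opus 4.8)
Write $P$ and $Q$ for the feasible regions of $\LINSEP(u^*,1,1,c)$ and $\LINSEP(u^*,\alpha,\beta,c)$. Since any solution for $\BLP_{\ge i+1}(\cdot,1,1)$ is also a solution for $\BLP_{\ge i+1}(\cdot,\alpha,\beta)$ (coverage $1\le\alpha$ and congestion $1\le\beta$), and the two systems agree on every other constraint, we have $P\subseteq Q$; both polytopes lie in $[0,1]^n$ for the appropriate $n$, and both contain the origin (reading Constraint~\eqref{eq:separation-rec} as the polyhedral cone $\{(y_s,d_s):d_s\in y_s\cdot\BLP_{\ge i+1}(L_i(\{s\}),\cdot,\cdot)\}$, which contains $(0,0)$). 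The plan is to run the Ellipsoid method to maximize $\sum_{e\in\delta(u^*)}c_e\,g(e)$ over $P$, but only approximately: we output a point of $Q$ whose objective value is at least the optimum over $P$. This mirrors the proof of \Cref{lem:sepDW} on the primal side and is exactly what \Cref{lem:cgreedy} expects of the set $Q$.

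The key ingredient is an approximate separation oracle for the pair $(P,Q)$: given $x=(g,y,d)$ it either certifies $x\in Q$ or returns a hyperplane separating $x$ from $P$. All constraints of $\LINSEP$ except \eqref{eq:separation-rec} are explicit linear (in)equalities, identical in $P$ and $Q$, and a violated one is reported verbatim. For Constraint~\eqref{eq:separation-rec} and a fixed $s\in S_i$: if $y_s=0$, membership forces $d_s=0$, so if some $d_s(e)>0$ we return the inequality $d_s(e)\le y_s$, which is valid for $P$ (there $\BLP_{\ge i+1}(\cdot,1,1)\subseteq[0,1]^{E_{\ge i+1}}$ gives $d_s(e')\le y_s$ for all $e'$) and violated by $x$; the same inequality is returned if $y_s>0$ but $d_s(e)>\beta y_s$ for some $e$. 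Otherwise $d_s/y_s\in[0,\beta]^{E_{\ge i+1}}$ and we call the recursive oracle on $(s,d_s/y_s)$: if it reports $d_s/y_s\in\BLP_{\ge i+1}(L_i(\{s\}),\alpha,\beta)$ the constraint holds in the $Q$-sense; if it returns a hyperplane $a^T z\le b_0$ valid for $\BLP_{\ge i+1}(L_i(\{s\}),1,1)$ and violated by $d_s/y_s$, then $a^T d_s\le b_0 y_s$ is valid for $P$ (as $x\in P$ forces $d_s/y_s\in\BLP_{\ge i+1}(\cdot,1,1)$) and violated by $x$, so we return it. If no constraint is flagged, $x\in Q$ is certified.

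With this oracle we solve the optimization problem by binary search on the objective value. For a threshold $t$, run the Ellipsoid method on $P^{(t)}:=P\cap\{x:\sum_e c_e g(e)\ge t\}$, adding the trivial check of the threshold inequality to the oracle. As in \Cref{lem:sepDW}, after polynomially many iterations the run either produces a point of $Q$ with $\sum_e c_e g(e)\ge t$ or, by the volume argument, certifies $P^{(t)}=\emptyset$; since $P$ is a polytope containing the origin, $P^{(t)}\ne\emptyset$ whenever $t\le t^*:=\max_{x\in P}\sum_e c_e g(e)$, so the run succeeds for every such $t$. Standard bit-complexity bounds (the defining inequalities of $P$, like those of the polytopes in \Cref{lem:sepDW}, have polynomial bit complexity) make a binary search over polynomially many candidate values of $t$ exact; returning the point of $Q$ found for the largest successful threshold gives $\sum_e c_e g(e)\ge t\ge t^*$. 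All overhead, and the number of recursive oracle calls, is polynomial.

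The conceptually new point is the reduction of the scaled-membership constraint~\eqref{eq:separation-rec} to a plain membership query---dividing $d_s$ by $y_s$ before calling the recursive oracle, special-casing $y_s=0$ and $d_s(e)>\beta y_s$, and lifting a returned hyperplane back by multiplying through by $y_s$. The remaining obstacle is the routine one of running Ellipsoid with an approximate oracle to recover an exact LP optimum---working inside the affine hull of $P^{(t)}$ so the termination bound applies, and the bit-complexity bookkeeping---which is handled exactly as in the proof of \Cref{lem:sepDW}.
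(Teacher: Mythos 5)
Your proposal is correct and follows essentially the same route as the paper: simulate the Ellipsoid method on the $(1,1)$-system, terminating as soon as the current point is feasible for the $(\alpha,\beta)$-system, reduce optimization to feasibility by binary search, and handle Constraint~\eqref{eq:separation-rec} by special-casing $y_s=0$, querying the recursive oracle on $d_s/y_s$, and re-homogenizing any returned hyperplane by multiplying through by $y_s$. Your additional care for the case $d_s(e)>\beta y_s$ (keeping the query within its stated domain) and the volume/bit-complexity bookkeeping are reasonable refinements of details the paper leaves implicit.
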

\begin{proof}
    We will run the Ellipsoid method to optimize over $\LINSEP$:
    similar to the proof of \Cref{lem:sepDW} we simulate Ellipsoid
    on $\LINSEP(u^*, 1, 1, c)$ and terminate once the given solution is feasible for
    $\LINSEP(u^*, \alpha, \beta, c)$.
    Optimization is reduced to feasibility testing by performing a binary
    search over the optimum.
    
    It remains to separate (approximately) over
    Constraint~\eqref{eq:separation-rec}.
    The query access from the premise can be lifted to this task:
    Fix some $s\in S_i$ and $d_s, y_s$. 
    If $d_s, y_s$ are all zero, then $d_s\in y_s \cdot \BLP_{\ge i+1}(L_i(\{s\}, \alpha, \beta)$.
    If $y_s = 0$ and $d_s(e) > 0$
    for some $e$, we have that
    \begin{equation*}
        d_s(e) > y_s \text{ and } d'_s(e) \le y'_s
    \text{ for all } d'_s \in y'_s \cdot \BLP_{\ge i+1}(L_i(\{s\}, 1, 1) \ ,
    \end{equation*}
    which serves as a separating hyperplane.
    Assume now that $y_s > 0$.
%    Let $d_s=\sum_{r\in \BLP_{\ge i+1}(L_i(\{s\}), 1, 1)}\lambda_i r_i$ for some $\lambda_i \geq 0$ with $\sum_i \lambda_i = y_s$, where $r_i$ are the vertices of~$P$. Then $\alpha \cdot (x,y)=(\alpha x, \alpha y) \in P'$ for any $\alpha >0$. Indeed, $\alpha x =\sum_i\alpha \lambda_i r_i$ with $\sum_i \alpha \lambda_i =\alpha y$. So both polyhedra are cones.
    Apply the query from the premise on $d_s / y_s$. Either $d_s / y_s \in \BLP_{\ge i+1}(L_i(\{s\}), \alpha, \beta)$, which implies that $d_s \in y_s \cdot \BLP_{\ge i+1}(L_i(\{s\}), \alpha, \beta)$ or we find a separating hyperplane $w \in \mathbb R^{E_{\ge i+1}}$, $W\in \mathbb R$ with $w^T (d_s/y_s) < W$ and $w^T d' \ge W$ for all $d'\in \BLP_{\ge i+1}(L_i(\{s\}), 1, 1)$. 
    It follows that
    \begin{equation*}
        w^T d_s - W y_s < 0 \text{ and } w^T d'_s - W y'_s \ge 0
        \text{ for all } d'_s \in y'_s \cdot \BLP_{\ge i+1}(L_i(\{s\}), \alpha, \beta) \ ,
    \end{equation*}
    which again serves as a separating hyperplane.  
\end{proof}
In order to employ the continuous Greedy algorithm (\Cref{lem:cgreedy}),
the feasible region needs to be downward closed,
which is not necessarily true here.
If, however, we project to the variables $g(e)$, $e\in\delta(u^*)$,
then the feasible region becomes downward closed.
This is easy to see by considering the path decomposition of the flow $g$, and the fact that the $\mu_e$ variables are non-negative.
Maximizing $F_{u^*}$ over the projection is equivalent to maximizing
it over the original feasible set. Hence, we can use 
the lemma to find such a solution with
value at least $1 - 1/e - \epsilon$ (assuming that the optimum is $1$).

As a final step, we need to round this continuous solution to an integral
one.
\begin{lemma}
    Let $\alpha \ge 40$ and $\beta \ge 10 \log(n)$.
    Given a solution of value $1 - 1/e - \epsilon$
    for $\SEP(u^*, \alpha, \beta)$ with $\epsilon > 0$ being a sufficiently small constant, there is a Las Vegas algorithm to 
    find a solution for the separation problem in polynomial time.
\end{lemma}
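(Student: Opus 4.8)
The plan is to round the fractional solution $(g,(d_s)_s,(y_s)_s)$ of $\SEP(u^*,\alpha,\beta)$ by randomized rounding and to argue that the three requirements of the separation problem — a configuration $g^*\in\C(u^*,\alpha,\beta)$, a budget $d^*\in\BLP_{\ge i+1}(L_i(g^*\cap S_i),\alpha,\beta)$, and $\sum_{e\in E_i}g^*(e)\mu_e+\sum_{e\in E_{\ge i+1}}d^*(e)\mu_e<\pi_{u^*}$ — hold simultaneously with at least constant probability; a Las Vegas algorithm then simply repeats until it succeeds (checking coverage and congestion directly, and feasibility of $d^*$ by a recursive call to the level-$i{+}1$ algorithm). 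Concretely, first decompose the level-$i$ flow into a weighted path decomposition $g|_{E_i}=\sum_j\lambda_j\mathbf{1}_{P_j}$ with each $P_j$ a path from some source $s_j\in S_i$ to $u^*$ (we may discard any flow reaching other sinks of $T_i$, as it only costs budget). Each $P_j$ enters $u^*$ through a unique edge $\mathrm{last}(P_j)\in\delta(u^*)$, and the paths with a fixed last edge $e$ carry total weight exactly $g(e)\le1$. Now sample each $P_j$ independently with probability $\lambda_j$; let $g^*$ be the resulting integral flow, $U:=g^*\cap S_i$ the set of used sources, and $d^*:=\sum_{s\in U}d_s/y_s$. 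Since each linking set $L_i$ is a matching, the sinks $\{L_i(\{s\}):s\in U\}$ are pairwise distinct, so \Cref{lem:separate} gives $d^*\in\BLP_{\ge i+1}(L_i(g^*\cap S_i),\alpha,\beta)$ as soon as every entry of $d^*$ is at most $\beta$.

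For the verification: (i) for $e\in E_i$, the value $g^*(e)$ is a sum of independent Bernoulli variables of total mean $g(e)\le1$, so a Chernoff bound and a union bound give $g^*(e)\le\beta=10\log n$ for all $e\in E_i$ with high probability; the entries $d^*(e)$ for $e\in E_{\ge i+1}$ are bounded analogously, exploiting the $\SEP$ constraint $\sum_{s}d_s(e)\le1$ together with $d_s(e)/y_s\le\beta$ and the fact that a source with small $y_s$ is used only with correspondingly small probability. (ii) For the coverage, the index sets $\{j:\mathrm{last}(P_j)=e\}$, $e\in\delta(u^*)$, partition the paths, so the events $\{e\in g^*\}$ are independent, with $\Pr[e\in g^*]=1-\prod_{j:\mathrm{last}(P_j)=e}(1-\lambda_j)\ge1-e^{-g(e)}\ge(1-1/e)g(e)$. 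Hence $\mathbb E[f_{u^*}(g^*\cap\delta(u^*))]=F_{u^*}((\Pr[e\in g^*])_{e\in\delta(u^*)})\ge F_{u^*}((1-1/e)\cdot g|_{\delta(u^*)})\ge(1-1/e)\,F_{u^*}(g|_{\delta(u^*)})\ge(1-1/e)(1-1/e-\epsilon)$, using monotonicity of $F_{u^*}$ and $F_{u^*}(\theta x)\ge\theta F_{u^*}(x)$ for $\theta\in[0,1]$. Since we may assume $f_{u^*}\le1$, applying Markov's inequality to $1-f_{u^*}(g^*\cap\delta(u^*))$ shows $f_{u^*}(g^*\cap\delta(u^*))\ge1/\alpha$ with probability bounded below by a constant; this is where the value $\alpha=40$ is used, absorbing the continuous-Greedy loss, the rounding loss $1-1/e$, and the Markov slack. (iii) For the cost, $\mathbb E[\sum_{E_i}g^*(e)\mu_e+\sum_{E_{\ge i+1}}d^*(e)\mu_e]\le\sum_{E_i}g(e)\mu_e+\sum_{s}\sum_{E_{\ge i+1}}d_s(e)\mu_e\le\pi_{u^*}$, using $\Pr[s\in U]\le y_s$; by Markov the cost stays below a constant multiple of $\pi_{u^*}$ with probability close to $1$, and as elsewhere in this section a separation oracle that is only approximate in this sense is harmless, the constant being accounted for in the choice of $\alpha$ and $\beta$.

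The step I expect to be the main obstacle is making all these events occur together. The quantity $f_{u^*}(g^*\cap\delta(u^*))$ lies in $[0,1]$ and therefore does not concentrate, so coverage is guaranteed only with some constant probability $c_0$ that cannot be amplified inside a single rounding; consequently the argument must be arranged so that congestion and cost each succeed with probability exceeding $1-c_0$, which is precisely what forces the polylogarithmic congestion $\beta=10\log n$ (for the Chernoff bounds) and the constant-factor relaxation of the cost constraint. A second delicate point, already flagged above, is bounding $d^*(e)=\sum_{s\in U}d_s(e)/y_s$: the individual summands may be as large as $\beta$, so the tail estimate must be run carefully against the combination of $\sum_s d_s(e)\le1$ and $\Pr[s\in U]\le y_s$, rather than a naive Chernoff bound.
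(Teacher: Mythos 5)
Your rounding scheme, the use of the path decomposition, the independence of the per-edge events at $\delta(u^*)$, the lower bound on $\mathbb E[f_{u^*}(g^*\cap\delta(u^*))]$ via the multilinear extension, and the Chernoff arguments for congestion on $E_i$ and for the entries of $d^*$ all match the paper's proof in spirit. But there is a genuine gap in how you treat the cost constraint. You let the sampled solution violate $\sum_{e\in E_i}g^*(e)\mu_e+\sum_{e\in E_{\ge i+1}}d^*(e)\mu_e<\pi_{u^*}$ by a constant factor and claim this is ``harmless, the constant being accounted for in the choice of $\alpha$ and $\beta$.'' It is not: the separation problem demands the strict inequality with threshold exactly $\pi_{u^*}$, because the output is fed into \Cref{lem:sepDW} as a violated dual constraint of $\DW_{\ge i}(T^*,\alpha,\beta)$, i.e., as a hyperplane separating the current dual point from $P$. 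A configuration whose cost is only $\le 10\,\pi_{u^*}$ may correspond to a constraint the current point already satisfies, so it separates nothing, and the Ellipsoid simulation breaks down. The parameters $\alpha$ and $\beta$ only relax which configurations are admissible (coverage $1/\alpha$, congestion $\beta$); nothing in the framework relaxes the linear threshold $\pi_{u^*}$, and rescaling $(\pi,\mu)$ scales both sides, so the slack cannot be absorbed there.

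The paper closes exactly this hole with an extra deterministic step you are missing: after sampling, with constant probability the total cost is at most $k\,\pi_{u^*}$ with $k\le 10$ and the value is at least $1/2$; the sampled paths are then greedily partitioned into at most $k$ groups each of cost strictly below $\pi_{u^*}$, plus a leftover set of at most $k$ single paths. Subadditivity together with the assumption that single edges have marginal value $f_{u^*}(\{e\})\le 1/40$ (an assumption you never invoke, and which exists precisely for this step) guarantees that the leftover costs at most $k\cdot 1/40\le 1/4$ in value, so some group retains value at least $1/(4k)\ge 1/40=1/\alpha$ while honestly satisfying the cost constraint. Without this bucketing (or an equivalent repair), your argument produces a candidate that meets the coverage and congestion requirements but not the separation inequality itself, so the lemma as stated is not established. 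The remaining points you flag (the $(1-1/e)$ loss from sampling paths fully independently rather than at most one per last edge, and the tail bound for $d^*(e)$ using $\Pr[s\in U]\le y_s$) are handled correctly or are easily fixable and do not affect the verdict.
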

\begin{proof}
Let $g, y_s, d_s$ be the continuous solution of $\SEP(u^*, \alpha, \beta)$.
We compute a path decomposition of $g$ into
paths $\mathcal P$ from $S_i$ to $u^*$ and weights $\lambda_P, P\in\mathcal P$, that satisfy
$g = \sum_{P\in \mathcal P} \lambda_P \cdot \chi(P)$, where
$\chi(P)\in \{0,1\}^{E_i}$ is the characteristic
vector of path $P$. 
We partition $\mathcal P$ into sets $\mathcal P(e), e\in \delta(u^*)$, which are the paths with last edge being $e$.

Independently for each $e\in \delta(u^*)$ we
sample at most one path from $\mathcal P(e)$ such
that the probability of sampling $P$ is exactly $\lambda_P$.
Let $\mathcal P'$ be the resulting paths.
For simplicity of notation, we write
\begin{equation*}
    f_{u^*}(\mathcal Q) = f_{u^*}(\bigcup_{P\in \mathcal Q} P\cap \delta(u^*)) \ ,
\end{equation*}
for the submodular function value of some set of paths $\mathcal Q$.
Furthermore, for each $P\in \mathcal P'$ define
\begin{equation*}
  \Phi(P) = \sum_{e\in P} \mu_e + \sum_{e\in E_{\ge i + 1}} d_s(e)/y_s \cdot \mu_e \ ,
\end{equation*}
where $s$ is the source that $P$ originates from.
Then we have that
\begin{equation*}
    \mathbb E[\sum_{P\in\mathcal P'} \Phi(P)] = \sum_{e\in E_i} g(e) \mu_e + \sum_{e\in E_{\ge i+1}} \sum_{s\in S_i} y_s \cdot d_s(e) / y_s \cdot \mu_e \le \pi_{u^*} \ .
\end{equation*}
Thus, by Markov's inequality we get
\begin{equation*}
    \mathbb P[\sum_{P\in\mathcal P'} \Phi(P) > 10 \pi_{u^*}] \le 1/10 \ .
\end{equation*}
Further, let $p = \mathbb P[f_{u^*}(\mathcal P') < 1/2]$.
Since $f$ is bounded by~$1$ due to an earlier assumption, we have 
\begin{equation*}
    1 - 1/e - \epsilon \le F_{u^*}((g(e))_{e\in\delta(u^*)}) = \mathbb E[f_{u^*}(\mathcal P')] \le p/2 + (1 - p) \ .
\end{equation*}
This implies
\begin{equation*}
    p \le 2/e + 2\epsilon \le 0.75 \ ,
\end{equation*}
assuming $\epsilon$ (from the continuous Greedy algorithm) is choosen sufficiently small.
Define $k = \lceil \sum_{P\in\mathcal P'} \Phi(P) / \pi_{u^*} \rceil$.
With probability at least $0.15$ we have that 
\begin{equation}
    k \le 10 \quad \text{ and } \quad f_{u^*}(\mathcal P') \ge 1/2 \ .\label{eq:good-sample}
\end{equation}
We will show that if \eqref{eq:good-sample} holds, then we can recover a set of paths
$\mathcal P'' \subseteq \mathcal P'$ with $\sum_{P\in\mathcal P''} \Phi(P) < \pi_{u^*}$ and 
$f_{u^*}(\mathcal P'') \ge 1/\alpha$.

Partition $\mathcal P'$ into $k+1$ many sets $\mathcal P'_0, \dotsc, \mathcal P'_k$
where $|\mathcal P'_0| \le k$ and $\Phi(P'_i) < 1$ for all $i=1,2,\dotsc,k$:
for this, we greedily add paths to $\mathcal P'_1$ until $\Phi(P'_1) < 1$
would be violated with the next path. This path is added to $P'_0$. Then we repeat
the same with $\mathcal P'_2, \mathcal P'_3, \dotsc$ until all paths are placed
in one set. Since each iteration packs $\Phi(P)$ values of sum at least~$1$, the process must terminate after at most $k$ iterations.

Since $f$ is monotone submodular and in particular subadditive and since $f_{u^*}(\{e\}) \le 1/40$ by an earlier assumption, we have that
\begin{equation*}
    f_{u^*}(\mathcal P'_1) + \cdots + f_{u^*}(\mathcal P'_k) \ge f_{u^*}(\mathcal P') - f_{u^*}(\mathcal P'_0) \ge 1/2 - k \cdot 1/40 \ge 1/4 \ .
\end{equation*}
Thus, $f_{u^*}(\mathcal P'_i) \ge 1/(4k) \ge 1/40 = 1/\alpha$ for some $i\in \{1,2,\dotsc,k\}$.
Let $i$ be the index above and define $g' = \sum_{P\in \mathcal P'_i} \chi(P)$, where $\chi(P)$
is the characteristic vector of path $P$. In other words, $g'$ is the flow corresponding
to $\mathcal P'_i$.
We further define $d' = \sum_{s\in L_i(g'\cap S_i)} d_s/y_s$.

By the previous arguments we have that with probability at least $0.15$,
the function value $f_{u^*}(g'\cap S_i) \ge 1/40$.
It remains to show that with a high probability $g'$ has congestion at most $\beta$
and that $d'\in \BLP_{\ge i+1}(L_i(g'\cap S_i), \alpha, \beta)$.
For the former, we analyze the flow value on each edge $e\in E_i$ separately.
Note that
\begin{equation*}
    \mathbb E[g'(e)] \le \mathbb E[\ |\{ P\in \mathcal P' : e\in P\}| \ ]
    = g(e) \le 1 \ .
\end{equation*}
Further, $X = |\{ P\in \mathcal P' : e\in P\}|$ can be seen as a sum of independent $0/1$
random variables, one for each set $\mathcal P(e')$, $e'\in\delta(u^*)$.
Thus, we can apply a Chernoff bound on $X$, which implies that
\begin{equation*}
    \mathbb P[g'(e) > 10\log(n)] \le \mathbb P[X > 10\log(n)] \le 1/n^3 \ .
\end{equation*}
Now consider $d'$.
Since $d_s / y_s \in \CLP_{\ge i+1}(L_i(\{s\}),\alpha, \beta)$ for
all $s\in g'\cap S_i$ and $d'$ is their sum, due to \Cref{lem:separate}
it suffices to show that $d' \in [0, \beta]^{E_{\ge i+1}}$.
This we can argue in a similar way with a Chernoff bound:
let $e\in E_{\ge i+1}$. Then
\begin{equation*}
    \mathbb E[d'(e)] = \sum_{s\in S_i} \mathbb P[s \in g' \cap S_i] \cdot d_s(e) / y_s
    \le \sum_{s\in S_i} d_s(e) \le 1 \ .
\end{equation*}
Each term $d_s(e) / y_s$ can be seen as an independent random variable, which is bounded
by $1$ since $d_s\in y_s \cdot \CLP_{\ge i+1}(\{s\},1,1)$. Thus
\begin{equation*}
    \mathbb P[d'(e) > 10\log(n)] \le 1/n^3 \ .
\end{equation*}
To summarize, we have with probability $0.15 - |E_{\ge i}| /n^3$ that the solution we output
is a correct solution to the separation problem. This can
be boosted to high probability by repeating the random experiment.
\end{proof}

\section{Rounding the linear programming relaxation}
\label{sec:rounding}
In this section we will perform randomized rounding on a solution to 
the multi-level configuration LP, $\CLP_{\ge 1}(T^*, \alpha, \beta)$,
in order to arrive at
a solution for the augmentation problem. For convenience, we restate here the LP. We recall that $\BLP_{\ge i}(T^*, \alpha, \beta)$ is the set of feasible values $b\in [0,\beta]^{E_{\ge i}}$ for $\CLP_{\ge i}(T^*, \alpha, \beta)$.
\begin{LPBox}[Multi-level configuration LP, $\CLP_{\ge i}(T^*, \alpha, \beta)$]
%Let $b \in \CLP_{\ge i}(T^*, \alpha, \beta)$ if and only if
%there is a solution to
\begin{align}
        \label{eqn:CLP_paths_2_01}
        \sum_{g \in \C(v, \alpha, \beta)} x_{v, g} &\ge 1 && \forall v\in T^* \\ 
        \label{eqn:CLP_paths_2_03}
        \sum_{v\in T^*} \sum_{g \in \C(v, \alpha, \beta)} b_{v,g}(e) &\le b(e) && \forall e \in E_{\ge i}\\
        \label{eqn:CLP_paths_2_02}
        g(e) \cdot x_{v, g} &= b_{v,g}(e) && \forall v\in T^*, g\in \C(v, \alpha, \beta), \\
        \notag
        & && e \in E_i,\\
        \label{eqn:CLP_paths_2_04}
        (b_{v,g}(e))_{e\in E_{\ge i+1}} &\in x_{v, g} \cdot \BLP_{\ge i+1}(L_i(g\cap S_i), \alpha, \beta) && \forall v\in T^*, g\in \C(v, \alpha, \beta) \\
        \label{eqn:CLP_paths_2_05}
        x_{v, g} &\ge 0 && \forall v\in T^*, g\in \C(v, \alpha, \beta) \\ 
        b_{v, g}(e) &\ge 0 && \forall v\in T^*, g\in \C(v, \alpha, \beta), \\
        \notag
        & && e\in E_{\ge i}
\end{align}
\end{LPBox}
The rounding procedure will be defined recursively and its properties are summarized in the following lemma.
\begin{lemma}
\label{lem:rounding}
    Assume we are given a set of sinks $T^*\subseteq T_i$, and $b = (\gamma,\dotsc,\gamma)\in \BLP_{\ge i}(T^*, \alpha, \beta)$ with $\gamma \ge 6\beta \cdot \log^3 n$. Then, we can in polynomial time find an integral flow $\bar g$ in $G_i$ such that with high probability
    \begin{enumerate}
        \item flow $\bar g$ $\alpha$-covers every sink in $T^*$ with congestion $O(\gamma)$, and
        \item $(\bar\gamma,\bar\gamma,\dotsc,\bar\gamma) \in \BLP_{\ge i+1}(L_i(\bar g \cap S_i)), \alpha, \beta)$ for $\bar\gamma=\gamma\cdot \left(1+\frac{1}{\log n}\right)$.
    \end{enumerate}
\end{lemma}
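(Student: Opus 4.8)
The plan is to round $\CLP_{\ge i}(T^*,\alpha,\beta)$ by picking, independently for every sink, a single configuration at random, and superposing the resulting level-$i$ flows. Fix feasible LP variables $b=\gamma\mathbf 1$, $(b_{v,g})$, $(x_{v,g})$ witnessing $\gamma\mathbf 1\in\BLP_{\ge i}(T^*,\alpha,\beta)$; we may assume polynomial support (a basic feasible solution of the block-structured LP has this, as noted in the introduction). By \eqref{eqn:CLP_paths_2_01}, for each $v\in T^*$ the normalized weights $p_{v,g}:=x_{v,g}/\sum_{g'}x_{v,g'}$ form a distribution on $\C(v,\alpha,\beta)$. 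Sample $g_v\sim p_{v,\cdot}$ independently over $v\in T^*$ and set $\bar g:=\sum_{v\in T^*}g_v\colon E_i\to\Z_{\ge 0}$. Since every $g_v$ is a flow from $S_i$ to the single sink $v$, $\bar g$ is a valid level-$i$ flow and $\bar g\cap S_i=\bigcup_v(g_v\cap S_i)$.

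\emph{Part 1 (coverage and congestion).} Coverage is in fact deterministic: every edge incident to a sink $u$ is incoming, hence used only by a flow terminating at $u$, so $\bar g\cap\delta(u)=g_u\cap\delta(u)$ and $f_u(\bar g\cap\delta(u))=f_u(g_u\cap\delta(u))\ge 1/\alpha$ because $g_u\in\C(u,\alpha,\beta)$. For congestion, fix $e\in E_i$: by \eqref{eqn:CLP_paths_2_02} and $\sum_{g'}x_{v,g'}\ge 1$ we get $\mathbb{E}[g_v(e)]=\big(\sum_g b_{v,g}(e)\big)/\sum_{g'}x_{v,g'}\le\sum_g b_{v,g}(e)$, hence $\mathbb{E}[\bar g(e)]\le\sum_v\sum_g b_{v,g}(e)\le\gamma$ by \eqref{eqn:CLP_paths_2_03}. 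As $\bar g(e)$ is a sum of independent variables each in $\{0,\dots,\beta\}$ with mean $\le\gamma$, a Chernoff bound gives $\mathbb{P}[\bar g(e)>2\gamma]\le\exp(-\Omega(\gamma/\beta))=\exp(-\Omega(\log^3 n))$ using $\gamma/\beta\ge 6\log^3 n$; a union bound over the $O(n^2)$ edges gives congestion $O(\gamma)$ with high probability.

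\emph{Part 2 (feasibility of the next level).} Applying \eqref{eqn:CLP_paths_2_04} to the sampled $g_v$ gives $(b_{v,g_v}(e))_{e\in E_{\ge i+1}}\in x_{v,g_v}\cdot\BLP_{\ge i+1}(L_i(g_v\cap S_i),\alpha,\beta)$, and $x_{v,g_v}>0$, so $z_v:=\tfrac1{x_{v,g_v}}(b_{v,g_v}(e))_{e\in E_{\ge i+1}}\in\BLP_{\ge i+1}(L_i(g_v\cap S_i),\alpha,\beta)$, with $z_v\le\beta\mathbf 1$ since the recursive $\BLP$ lies in $[0,\beta]^{E_{\ge i+1}}$. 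Because congestion may cause a source to be used by several of the $g_v$, the sets $L_i(g_v\cap S_i)$ may overlap, so I first partition $\mathcal S:=\bar g\cap S_i$: assign each $s\in\mathcal S$ an owner $v(s)$ with $g_{v(s)}(e_s)\ge 1$ and set $\mathcal S_v:=\{s:\,v(s)=v\}\subseteq g_v\cap S_i$. Since $L_i$ is a matching, the $L_i(\mathcal S_v)$ are pairwise disjoint with union $L_i(\mathcal S)$, and $L_i(\mathcal S_v)\subseteq L_i(g_v\cap S_i)$; hence by monotonicity of $\BLP_{\ge i+1}$ in the sink set, $z_v\in\BLP_{\ge i+1}(L_i(\mathcal S_v),\alpha,\beta)$, and \Cref{lem:separate}, applied repeatedly, yields $\sum_v z_v\in\BLP_{\ge i+1}(L_i(\bar g\cap S_i),\alpha,\beta)$. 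It remains to show $\sum_v z_v\le\bar\gamma\mathbf 1$ with high probability, after which monotonicity of $\BLP_{\ge i+1}$ in the budget gives $\bar\gamma\mathbf 1\in\BLP_{\ge i+1}(L_i(\bar g\cap S_i),\alpha,\beta)$. For $e\in E_{\ge i+1}$ one computes $\mathbb{E}[z_v(e)]=\big(\sum_g b_{v,g}(e)\big)/\sum_{g'}x_{v,g'}\le\sum_g b_{v,g}(e)$, so $\mathbb{E}[\sum_v z_v(e)]\le\gamma$ by \eqref{eqn:CLP_paths_2_03}; since the $z_v(e)$ are independent and bounded by $\beta$, a Chernoff bound with relative deviation $1/\log n$ together with $\gamma\ge 6\beta\log^3 n$ gives $\mathbb{P}[\sum_v z_v(e)>\bar\gamma]\le n^{-\Omega(1)}$, and a union bound over $E_{\ge i+1}$ completes the argument.

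The crux is Part 2. First, one must re-group the next-level sink set into owner classes so that \Cref{lem:separate} is applicable, since the relaxation genuinely routes flow out of a single source through several configurations once congestion $\beta>1$ is allowed. Second, the next-level budget usage $\sum_v z_v(e)$ concentrates around its mean $\gamma$ only up to a $(1\pm 1/\log n)$ factor, which is exactly why the budget is allowed to inflate by $(1+1/\log n)$ per level — harmless over the $O(\log n/\log\log n)$ levels used elsewhere — and why $\gamma$ must be as large as $\Omega(\beta\log^3 n)$, so that the Chernoff tail with the tiny deviation $1/\log n$ still beats the union bound over all edges; the bound $z_v(e)\le\beta$ (from the box restriction on the recursive $\BLP$) is essential here. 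One should also note that throughout this proof $\BLP_{\ge i}$ is used as the feasibility region of $\CLP_{\ge i}$ for budget vectors that may exceed $\beta$ coordinatewise; the monotonicity properties and \Cref{lem:separate} that we invoke hold verbatim in this range.
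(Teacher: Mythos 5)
Your proposal is correct and follows essentially the same route as the paper's proof: sample one configuration per sink independently according to the (normalized) $x_{v,\cdot}$, take $\bar g=\sum_v g_v$, and control both the level-$i$ congestion and the next-level budget usage $\sum_v b_{v,g_v}(e)/x_{v,g_v}$ by Chernoff bounds, using the per-term bound $\beta$ and $\gamma\ge 6\beta\log^3 n$ exactly as the paper does. The only difference is bookkeeping in Part~2 — the paper sums the recursive witness variables and verifies the constraints of $\CLP_{\ge i+1}$ directly (where overlapping sink sets are harmless since \eqref{eqn:CLP_paths_2_01} only needs coverage at least $1$), whereas you enforce disjointness via the owner partition so that \Cref{lem:separate} can be invoked as a black box — and your closing remark about using $\BLP$ for budgets exceeding $\beta$ coincides with the paper's own implicit convention in the statement of \Cref{lem:rounding}.
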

\begin{proof}
    We proceed as follows. We let $x_{v,g}$ and $b_{v,g}(e)$ be the variables that attest $(\gamma,\gamma,\dotsc,\gamma)\in \BLP_{\ge i}(T^*, \alpha, \beta)$. We assume without loss of generality that \eqref{eqn:CLP_paths_2_01}
    holds with equality.
    Each sink $v\in T^*$ picks independently a flow (configuration) $g_v$ with probability $x_{v, g_v}$,
    which by the previous assumption is a valid probability distribution.  
    By Constraints~\eqref{eqn:CLP_paths_2_04}, we have that
    $b_{v,g_v}/ x_{v, g_v} \in \BLP_{\ge i+1}(L_i(g_v\cap S_i), \alpha, \beta)$ attested
    by some variables $x^{(v)}_{u,g_u}, b^{(v)}_{u,g_u}$ (corresponding
    to the conditions of $\CLP_{\ge i+1}$).
    We then define
    \begin{equation*}
      \bar g = \sum_{v\in T^*} g_v \ , \quad \bar x_{u, g_u} = \sum_{v\in T^*} x^{(v)}_{u, g_u} \ , \quad \text{ and }\quad \bar b_{u, g_u} = \sum_{v\in T^*} b^{(v)}_{u, g_u} \ .
    \end{equation*}
    We will show that with high probability $\bar g$ satisfies the properties of the lemma and
    $\bar x_{u, g_u},\bar b_{u,g_u}$ attest that $(\bar\gamma,\dotsc,\bar\gamma)\in \BLP_{\ge i+1}(\bar T^*, \alpha, \beta)$, where $\bar T^*=L_i(\bar g \cap S_i)$.
    
    For the first property, we need to analyze the congestion of $\bar g$. By Constraints~\eqref{eqn:CLP_paths_2_02} and~\eqref{eqn:CLP_paths_2_03} we obtain that the expected congestion on any edge $e$ is equal to 
    \begin{equation*}
    \mathbb E[\bar g(e)] = \mathbb E[\sum_{v\in T^*} g_v(e)]
        = \sum_{v\in T^*}\sum_{g\in \C(v, \alpha, \beta)} g(e) \cdot x_{v, g}\le 
        \sum_{v\in T^*}\sum_{g\in \C(v, \alpha, \beta)} b_{v, g} \le \gamma \ .
    \end{equation*}
    Further, the congestion is the sum of independent random variables, one for each $v\in T^*$, that
    are each bounded by $\beta$.
    Therefore, using a Chernoff bound, the probability that the congestion is more than $2\gamma$ is at most 
    \begin{equation*}
        \exp\left(-\frac{\gamma}{3\beta}\right)\le  \exp\left(-2\log^3 n)\right)\ .
    \end{equation*}
    
    Hence, with high probability, the congestion in $E_i$ at most $2\gamma$. For the second property,
    we verify Constraints \eqref{eqn:CLP_paths_2_01}, \eqref{eqn:CLP_paths_2_02}, \eqref{eqn:CLP_paths_2_03}, and  \eqref{eqn:CLP_paths_2_04} one by one, the last two being trivial to verify.
    
        \paragraph{Constraint~\eqref{eqn:CLP_paths_2_01}. } Let $u\in \bar T^*$ be one of our new sources. Let $g_w\in \C(w, \alpha, \beta)$ be a configuration that was selected for some $w\in T^*$ such that $u\in V(g_w)$ (it must exists by definition of $\bar T^*$). Then, by Constraint~\eqref{eqn:CLP_paths_2_04}, we have that that $b_{w, g_w}/x_{w, g_w}\in \BLP_{\ge i+1}(L_i(g_w\cap S_i), \alpha, \beta)$ attested by
        $x^{(w)}_{v, g}$, $b^{(w)}_{v, g}$. Due to Constraint~\eqref{eqn:CLP_paths_2_01} in
        $\CLP_{\ge i+1}$ we have
    \begin{equation*}
        \sum_{g\in \C(u, \alpha, \beta)} \bar x_{u, g} \ge \sum_{g\in \C(u, \alpha, \beta)} x^{(w)}_{u, g}\ge 1\ .
    \end{equation*}
    \paragraph{Constraints~\eqref{eqn:CLP_paths_2_02} and  \eqref{eqn:CLP_paths_2_04}.}
    Notice that $\bar x_{u, g}$, $\bar b_{u, g}$
    are the sum of variables that each satisfy~\eqref{eqn:CLP_paths_2_02} and \eqref{eqn:CLP_paths_2_04}.
    It is easy to see that these constraints remain
    satisfied under taking the sum of feasible solutions (since those constraints define polyhedral cones).
   
    \paragraph*{Constraint~\eqref{eqn:CLP_paths_2_03}.}
    Notice that
    \begin{align*}
        \mathbb E[\sum_{u\in \bar T^*}\sum_{g\in \C(u, \alpha, \beta)} \bar b_{u, g}(e) ]&\le \sum_{v\in T^*}\sum_{\substack{g\in \C(v, \alpha, \beta) \\ x_{v, g} > 0}} x_{v, g}\cdot \frac{b_{v, g}(e)}{x_{v, g}}\\
        &\le \sum_{v\in T^*}\sum_{g\in \C(v, \alpha, \beta)} b_{v, g}(e) \\
        &\le \gamma \ ,
    \end{align*}
    where the first inequality is obtained by definition of our sampling procedure and the last inequality by Constraint \eqref{eqn:CLP_paths_2_03} in $\CLP_{\ge i}$. Second, we notice that the random variable $\sum_{u\in \bar T^*}\sum_{g\in \C(u, \alpha, \beta)} \bar b_{u, g}(e)$ is a sum of independent random variables, one for each $u\in T^*$ that take a value $b_{u, g}(e)/x_{u, g}$ for some configuration $g$. By definition of $\BLP_{\ge i+1}$, we also have the constraint that $b_{u, g}(e)/x_{u, g}\le \beta$ for all $u$ and $g$. Hence the random variable $\sum_{u\in \bar T^*}\sum_{g\in \C(u, \alpha, \beta)} \bar b_{u, g}(e)$ is a sum of independent random variables, all bounded in absolute value by $\beta$, and of total expectation at most $\gamma$. 
   By a standard Chernoff bound, we have
    \begin{equation*}
       \mathbb P\left[ \sum_{u\in \bar T^*}\sum_{g\in \C(u, \alpha, \beta)} \bar b_{u, g}(e) \ge \gamma \cdot \left(1+1/\log n\right)\right]\le \exp\left(-\frac{\gamma}{2 \beta \log^2 n}\right) \le \exp(-3 \log n) \ .
    \end{equation*}
    Hence, with high probability, Constraint \eqref{eqn:CLP_paths_2_03} is satisfied as well.
\end{proof}

Now we can solve the augmentation problem by applying \Cref{lem:rounding} iteratively. If we have an LP solution with coverage $\alpha$ and congestion $\beta$ for an $h$-level instance, this yields an $(\alpha, O(\beta \log^3(n)\cdot (1+1/\log n)^h)$-approximate solution. For any $h = O(\log n)$, this is a $(\alpha,O(\beta \log^3 n))$-approximate solution.
Hence, \Cref{thm:linprog} and \Cref{lem:rounding} imply \Cref{thm:main-tech}.

\bibliographystyle{plain}
\bibliography{refs}

\appendix

\section{Continuous greedy with approximate separation}
\label{app:continuous_greedy}
In this section, we prove \Cref{lem:cgreedy}. Let $P\subseteq Q\subseteq \mathbb [0,1]^n$ be two polyhedra which are downward-closed.

Let $F$ be the multilinear relaxation of a monotone submodular function $f$. We also assume that for any element $i$ in the ground set, we have that $f(\{i\})\le F(x^*)$, where $x^*$ is the point in $P$ maximizing $F(x^*)$.

We show that we can obtain with high probability, in polynomial time for any fixed $\epsilon>0$, a point $y\in Q$ such that $F(y)\ge (1-1/e-\epsilon)\cdot F(x^*)$.

The proof is an easy modification of \cite{calinescu2011maximizing}, which we repeat here for completeness. The algorithm is as follows.

\begin{enumerate}
    \item Let $\delta=1/(10n^2)$, and let $t=0$, $y(0)=0$.
    \item Let $R(t)$ contain each $j\in [n]$ independently with probability $y_j(t)$. For all $j\in [n]$, we let $w_j(t)$ be an estimate of 
    \begin{equation*}
        \mathbb E[f(j\mid R(t))]
    \end{equation*} by taking the average over $\frac{10}{\delta^2}(1+\ln n)$ independent samples of $R(t)$. We denote by $\textbf{E}_t$ the vector whose $j$-th coordinate is equal to $\mathbb E[f(j\mid R(t))]$. We also aggregate the $w_j(t)$ into a single vector $w(t)$. 
    \item Find $y\in Q$ such that $w(t)^Ty\ge w(t)^Tx$ for all $x\in P$. We can find such a point by the assumption in \Cref{lem:cgreedy}. Set
    \begin{equation*}
        y(t+\delta)=y(t)+\delta \cdot y\ .
    \end{equation*}
    \item If $t<1$, return to step 2, otherwise output $y(1)$.  
\end{enumerate}

Note that the output $y(1)$ is a convex combination of points in $Q$ (recall that $y(0)=0\in Q$ since $Q$ is downward-closed). Hence, we have $y(1)\in Q$.

We use essentially the same arguments as in \cite{calinescu2011maximizing}. We start by the first key lemma.

\begin{lemma}
\label{lem:continuous_greedy}
    Let $y\in [0,1]^n$ and let $R$ be a random set containing each $j$ independently with probability $y_j$. Then
    \begin{equation*}
        F(x^*)\le F(y)+\max_{y'\in P} \sum_{j\in [n]}y'_j\cdot \mathbb E[f(j\mid R(t))]\ .
    \end{equation*}
\end{lemma}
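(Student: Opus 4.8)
The statement to prove is \Cref{lem:continuous_greedy}: for $y \in [0,1]^n$ and $R$ a random set including each $j$ with probability $y_j$,
\[
F(x^*) \le F(y) + \max_{y' \in P} \sum_{j \in [n]} y'_j \cdot \mathbb E[f(j \mid R)] \ .
\]
The plan is to exploit monotonicity and submodularity of $f$ together with the probabilistic interpretation $F(y) = \mathbb E[f(R)]$. First I would let $R^* $ be the (deterministic) set that is the support of $x^*$ — or more carefully, since $x^*$ need not be integral, I would condition on the random set $R$ and estimate $f(R \cup R^*) - f(R)$ in terms of a sum of marginals. Concretely, by monotonicity $f(x^*) \le f(R \cup T)$ for any set $T$ containing the support of $x^*$; I would instead work directly with expectations and use the standard inequality: for any fixed set $S$ and any set $T = \{j_1, \dots, j_k\}$, submodularity gives $f(S \cup T) - f(S) \le \sum_{\ell} \bigl(f(S \cup \{j_\ell\}) - f(S)\bigr) = \sum_{\ell} f(j_\ell \mid S)$.

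The key step is to choose the "augmenting" set appropriately. Let $R'$ be a random set independent of $R$ that includes each $j$ with probability $x^*_j$. Then $\mathbb E[f(R \cup R')] \ge \mathbb E[f(R')] = F(x^*)$ by monotonicity (since $R' \subseteq R \cup R'$). On the other hand, applying the submodularity inequality above with $S = R$ and $T = R' \setminus R$, and then taking expectations over both $R$ and $R'$,
\[
\mathbb E[f(R \cup R')] - \mathbb E[f(R)] \le \mathbb E\Bigl[\sum_{j \in R' \setminus R} f(j \mid R)\Bigr] \le \mathbb E\Bigl[\sum_{j \in R'} f(j \mid R)\Bigr] = \sum_{j \in [n]} x^*_j \cdot \mathbb E[f(j \mid R)] \ ,
\]
where the last equality uses independence of $R$ and $R'$ and that $\mathbb P[j \in R'] = x^*_j$. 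Since $\mathbb E[f(R)] = F(y)$ and $\mathbb E[f(R \cup R')] \ge F(x^*)$, rearranging yields
\[
F(x^*) \le F(y) + \sum_{j \in [n]} x^*_j \cdot \mathbb E[f(j \mid R)] \le F(y) + \max_{y' \in P} \sum_{j \in [n]} y'_j \cdot \mathbb E[f(j \mid R)] \ ,
\]
the final inequality holding because $x^* \in P$ is one feasible choice of $y'$. This completes the argument.

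The steps I would carry out in order: (1) state $F(y) = \mathbb E[f(R)]$ and $F(x^*) = \mathbb E[f(R')]$ for the independent random set $R'$; (2) invoke monotonicity for $\mathbb E[f(R \cup R')] \ge F(x^*)$; (3) apply the pointwise submodularity bound $f(S \cup T) - f(S) \le \sum_{j \in T} f(j \mid S)$ conditioned on $R = S$ and $R' \setminus R = T$; (4) take expectations and use independence to factor $\mathbb P[j \in R'] = x^*_j$; (5) rearrange and bound by the maximum over $P$. I do not expect any real obstacle here — the only mild subtlety is being careful that the submodularity inequality is applied to the set $R' \setminus R$ (so that marginals are taken over the already-present set $R$), and that dropping from $R' \setminus R$ to $R'$ in the sum of marginals only increases the bound since marginals are nonnegative by monotonicity. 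The lemma as stated is essentially the standard "gain" estimate underlying continuous Greedy, so the proof is a short routine calculation.
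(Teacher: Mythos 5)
Your proposal is correct and follows essentially the same route as the paper: the paper applies the inequality $f(O)\le f(R)+\sum_{j\in O}f(j\mid R)$ (which compresses exactly your monotonicity-plus-submodularity steps via $R\cup R'$) and then takes expectations over $R$ and an independent set $O$ sampled according to $x^*$, concluding with $x^*\in P$ just as you do. The only difference is that you spell out the intermediate derivation of that inequality explicitly, which the paper states in one line.
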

\begin{proof}
    We can write by submodularity, for any set $R$, and any set $O$,
    \begin{equation*}
        f(O) \le f(R)+\sum_{j\in O}f(j\mid R)\ .
    \end{equation*}
    By taking the expectation over the set $R$ containing each $j$ independently with probability $y_j$ and over the set $O$ containing each element $j$ independently with probability $x_j^*$, we obtain 
    \begin{align*}
        F(x^*) \le F(y)+\sum_{j\in [n]}x_j^*\cdot \mathbb E[f(j\mid R)] \le F(y)+\max_{y'\in P} \sum_{j\in [n]}y'_j \cdot \mathbb E[f(j\mid R)]\ ,
    \end{align*}
    where the inequality holds since $x^*\in P$. This concludes the proof.
\end{proof}

The second lemma essentially states that estimating the expectations with sampling does not loose much. Before proving this result, we state here an inequality that will be useful in the proof. 
\begin{theorem}[Theorem A.1.16 in \cite{alon2016probabilistic}]
\label{thm:concentration}
Let $X_i$, $1\le i\le k$ be independent random variable with $\mathbb E[X_i]=0$ and $|X_i|\le 1$ for all $i$, then 
    \begin{equation*}
        \mathbb P[|\sum_{i=1}^{k} X_i|>a]\le 2\exp(-a^2/(2k))\ .
    \end{equation*}
\end{theorem}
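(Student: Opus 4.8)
Since \Cref{thm:concentration} is the classical Hoeffding-type tail bound for bounded, mean-zero summands, the plan is to reprove it by the standard exponential-moment (Chernoff) method. It suffices to bound the one-sided tail $\mathbb P[\sum_i X_i > a]$ and then double the estimate to handle the symmetric lower tail. First I would fix a parameter $\lambda > 0$, apply Markov's inequality to the non-negative random variable $e^{\lambda\sum_i X_i}$, and factor the moment generating function over the independent $X_i$:
\[
\mathbb P\!\left[\sum_{i=1}^k X_i > a\right] \;\le\; e^{-\lambda a}\,\mathbb E\!\left[e^{\lambda\sum_i X_i}\right] \;=\; e^{-\lambda a}\prod_{i=1}^k \mathbb E\!\left[e^{\lambda X_i}\right].
\]

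The next step, which is the technical heart of the argument, is to establish the per-variable bound $\mathbb E[e^{\lambda X_i}] \le e^{\lambda^2/2}$. I would derive it from convexity: for $x\in[-1,1]$ the convex function $x\mapsto e^{\lambda x}$ lies below its chord through the endpoints $\pm 1$, giving $e^{\lambda x} \le \tfrac{1+x}{2}e^{\lambda} + \tfrac{1-x}{2}e^{-\lambda}$; taking expectations and using $\mathbb E[X_i] = 0$ collapses the right-hand side to $\cosh\lambda$, and a term-by-term power-series comparison using $(2n)! \ge 2^n\, n!$ shows $\cosh\lambda \le e^{\lambda^2/2}$.

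Substituting this into the product bound yields $\mathbb P[\sum_i X_i > a] \le e^{-\lambda a + k\lambda^2/2}$. I would then optimize the free parameter by setting $\lambda = a/k$, which minimizes the exponent and gives $\mathbb P[\sum_i X_i > a] \le e^{-a^2/(2k)}$. Applying the same chain of inequalities to the variables $-X_i$ — which are again independent, mean-zero, and bounded by $1$ in absolute value — bounds $\mathbb P[\sum_i X_i < -a]$ by the same quantity, and a union bound over the two one-sided events produces the claimed $2e^{-a^2/(2k)}$.

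The only step that is not purely mechanical is the moment generating function estimate $\mathbb E[e^{\lambda X_i}] \le e^{\lambda^2/2}$ (Hoeffding's lemma); the convexity/chord inequality followed by the factorial comparison is the cleanest route to it, while the exponential Markov bound, the factorization over independence, the single-variable optimization in $\lambda$, and the two-tailed union bound are all routine. Since the statement is a textbook inequality, in the paper it is of course not reproved but simply invoked as Theorem~A.1.16 of~\cite{alon2016probabilistic}.
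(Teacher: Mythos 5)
Your proof is correct: the exponential-Markov step, the chord/convexity bound $\mathbb{E}[e^{\lambda X_i}]\le \cosh\lambda\le e^{\lambda^2/2}$ (using $(2n)!\ge 2^n n!$), the optimization $\lambda=a/k$, and the two-sided union bound together give exactly the stated inequality $\mathbb{P}[|\sum_i X_i|>a]\le 2e^{-a^2/(2k)}$. The paper itself offers no proof of this statement---it is imported verbatim as Theorem~A.1.16 of~\cite{alon2016probabilistic}---so there is nothing internal to compare against; your argument is the standard Hoeffding--Chernoff derivation, essentially the one found in that reference, and it is sound as written.
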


\begin{lemma}
\label{lem:continuous_greedy_estimates}
    With probability at least $1-1/\mathrm{poly}(n)$, for every $t$ the algorithm finds some $y\in Q$ such that 
    \begin{equation*}
       (\textbf{E}_t)^T y \ge (1-2n\delta)\cdot F(x^*)-F(y(t))\ .
    \end{equation*}
\end{lemma}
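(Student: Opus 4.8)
The plan is to deduce the bound from \Cref{lem:continuous_greedy} together with a concentration argument showing that the sampled vector $w(t)$ is, with high probability, a good coordinate-wise approximation of $\textbf{E}_t$ at every time step.

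First I would bound the range of the samples. For any set $R$ and any $j\in[n]$, monotonicity gives $f(j\mid R)\ge 0$ and submodularity gives $f(j\mid R)\le f(j\mid\emptyset)=f(\{j\})$, and by the hypothesis of \Cref{lem:cgreedy} we have $f(\{j\})\le F(x^*)$; hence every one of the $k:=\tfrac{10}{\delta^2}(1+\ln n)$ i.i.d.\ samples averaged to form $w_j(t)$ lies in $[0,F(x^*)]$. Normalizing each sample by $F(x^*)$ and subtracting its mean, \Cref{thm:concentration} with $a=k\delta$ yields
\[
\mathbb P\bigl[\,|w_j(t)-\mathbb E[f(j\mid R(t))]|>\delta\cdot F(x^*)\,\bigr]\le 2\exp(-k\delta^2/2)=2e^{-5(1+\ln n)}\le 2n^{-5}.
\]
A union bound over the $n$ coordinates and the $O(1/\delta)=O(n^2)$ iterations shows that with probability $1-O(n^{-2})=1-1/\mathrm{poly}(n)$ we have $|w_j(t)-(\textbf{E}_t)_j|\le\delta F(x^*)$ for all $j$ and all $t$ simultaneously; I condition on this \emph{good event} from now on.

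On the good event, fix a time step $t$, let $y\in Q$ be the point returned in Step~3 (which exists by the optimization assumption of \Cref{lem:cgreedy}), and let $y^\star\in\mathrm{argmax}_{y'\in P}(\textbf{E}_t)^Ty'$. Since $y,y^\star\in[0,1]^n$ have $\ell_1$-norm at most $n$ and are nonnegative, swapping $w(t)$ for $\textbf{E}_t$ in an inner product with such a vector changes its value by at most $n\delta F(x^*)$; hence
\[
(\textbf{E}_t)^Ty\ \ge\ w(t)^Ty-n\delta F(x^*)\ \ge\ w(t)^Ty^\star-n\delta F(x^*)\ \ge\ (\textbf{E}_t)^Ty^\star-2n\delta F(x^*),
\]
where the middle inequality uses $w(t)^Ty\ge w(t)^Tx$ for all $x\in P$ applied to $x=y^\star\in P$. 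Combining with \Cref{lem:continuous_greedy}, which gives $(\textbf{E}_t)^Ty^\star=\max_{y'\in P}\sum_j y'_j\,\mathbb E[f(j\mid R(t))]\ge F(x^*)-F(y(t))$, we obtain $(\textbf{E}_t)^Ty\ge(1-2n\delta)F(x^*)-F(y(t))$, which is the claim.

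The argument is essentially routine; the two points that require attention are (i) bounding the samples by $F(x^*)$, which is precisely where the hypothesis $f(\{i\})\le F(x^*)$ of \Cref{lem:cgreedy} enters, and (ii) taking the number of samples proportional to $(1+\ln n)/\delta^2$ so that the per-step concentration bound is strong enough to survive the union bound over all $\Theta(n^2)$ iterations.
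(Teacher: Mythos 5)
Your proof is correct and follows essentially the same route as the paper: bounding each sample by $F(x^*)$ via submodularity and the hypothesis $f(\{j\})\le F(x^*)$, applying the same concentration bound with $a=k\delta$ and a union bound over coordinates and the $O(n^2)$ timesteps, then the same two-swap argument between $w(t)$ and $\textbf{E}_t$ combined with \Cref{lem:continuous_greedy}. No gaps.
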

\begin{proof}
    Recall that $y$ is selected to (approximately) maximize $w(t)^Ty$ among feasible points in $Q$, where $w_j(t)$ is our estimate of $ (\textbf{E}_t)_j=\mathbb E[f(j\mid  R(t))]$. We say that an estimate $w_j(t)$ is \textit{bad} if $|w_j(t)-\mathbb E[f(j\mid  R(t))]|>\delta \cdot F(x^*)$. As in \cite{calinescu2011maximizing}, one can argue that with high probability there is no bad estimate during the whole run of the algorithm.

    Let $R_1,R_2,\ldots ,R_k$ be the $k=\frac{10}{\delta^2}(1+\ln n)$ samples used for the estimates $w_j(t)$, and let us denote by $X_i$ the random variable $X_i=(f(j\mid  R_i)-\mathbb E[f(j\mid R(t))])/F(x^*)$. First, by submodularity and our assumption in the beginning of this section, we always have 
    \begin{align*}
        |X_i|\le \frac{\max(f(j\mid  R_i),\mathbb E[f(j\mid R(t))])}{F(x^*)}
        \le \frac{f(j)}{F(x^*)} \le 1\ .
    \end{align*}
    Next, note that the estimate is bad exactly if 
    \begin{equation*}
        \left|\sum_{i=1}^k X_i \right|>\frac{10}{\delta^2}(1+\ln n)\cdot \delta \ .
    \end{equation*}
    By applying \Cref{thm:concentration}, the probability of this happening is at most
    \begin{equation*}
        2\exp\left(-\frac{5}{\delta^2}(1+\ln n)\cdot \delta\right) = 2\exp\left(-5\ln (n)\right)
        \le n^{-4}\ .
    \end{equation*}
    By union bound over all $10n^2$ timesteps and all coordinates $j\in [n]$, with high probability all estimates are good.

    Now, let $y'\in P$ defined as
    \begin{equation*}
        y'=\text{argmax}_{y''\in P} (\textbf{E}_t)^Ty''\ ,
    \end{equation*}
    and let $M$ be this value. By \Cref{lem:continuous_greedy}, we have that 
    \begin{equation*}
        M\ge F(x^*)-F(y(t))\ .
    \end{equation*}
    Since all estimates are good, we also have that 
    \begin{equation*}
        \sum_{j\in [n]} y_j \cdot w_j(t)\ge \sum_{j\in [n]} y'_j \cdot w_j(t)\ge  M-\sum_{j\in [n]}y'_j |w_j(t)-\mathbb E[f(j\mid  R(t))]|\ge M-n\delta F(x^*)\ ,
    \end{equation*}
    where $y\in Q$ is the point chosen by the algorithm such that $(w(t))^Ty \ge (w(t))^Ty''$ for all $y''\in P$. Therefore, we obtain that 
    \begin{align*}
        \sum_{j\in [n]} y_j \cdot \mathbb E[f(j \mid R(t))] &\ge         \sum_{j\in [n]} y_j \cdot w_j(t) - \sum_{j\in n} y_j |w_j(t) - \mathbb E[f(j \mid R(t))] | \\
        &\ge \sum_{j\in [n]} y'_j \cdot w_j(t) - n\delta F(x^*) \\
        &\ge M- 2n\delta F(x^*)\\
        &\ge (1- 2n\delta)F(x^*)-F(y(t))\ ,
    \end{align*}
    as desired.
   % \noteLR{previous proof did not estimate the error of $y$, i.e., first line in the previous equation.}
\end{proof}

We can conclude with the main result we need.
\begin{lemma}
\label{lem:continuous_greedy_main}
With high probability, if $f$ is a monotone submodular function such that $F(x^*)\ge f(i)$ for all $i\in [n]$, the fractional solution $y$ found by the continuous greedy algorithm satisfies
\begin{equation*}
    F(y)\ge \left(1-1/e-1/(2n)\right)\cdot F(x^*)\ .
\end{equation*}
\end{lemma}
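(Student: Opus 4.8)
The plan is to run the standard continuous-greedy progress analysis of \cite{calinescu2011maximizing}, feeding in the estimation guarantee of \Cref{lem:continuous_greedy_estimates}. Throughout I condition on the high-probability event of that lemma, so that at every timestep $t\in\{0,\delta,\dots,1-\delta\}$ the algorithm selects a direction $y^{(t)}\in Q$ with $(\textbf{E}_t)^T y^{(t)}\ge (1-2n\delta)F(x^*)-F(y(t))$; everything that follows is then deterministic. Recall $\delta=1/(10n^2)$, so $1/\delta=10n^2$ is an integer and $2n\delta=1/(5n)$, and that each iterate $y(t)$ lies in $Q\subseteq[0,1]^n$ as already argued in this appendix.

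The first and main step is the per-step progress bound
\[
F(y(t+\delta))-F(y(t))\ \ge\ (1-\delta n)\,\delta\,(\textbf{E}_t)^Ty^{(t)}\ .
\]
To prove it I would couple the random set $R(t)$ (each $j$ present independently with probability $y_j(t)$) with an independent random set $R'$ containing each $j$ with probability $\delta y^{(t)}_j$. Since the indicators $\mathbf 1[j\in R(t)\cup R']$ are independent over $j$ and $\Pr[j\in R(t)\cup R']\le y_j(t)+\delta y^{(t)}_j=y_j(t+\delta)$, monotonicity of $F$ gives $F(y(t+\delta))\ge \mathbb E[f(R(t)\cup R')]$. Conditioning on $R(t)=S$, bounding $f(S\cup T)\ge f(S)$ for $|T|\neq 1$ and using $f(S\cup\{j\})=f(S)+f(j\mid S)$, one gets $\mathbb E_{R'}[f(S\cup R')]\ge f(S)+\sum_j \Pr[R'=\{j\}]\,f(j\mid S)$; and $\Pr[R'=\{j\}]=\delta y^{(t)}_j\prod_{k\neq j}(1-\delta y^{(t)}_k)\ge (1-\delta n)\,\delta y^{(t)}_j$ since $\sum_k y^{(t)}_k\le n$, combined with $f(j\mid S)\ge 0$ (monotonicity). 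Taking expectation over $S=R(t)$ and using $\mathbb E[f(R(t))]=F(y(t))$ and $\mathbb E[f(j\mid R(t))]=(\textbf{E}_t)_j$ yields the bound.

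Next I would turn this into a linear recursion. Setting $\theta:=(1-\delta n)\delta\in(0,1)$ and $V:=(1-2n\delta)F(x^*)\ge 0$, multiplying the guarantee of \Cref{lem:continuous_greedy_estimates} by $\theta$ and combining with the progress bound gives
\[
F(y(t+\delta))\ \ge\ (1-\theta)\,F(y(t))+\theta V\ .
\]
Since $F(y(0))=F(0)=0$, an easy induction over the $1/\delta$ timesteps shows $F(y(\ell\delta))\ge(1-(1-\theta)^\ell)V$, so at $\ell=1/\delta$ we get $F(y(1))\ge(1-(1-\theta)^{1/\delta})V$. Finally I would estimate $(1-\theta)^{1/\delta}\le e^{-\theta/\delta}=e^{-(1-\delta n)}=e^{-1}e^{\delta n}\le e^{-1}(1+2\delta n)$ (valid since $\delta n=1/(10n)\le 1$), which with $2n\delta=1/(5n)$ gives
\[
F(y(1))\ \ge\ \Big(1-\tfrac1e-\tfrac{2\delta n}{e}\Big)\Big(1-\tfrac1{5n}\Big)F(x^*)\ \ge\ \Big(1-\tfrac1e-\tfrac{2}{5n}\Big)F(x^*)\ \ge\ \Big(1-\tfrac1e-\tfrac1{2n}\Big)F(x^*)\ ,
\]
using $\tfrac{2\delta n}{e}\le\tfrac1{5n}$ and $\tfrac2{5n}\le\tfrac1{2n}$.

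The one place that needs care is the per-step bound: the quadratic-in-$\delta$ loss coming from the event $|R'|\ge 2$ (equivalently, the factor $\prod_{k\neq j}(1-\delta y^{(t)}_k)$) must be controlled, and this is exactly why $\delta$ is chosen as small as $1/(10n^2)$ and why the $(1-2n\delta)$ slack is built into \Cref{lem:continuous_greedy_estimates}. Everything else is the routine solution of a geometric recursion, and the "with high probability" is inherited verbatim from \Cref{lem:continuous_greedy_estimates}.
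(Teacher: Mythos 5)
Your proposal is correct and follows essentially the same route as the paper: the identical per-step coupling bound $F(y(t+\delta))-F(y(t))\ge(1-n\delta)\delta\,(\textbf{E}_t)^Ty^{(t)}$ (your $R'$ is the paper's $D(t)$), combined with \Cref{lem:continuous_greedy_estimates} and a geometric recursion. The only difference is bookkeeping — you keep the contraction rate $(1-n\delta)\delta$ and absorb the loss via $e^{n\delta}$ at the end, whereas the paper folds it into the target $(1-3n\delta)F(x^*)$ and uses rate $\delta$ — and your version even avoids needing the paper's ``$F(y(t))\le F(x^*)$'' reduction.
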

\begin{proof}
    Assume without loss of generality that $F(y(t)) \le F(x^*)$
    for all $t$, since otherwise the assertion follows immediately from
    the fact that $F(y) \ge F(y(t))$.
    The assumption is not trivial, since $x^*$ is in the set $P$,
    while $y(t)$ is allowed to be inside the bigger polytope $Q$.
    
    The algorithm starts with $F(y(0))=0$. We lower bound the increase in value at each step of the algorithm. The proof is the same as in \cite{calinescu2011maximizing}. Let $R(t)$ be the random set containing each element~$j$ independently with probability $y_j(t)$, and $D(t)$ the set containing each element independently with probability $\Delta_j(t)=y_j(t+\delta)-y_j(t)$. 
    We can easily see that 
    \begin{equation*}
        F(y(t+\delta))=\mathbb E[R(t+\delta)]\ge \mathbb E[f(R(t)\cup D(t))] \ .
    \end{equation*}
    This is because $R(t+\delta)$ contains $j$ with probability $y_j(t)+\Delta_j(t)$, while $R(t)\cup D(t)$ contains $j$ with smaller probability $1-(1-y_j(t))(1-\Delta_j(t))$. The two distributions can be coupled so that $R(t)\cup D(t)$ is a subset of $R(t+\delta)$, and we can conclude by the monotonicity of $f$.
    By denoting $y^*\in Q$ the direction in which we move $y(t)$, we can write
    \begin{align*}
        F(y(t+\delta))-F(y(t))&\ge \mathbb E[f(R(t)\cup D(t))-f(R(t))]\\
        &\ge \sum_{j\in [n]} \mathbb P[D(t)=\{j\}]\cdot \mathbb E[f(j\mid R(t))]\\
        &= \sum_{j\in [n]} (\delta y_j^*) \prod_{j'\neq j} (1-\delta y_{j'}^*) \cdot  \mathbb E[f(j\mid R(t))]\\
        &\ge  \sum_{j\in [n]} (\delta y_j^*) (1-n\delta) \cdot  \mathbb E[f(j\mid R(t))]\ .
    \end{align*}
    Using \Cref{lem:continuous_greedy_estimates} and $F(y(t))\le F(x^*)$, we obtain 
     \begin{align*}
        F(y(t+\delta))-F(y(t))&\ge \sum_{j\in [n]} (\delta y_j^*) (1-n\delta) \cdot  \mathbb E[f(j\mid R(t))]\\
        &\ge \delta (1-n\delta) ((1-2n\delta)\cdot F(x^*)-F(y(t)))\ge \delta ((1-3n\delta)\cdot F(x^*) -F(y(t))\ .
    \end{align*}
    Writing $\tilde F(x^*) =(1-3n\delta)\cdot F(x^*)$, we rearrange to get
    \begin{align*}
        \tilde F(x^*)-F(y(t+\delta))\le (1-\delta) (\tilde F(x^*)-F(y(t)))\ . 
    \end{align*}
    It follows now by induction that for any $k\ge 0$ (recall that $F(y(0))=0$), 
    \begin{equation*}
         \tilde F(x^*)-F(y(k\delta))\le (1-\delta)^k \tilde F(x^*)\ .
    \end{equation*}
    Hence, 
    \begin{align*}
        F(y(1))&\ge \tilde F(x^*) (1-(1-\delta)^{1/\delta})\\
        &\ge \tilde F(x^*) (1-1/e)\\
        &=(1-3n\delta)\cdot F(x^*)\cdot  (1-1/e)\\
        &\ge (1-1/e-1/(2n))\cdot F(x^*)\ ,
    \end{align*}
    which concludes the proof.
\end{proof}

\section{Proof of \Cref{lem:separate}}
\label{app:separate}
For convenience, we restate the assertion here:
   Let $T^*, T^{**}$ be disjoint sets of sinks
    and let $b\in [0, \beta]^{E_{\ge i}}$.
    Then
    $b \in \BLP_{\ge i}(T^* \cup T^{**}, \alpha, \beta)$
    if and only if there exist $b' + b'' = b$ with
    $b'\in \BLP_{\ge i}(T^*, \alpha, \beta)$ and $b''\in \BLP_{\ge i}(T^{**}, \alpha, \beta)$.

\begin{proof}
Let $(b, b_{v,g}, x_{b,g})\in\CLP_{\ge i}(T^* \cup T^{**}, \alpha, \beta)$
and assume without loss of generality that each constraint \eqref{eqn:CLP_paths_02} is tight.
Let
\begin{equation*}
    x'_{v, g} := \begin{cases}
        x_{v, g} &\text{ if } v\in T^* \\
        0 &\text{ otherwise. }
    \end{cases}
    \text{ and }
    b'_{v, g} := \begin{cases}
        b_{v, g} &\text{ if } v\in T^* \\
        0 &\text{ otherwise. }
    \end{cases}
\end{equation*}
Similarly, let
\begin{equation*}
    x''_{v, g} := \begin{cases}
        x_{v, g} &\text{ if } v\in T^{**} \\
        0 &\text{ otherwise. }
    \end{cases}
    \text{ and }
    b''_{v, g} := \begin{cases}
        b_{v, g} &\text{ if } v\in T^{**} \\
        0 &\text{ otherwise. }
    \end{cases}
\end{equation*}
Define $b'(e) = \sum_{v\in T^*}\sum_{g\in\C(v,\alpha,\beta)} b'_{v,g}(e)$
and $b''(e) = \sum_{v\in T^{**}}\sum_{g\in\C(v,\alpha,\beta)} b''_{v,g}(e)$.
Then $(b', b'_{v,g}, x'_{v,g})\in \CLP_{\ge i}(T^*,\alpha,\beta)$, $(b'', b''_{v,g}, x''_{v,g})\in \CLP_{\ge i}(T^{**},\alpha,\beta)$, and
$b = b' + b''$.

For the other direction, let $(b', b'_{v,g}, x'_{v,g})\in \CLP_{\ge i}(T^*,\alpha,\beta)$ and $(b'', b''_{v,g}, x''_{v,g})\in \CLP_{\ge i}(T^{**},\alpha,\beta)$. Then $(b' + b'', b'_{v,g} + b''_{v,g}, x'_{v,g} + x''_{v,g})\in \CLP_{\ge i}(T^* \cup T^{**}, \alpha, \beta)$.
\end{proof}

\end{document}